\documentclass[letterpaper,onecolumn]{IEEEtran}

\usepackage{graphicx} 

\usepackage{hyperref}

\usepackage[utf8]{inputenc} 
\usepackage[T1]{fontenc}
\usepackage{url}
\usepackage{ifthen}
\usepackage{cite}
\usepackage[cmex10]{amsmath} 
\usepackage{amssymb,dsfont,cancel,amsthm}
\usepackage{tikz}
\usetikzlibrary{plotmarks}
\usepackage{pgfplots}
\usetikzlibrary{calc}
\usetikzlibrary{shapes,arrows}
\usetikzlibrary{decorations.markings}
\usetikzlibrary{positioning}
\pgfplotsset{compat=1.10}
\usetikzlibrary{calc}
\usetikzlibrary{shapes,arrows}
\usetikzlibrary{decorations.markings}
\usepackage[draft]{tikzpeople}
\usetikzlibrary{math} 
\tikzmath{\x = 1.875; \y =1.25; \z=6.0; \b=.7;}

\DeclareMathOperator{\Tr}{Tr}

\newtheorem{theorem}{Theorem}

\newtheorem{definition}{Definition}

\newtheorem{lemma}{Lemma}
\newtheorem{corollary}{Corollary}

\newcommand{\Enc}{\mathsf{Enc}}
\newcommand{\Dec}{\mathsf{Dec}}
\newcommand{\Shp}{\mathsf{Shp}}

\title{One-Shot Secret Sharing with Monotone Access Structures over Classical-Quantum Broadcast Channels}
\author{Truman Welling, R\'emi A. Chou, and Aylin Yener
    \thanks{ T. Welling is with the Department of Electrical and Computer Engineering, The Ohio State University, Columbus, OH 43210. E-mail: welling.78@osu.edu. A. Yener is with the Departments of Electrical and Computer Engineering, Computer Science and Engineering, and Integrated Systems Engineering, The Ohio State University, Columbus, OH 43210. E-mail: yener@ece.osu.edu. R. Chou is with the Department of Computer Science and Engineering, The University of Texas at Arlington, Arlington, TX 76019. E-mail: remi.chou@uta.edu. This paper was presented in part at the 2025 IEEE International Symposium on Information Theory~\cite{welling-2025-SSQBC}.}
}
\date{April 2025}
\begin{document}
\maketitle
\begin{abstract}
   We consider a secret sharing setting with a monotone access structure involving a control node and $L$ users, connected via a classical–quantum broadcast channel whose input is controlled by the control node, referred to as the dealer. Unlike traditional secret sharing settings, where the dealer fully controls the shares given to each user, in our model, the dealer encodes the secret for transmission over the broadcast channel. This means that the shares received by users are perturbed by the channel and are not fully controlled by the dealer. Our main results are  achievable one-shot secret sharing rates, as well as converse bounds for arbitrary monotone access structures. 
   We further derive second-order and asymptotic achievable rates for arbitrary monotone access structures. In the special case where all shares are required to recover the secret, we show that our result coincides with the existing secret sharing capacity over classical channels.
\end{abstract}

\section{Introduction}
In a secret sharing protocol, a dealer distributes shares of a secret to participants such that authorized subsets of shares, defined by a predetermined access structure, can reconstruct the secret, while unauthorized subsets gain no information about the secret. The first secret sharing protocols, introduced independently in \cite{blakley-1979-SecretSharing} and  \cite{shamir-1979-SecretSharing}, are threshold schemes, where any subset of shares exceeding a certain size can recover the secret.
Secret sharing was extended to general access structures in \cite{ito-1989-GeneralAccess,benaloh-1988-GeneralAccess}. Their properties, including the characterization of optimal share size, have been studied extensively, e.g., \cite{beimel-2011-SecretSharing}.
Secret sharing has also been extended to the quantum domain, with schemes developed for sharing classical secrets \cite{hillery-1999-QSecretSharing,karlsson-1999-QSS} and quantum secrets \cite{cleve-1999-QTST,gottesman-2000-TheoryQSS,smith2000quantum}.

The secret sharing protocols mentioned up to this point all use the following two step structure: the dealer encodes the secret into shares then distributes them to participants. In both the classical and quantum settings, the dealer fully controls (knows) what each participant receives. Security in classical schemes relies
on the secrecy of the delivered shares, whereas in quantum
schemes security arises from the entanglement structure of the joint
state across all participants.

By contrast, this paper considers a model for secret sharing where the dealer distributes the encoded shares to the users by transmitting them across a classical-quantum broadcast channel to the users. This model differs fundamentally from the previously mentioned secret-sharing frameworks where the dealer explicitly determines and distributes each user's exact share. Specifically, in our setting the dealer cannot fully control the share received by any individual user because the shares are perturbed by the broadcast channel. Security is ensured by the noise inherent to the channel.

Our use of channel noise to provide security is similar to classical secret sharing schemes that do not require a secure link between the dealer and the users, which have been considered both in channel-based settings \cite{zou-2015-ClassicalChannelSS,sultana2023secret} and source-based settings \cite{rana-2021-SSCorrelatedRV,chou-2024-DistributedSS,sultana-2025-SecretSharing,chou2019biometric}. In particular, \cite{zou-2015-ClassicalChannelSS} considers secret sharing over a classical broadcast channel, where each set of users (authorized or unauthorized) is treated as a virtual user and achievability is shown via connection to the compound wiretap channel setting \cite{liang-2009-CompoundWTC}. We also leverage this idea of treating a set of users as a virtual user for the purpose of the access structure.

The summary of our contributions is as follows. We consider secret sharing for monotone access structures where a dealer communicates with $L$ users through a classical–quantum broadcast channel, and derive
\begin{itemize}
    \item An achievable one-shot secret sharing rate. This result stems from classical random binning techniques, originally developed for the classical wiretap channel \cite{csiszar-1996-AlmostIndependence}, as well as in  classical-quantum settings, e.g.,~\cite{renes-2011-ChannelCoding,chou-2021-PrivateQMAC}. More specifically, our coding scheme combines source coding with compound side information to handle the reliability constraints, ensuring that authorized sets of users can recover the secret, and privacy amplification with hash functions to handle the security constraints, preventing unauthorized sets of users from gaining information about the secret. 
    The main challenge is in managing multiple authorized sets of users. To this end, we derive one-shot achievability bounds for source coding with compound quantum side information and use them to construct compound channel codes.
    \item An upper bound on one-shot secret sharing rates.  Specifically, we leverage the fact that secret key distribution using a noisy channel is an upper bound on secure communication following \cite{renes-2011-ChannelCoding,qi-2018-PrivateQBrodcast}.
    \item Second-order asymptotics for achievable secret sharing rates. Using existing bounds on the second-order asymptotics of smooth relative entropies from \cite{tomamichel-2013-SecondOrder,li-2014-SecondOrder}, we derive an asymptotic lower bound on the secret sharing rate  for $n$ independent and identically distributed (i.i.d.) uses of the classical-quantum channel.
    \item The secret sharing capacity for classical channels when the access structure requires the participation of all users to reconstruct the secret. In this case, our result reduces to the capacity expression of \cite{zou-2015-ClassicalChannelSS}.
\end{itemize}

The remainder of the paper is organized as follows. Notation and preliminaries are laid out in Section~\ref{sec:notation}. The problem formulation is provided in Section~\ref{sec:prob_statement}, followed by the statement of the main results in Section~\ref{sec:results}. Supporting results needed for the achievability proof are presented in Section~\ref{sec:supporting_results}, followed by the proofs of the one-shot achievability and converse results in Sections~\ref{sec:proof_of_main_achievability_result} and \ref{sec:proof_of_converse}, respectively. The paper is concluded in Section~\ref{sec:conclusion}.

\section{Notation and Preliminaries}\label{sec:notation}

For $\mathcal{H}$, a finite dimensional Hilbert space, let $\mathcal{P}(\mathcal{H})$ be the set of positive semi-definite operators on $\mathcal{H}$. Let $\mathcal{S}_{=}(\mathcal{H}) \triangleq\{\rho\in\mathcal{P}(\mathcal{H}): \Tr(\rho)=1\}$ and $\mathcal{S}_{\leq}(\mathcal{H}) \triangleq\{\rho\in\mathcal{P}(\mathcal{H}): 0\leq\Tr(\rho)\leq1\}$ be the sets of normalized and subnormalized quantum states, respectively. Denote the identity operator on $\mathcal{H}_A$ by $\mathds{1}_A$. Define $\mathcal{H}_{AB}\triangleq\mathcal{H}_A\otimes\mathcal{H}_B$ and, for an operator $L_{AB}$  on $\mathcal{H}_{AB}$,  define $L_A\triangleq \Tr_A[L_{AB}]$.

Define the trace distance between  $\rho \in \mathcal{S}_{\leq}$ and $\sigma \in \mathcal{S}_{\leq}$ as $D(\rho,\sigma)\triangleq\frac{1}{2}\|\rho-\sigma\|_1$, where $\|A\|_1\triangleq\Tr\big[\sqrt{A^\dagger A}\big]$ and $\dagger$ denotes the conjugate transpose. Define the Schatten infinity norm of positive semi-definite operators as $\|A\|=\max_{Q\in\mathcal{S}_\leq(\mathcal{H})}\Tr[AQ]$. Define the fidelity \cite{uhlmann-1976-Fidelity} of two states $\rho,\sigma\in\mathcal{S}_\leq(\mathcal{H})$  as $F(\rho,\sigma)\triangleq\|\sqrt{\rho}\sqrt{\sigma}\|_1^2$ and the purified distance as $P(\rho,\sigma)\triangleq \sqrt{1-F(\rho,\sigma)}$. 

Denote the von Neumann entropy of  $\rho_X\in\mathcal{S}_=(\mathcal{H}_X)$ by $H(X)_\rho$. For $\rho_{AB}\in\mathcal{S}_=(\mathcal{H}_{AB})$, the min- and max-entropy of $A$ conditioned on $B$ for  $\rho_{AB}$ are  given by \cite[Def.~11 and Eq. (6)]{tomamichel-2010-Duality}
\begin{align}
	H_{\min}(A|B)_\rho & \triangleq \max_{\sigma_B\in\mathcal{S}_=(\mathcal{H}_B)}\sup\left\{\lambda\in\mathbb{R}: \rho_{AB}\leq 2^{-\lambda}\mathds{1}_A\otimes\sigma_B\right\},\\
	H_{\max}(A|B)_\rho & \triangleq \max_{\sigma_B\in\mathcal{S}_=(\mathcal{H}_{B})}\log F(\rho_{AB},\mathds{1}_A\otimes\sigma_B). \label{eq:conditional_max_entropy}
\end{align}
For $\epsilon\geq 0$,   define the smooth entropies as \cite[Def.~12 and Lem.~16]{tomamichel-2010-Duality}
\begin{align}
	H_{\min}^\epsilon(A|B)_\rho &\triangleq \max_{\tilde{\rho}_{AB}\in\mathcal{B}_\epsilon(\rho_{AB})}H_{\min}(A|B)_{\tilde{\rho}},\label{eq:smooth_conditional_min_entropy}\\
	H_{\max}^\epsilon(A|B)_\rho & \triangleq \min_{\tilde{\rho}_{AB}\in\mathcal{B}_\epsilon(\rho_{AB})}H_{\max}(A|B)_{\tilde\rho}, \label{eq:smooth_conditional_max_entropy}
\end{align}
where $\mathcal{B}_\epsilon(\rho)\triangleq\{\bar{\rho}\in\mathcal{S}_\leq(\mathcal{H}):P(\rho,\bar{\rho})\leq\epsilon\}$. Note that the definition in \eqref{eq:smooth_conditional_min_entropy} can be equivalently written in terms of the smooth relative max entropy \cite[Def.~3]{tomamichel-2013-SecondOrder},
\begin{align}
    H_{\min}^\epsilon(A|B)_{\rho}& = \max_{\sigma_B\in\mathcal{S}_=(\mathcal{H}_{B})} -D_{\max}^\epsilon(\rho_{AB}\|\mathds{1}_A\otimes\sigma_B),
\end{align}
where $D_{\max}^{\epsilon}(\rho\|\sigma) \triangleq \min_{\tilde{\rho}\in\mathcal{B}_\epsilon(\rho)}\inf\left\{\lambda\in\mathbb{R}: \tilde{\rho}\leq2^\lambda\sigma\right\}$ for $\rho\in\mathcal{S}_=(\mathcal{H})$, $\sigma\in\mathcal{P}(\mathcal{H})$, and $0\leq\epsilon< 1$ \cite[Def.~2]{tomamichel-2013-SecondOrder}. 

For $\rho\in\mathcal{S}_=(\mathcal{H})$ and $\sigma\in\mathcal{P}(\mathcal{H})$, define the quantum relative entropy and relative entropy variance respectively as \cite{tomamichel-2013-SecondOrder,li-2014-SecondOrder}
\begin{align}
    D(\rho\|\sigma) &\triangleq \Tr\left[\rho(\log\rho-\log\sigma)\right],\\
    V(\rho\|\sigma) &\triangleq \Tr\left[\rho(\log\rho-\log\sigma)^2\right] -\left(D(\rho\|\sigma)\right)^2.
\end{align}
For $\rho_{AB}\in\mathcal{S}_=(\mathcal{H}_{AB})$, the quantum conditional entropy and  conditional information variance are respectively given by
\begin{align}
    H(A|B)_\rho &\triangleq -D(\rho_{AB}\|\mathds{1}_A\otimes\rho_B),\\
    V(A|B)_\rho & \triangleq V(\rho_{AB}\|\mathds{1}_A\otimes\rho_B).
\end{align}

The variational distance between two classical probability distributions $p$ and $q$, defined over $\mathcal{X}$, is defined as $\mathbb{V}(q,p)\triangleq\sum_{x\in\mathcal{X}}|p(x)-q(x)|$. $P^U_X$ denotes the uniform distribution on $\mathcal{X}$. Let $[1:x]\triangleq\{1,2,\dots,\lfloor x\rfloor\}$ for $x\in \mathbb{R}$. For a set $\mathcal{A}$, $2^\mathcal{A}$ denotes the power set of $\mathcal{A}$. Denote the indicator function by $\mathbf{1}\{\cdot\}$.

\section{Problem Statement}\label{sec:prob_statement}
Consider the problem of sharing a secret among $L$ users across a classical-quantum broadcast channel $W:\mathcal{X}\rightarrow\mathcal{S}_{=}(\mathcal{H}_{Y_{[1:L]}})$, where $\mathcal{X}$ is a finite set and $\mathcal{H}_{Y_{[1:L]}}=\mathcal{H}_{Y_1}\otimes\cdots\otimes\mathcal{H}_{Y_L}$. The channel maps $x\in\mathcal{X}$ to $\rho_{Y_{[1:L]}}^{x}\in\mathcal{S}_{=}(\mathcal{H}_{Y_{[1:L]}})$, where $\rho_{Y_{[1:L]}}^{x}$ is the state of the quantum system $Y_{[1:L]}$ conditioned on the realization $x$.

The access structure consists of the set of sets $\mathbb{A}\subseteq2^{[1:L]}$ where each element $\mathcal{A}\in\mathbb{A}$ is an \emph{authorized set}, meaning the users in $\mathcal{A}$, when combining their channel outputs, should be able to recover the secret. The access structure $\mathbb{A}$ is \emph{monotone} if $\mathcal{A}\in\mathbb{A}$ and $\mathcal{A}\subseteq\mathcal{A}'\in2^{[1:L]}$ implies $\mathcal{A}'\in\mathbb{A}$ \cite{benaloh-1988-GeneralAccess}. Sets of users not in $\mathbb{A}$ should not be able to recover information about the secret. Accordingly, we define the set of sets $\mathbb{B}\triangleq2^{[1:L]}\backslash \mathbb{A}$, and call each $\mathcal{B}\in\mathbb{B}$ an \emph{unauthorized set}. An example of our setting is depicted in Fig.~\ref{fig:problem_setting}.

\begin{figure}[t]
    \centering
    \resizebox{0.6\linewidth}{!}{
    \begin{tikzpicture}
     \node (0) at (2.25*\x,1.8*\y) {Distribution};
     \node (a) at (-0.1*\x,0.0) {$S$};
     \node (b) at (1.05*\x,0.0) [draw,rounded corners = 2pt, minimum width=0.9cm,minimum height=0.5cm, align=center] {$\Enc$};
     \node (c) at (2.35*\x,0.0) [draw,rounded corners = 2pt, minimum width=0.9cm,minimum height=0.5cm, align=center] {$W$};
     \node (d) at (4.5*\x,1.2*\y) [bob,minimum size=\b cm] {User $1$};
     \node (e) at (4.5*\x,0) [bob,minimum size=\b cm] {User $2$};
     \node (f) at (4.5*\x,-1.2*\y) [bob,minimum size=\b cm] {User $3$};
     \draw[decoration={markings,mark=at position 1 with {\arrow[scale=1.5]{latex}}},
      postaction={decorate}, thick, shorten >=1.4pt] ($(a.east)+(0.0,0)$) -- ($(b.west)-(0,0)$);
     \draw[decoration={markings,mark=at position 1 with {\arrow[scale=1.5]{latex}}},
      postaction={decorate}, thick, shorten >=1.4pt] ($(b.east)+(0.0,0)$) -- ($(c.west)-(0,0)$) node [midway, above] {$X$};
     \draw[decoration={markings,mark=at position 1 with {\arrow[scale=1.5]{latex}}},
     postaction={decorate}, thick, shorten >=1.4pt] ($(c.east)+(0.0,0.15)$) -- ($(d.west)-(0,0)$) node [midway, above left] {$\rho_{Y_1}^x$};
     \draw[decoration={markings,mark=at position 1 with {\arrow[scale=1.5]{latex}}},
     postaction={decorate}, thick, shorten >=1.4pt] ($(c.east)+(0.0,0)$) -- ($(e.west)-(0,0)$) node [midway, above right] {$\rho_{Y_2}^x$};
     \draw[decoration={markings,mark=at position 1 with {\arrow[scale=1.5]{latex}}},
     postaction={decorate}, thick, shorten >=1.4pt] ($(c.east)+(0.0,-0.15)$) -- ($(f.west)-(0,0)$) node [midway, below left] {$\rho_{Y_3}^x$};

     \node (1) at (2.25*\x,-2.15*\y) {Recovery};
     \node (aa) at (.1*\x,-3*\y) [bob,minimum size=\b cm] {};
     \node (aa1) at (.2*\x,-3*\y) [bob,minimum size=\b cm] {};
     \node (aa2) at (.3*\x,-3*\y) [bob,minimum size=\b cm] {};
     \node (aa3) at (1.25*\x,-3*\y) {$S$};
     \node (aal) at (.2*\x,-3.7*\y) {$\{1,2,3\}$};
     \node (bb) at (2*\x,-3*\y) [bob,minimum size=\b cm] {};
     \node (bb1) at (2.1*\x,-3*\y) [bob,minimum size=\b cm] {};
     \node (bb2) at (2.95*\x,-3*\y) {$S$};
     \node (bbl) at (2.05*\x,-3.7*\y) {$\{1,2\}$};
     \node (cc) at (3.65*\x,-3*\y) [bob,minimum size=\b cm] {};
     \node (cc1) at (3.75*\x,-3*\y) [bob,minimum size=\b cm] {};
     \node (cc2) at (4.6*\x,-3*\y) {$S$};
     \node (ccl) at (3.7*\x,-3.7*\y) {$\{2,3\}$};
     \node (dd) at (-.15*\x,-4.5*\y) [bob,minimum size=\b cm] {};
     \node (dd1) at (-0.05*\x,-4.5*\y) [bob,minimum size=\b cm] {};
     \node (dd2) at (.75*\x,-4.5*\y) {$\cancel S$};
     \node (ddl) at (-.1*\x,-5.2*\y) {$\{1,3\}$};
     \node (ee) at (1.35*\x,-4.5*\y) [bob,minimum size=\b cm] {};
     \node (ee1) at (2.15*\x,-4.5*\y) {$\cancel S$};
     \node (eel) at (1.35*\x,-5.2*\y) {$\{1\}$};
     \node (ff) at (2.75*\x,-4.5*\y) [bob,minimum size=\b cm] {};
     \node (ff1) at (3.55*\x,-4.5*\y) {$\cancel S$};
     \node (ffl) at (2.75*\x,-5.2*\y) {$\{2\}$};
     \node (gg) at (4.15*\x,-4.5*\y) [bob,minimum size=\b cm] {};
     \node (gg1) at (4.95*\x,-4.5*\y) {$\cancel S$};
     \node (ggl) at (4.15*\x,-5.2*\y) {$\{3\}$};
     
     \draw[decoration={markings,mark=at position 1 with {\arrow[scale=1.5]{latex}}}, postaction={decorate}, thick, shorten >=1.4pt] ($(aa2.east)+(0.0,0)$) -- ($(aa3.west)-(0,0)$);
     \draw[decoration={markings,mark=at position 1 with {\arrow[scale=1.5]{latex}}}, postaction={decorate}, thick, shorten >=1.4pt] ($(bb1.east)+(0.0,0)$) -- ($(bb2.west)-(0,0)$);
     \draw[decoration={markings,mark=at position 1 with {\arrow[scale=1.5]{latex}}}, postaction={decorate}, thick, shorten >=1.4pt] ($(cc1.east)+(0.0,0)$) -- ($(cc2.west)-(0,0)$);
     \draw[decoration={markings,mark=at position 1 with {\arrow[scale=1.5]{latex}}}, postaction={decorate}, thick, shorten >=1.4pt] ($(dd1.east)+(0.0,0)$) -- ($(dd2.west)-(0,0)$);
     \draw[decoration={markings,mark=at position 1 with {\arrow[scale=1.5]{latex}}}, postaction={decorate}, thick, shorten >=1.4pt] ($(ee.east)+(0.0,0)$) -- ($(ee1.west)-(0,0)$);
     \draw[decoration={markings,mark=at position 1 with {\arrow[scale=1.5]{latex}}}, postaction={decorate}, thick, shorten >=1.4pt] ($(ff.east)+(0.0,0)$) -- ($(ff1.west)-(0,0)$);
     \draw[decoration={markings,mark=at position 1 with {\arrow[scale=1.5]{latex}}}, postaction={decorate}, thick, shorten >=1.4pt] ($(gg.east)+(0.0,0)$) -- ($(gg1.west)-(0,0)$);
    \end{tikzpicture}
    }
    \caption{A secret $S$ is shared among three users over a classical-quantum broadcast channel $W$ with access structure $\mathbb{A}=\big\{\{1,2,3\},\{1,2\},\{2,3\}\big\}$. Only sets of users in the authorized set can recover the secret.}
    \label{fig:problem_setting}
\end{figure}
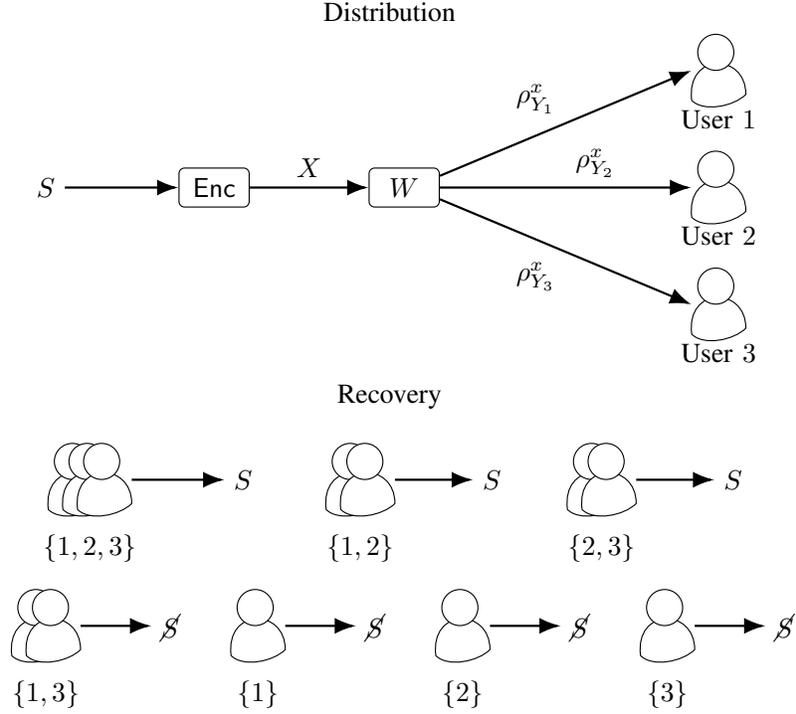

\subsection{One-shot Regime}

The one-shot regime considers a single use of the channel.

\begin{definition} \label{def:channel_code}
    A $2^R$-code for secret sharing among $L$ users with monotone access structure $\mathbb{A}$ over a classical-quantum broadcast channel $W$ consists of 
    \begin{itemize}
        \item a secret set $\mathcal{S} \triangleq [1:2^{R}]$;
        \item an encoder $\Enc: \mathcal{S}\rightarrow\mathcal{X}$;
        \item $|\mathbb{A}|$ decoding maps $\Dec_{\mathcal{A}}: \mathcal{H}_{Y_{\mathcal{A}}}\rightarrow\mathcal{S}$, where $\mathcal{A}\in\mathbb{A}$;
    \end{itemize}
    and operates as follows. The transmitter encodes the secret $S$ as $\Enc(S)$, and transmits the result across the classical-quantum channel $W$, from which User $l\in[1:L]$ receives $\rho_{Y_l}^{\Enc(S)}$. An authorized set $\mathcal{A}\in\mathbb{A}$  estimate $S$ as $\Dec_\mathcal{A}(\rho_{Y_{\mathcal{A}}}^{\Enc(S)})$, where $(\rho_{Y_{\mathcal{A}}}^{\Enc(S)})\triangleq (\rho_{Y_{l}}^{\Enc(S)})_{l\in \mathcal{A}}$.
\end{definition}

\begin{definition}\label{def:nonasymptotic_epsilon_good}
    A $2^R$-code is $\epsilon$-good if 
\begin{align}\label{eq:definition_nonasymptotic_reliability_condition}
    \max_{\mathcal{A}\in\mathbb{A}}\Pr\big\{S\neq \Dec_\mathcal{A}(\rho_{Y_{\mathcal{A}}}^{\Enc(S)})\big\} &\leq \epsilon,  \qquad\qquad\textnormal{(Reliability)}\\
    \max\limits_{\mathcal{B}\in\mathbb{B}}\big\|\tilde{\rho}_{SY_\mathcal{B}}-\rho_S\otimes\sigma_{Y_\mathcal{B}}\big\|_1&\leq\epsilon,  \qquad\qquad\textnormal{(Security) }\label{eq:definition_non_asymptotic_security_condition}
\end{align}
where $\rho_{S}$ is the maximally mixed state on $\mathcal{H}_S$,  $\sigma_{Y_\mathcal{B}}$ is an arbitrary quantum state on $\mathcal{H}(Y_\mathcal{B})$, and $\Tilde{\rho}_{SY_\mathcal{B}}\triangleq \sum_{s\in \mathcal{S}} \frac{1}{|\mathcal{S}|} |s\rangle \langle s| \otimes \rho_{Y_{\mathcal{B}}}^{\Enc(s)}$.
\end{definition}

\subsection{Asymptotic Regime}
The asymptotic regime considers $n$ independent uses of the channel $W$, as $n$ goes to infinity. Consider a $(2^{nR},n)$-code for the channel $W^{\otimes n}$. 

\begin{definition} \label{def:asypmtotic_reliability}
    A rate $R$ is  achievable if there exists a sequence of $(2^{nR},n)$-codes satisfying
\begin{align}
    \lim\limits_{n\rightarrow\infty}\max_{\mathcal{A}\in\mathbb{A}}\Pr\big\{S\neq \Dec_\mathcal{A}(\rho_{Y^n_{\mathcal{A}}}^{\Enc(S)})\big\}&=0 , \ \qquad\qquad\textnormal{(Reliability) }\\
    \lim\limits_{n\rightarrow\infty}\max\limits_{\mathcal{B}\in\mathbb{B}}\big\|\tilde{\rho}_{SY^n_\mathcal{B}}-\rho_S\otimes\sigma_{Y^n_\mathcal{B}}\big\|_1 & = 0. \qquad\qquad\textnormal{(Security)} \label{eq:def_asypmtotic_security}
\end{align}
\end{definition}

\section{Results} \label{sec:results}
We first present our main result, a one-shot secret sharing achievability bound.

\begin{theorem} \label{thm:general_one_shot_secret_sharing_ach}
    For secret sharing over a classical-quantum broadcast channel $W$ among $L$ users with monotone access structure $\mathbb{A}$, there exists an $\epsilon$-good $2^R$-code satisfying 
    \begin{align}
        R \geq \max_{P_X}\Big[\min_{\mathcal{B\in\mathbb{B}}}H^{\epsilon'}_\textnormal{min}(X|Y_\mathcal{B})_\psi - \max_{\mathcal{A}\in\mathbb{A}} H_\textnormal{max}^{\epsilon_1}(X|Y_\mathcal{A})_\psi - \delta -2\log\frac{1}{\epsilon_2} - 2\log{(|\mathbb{A}|\!+\!1)} - 6\Big], \label{eq:one_shot_lower_bound}
    \end{align}
    where $\epsilon_1,\epsilon_2,\delta>0$, $\epsilon'=\epsilon_1(|\mathbb{A}|+1)+\epsilon_2$, and $\epsilon = |\mathbb{A}|(3\epsilon'+2^{-\delta/2-1}) + |\mathbb{B}|(4\epsilon' + 2^{-\delta/2})$, and 
    \begin{align}
        \psi_{XY_{[1:L]}}\triangleq\sum_{x\in\mathcal{X}}P_X(x)|x\rangle\langle x|\otimes\rho_{Y_{[1:L]}}^x. \label{eq:classical_quantum_state}
    \end{align}
\end{theorem}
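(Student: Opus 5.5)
We will prove Theorem~\ref{thm:general_one_shot_secret_sharing_ach} with a random binning construction, in the style of the wiretap-channel schemes of \cite{csiszar-1996-AlmostIndependence,renes-2011-ChannelCoding}. Fix $P_X$ and let $\psi_{XY_{[1:L]}}$ be the state in \eqref{eq:classical_quantum_state}. We draw $X\sim P_X$ and apply two independent random hash functions to it. The first is a binning map $F:\mathcal{X}\to[1:2^{r_F}]$, and its output plays the role of public side information. The second is a two-universal key map $K:\mathcal{X}\to[1:2^{R}]$, and its output plays the role of the secret. The goal is to make $(F(X),K(X))$ close to uniform and independent of $Y_\mathcal{B}$ for every $\mathcal{B}\in\mathbb{B}$, while $X$ stays recoverable from $(F(X),Y_\mathcal{A})$ for every $\mathcal{A}\in\mathbb{A}$. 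Once this holds, we turn the source scheme into a channel code in the usual way. We fix a realization $f$ of the bin value and, for each secret $s$, let $\Enc(s)$ be drawn from the conditional distribution of $X$ given $(F,K)=(f,s)$. Since $(F,K)$ is nearly uniform, the induced joint state is close in trace distance to the ideal one. A standard averaging and expurgation argument then gives a fixed code meeting both \eqref{eq:definition_nonasymptotic_reliability_condition} and \eqref{eq:definition_non_asymptotic_security_condition}. This conversion step is where the $2^{-\delta/2}$-type terms and the additive constants in $\epsilon$ come from.

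\textbf{Reliability: source coding with compound quantum side information.} We need a single bin rate $r_F$ that works simultaneously for all $|\mathbb{A}|$ decoders. This is where we would invoke the one-shot result for source coding with compound quantum side information promised in Section~\ref{sec:supporting_results}. The argument runs as follows. For each $\mathcal{A}$, take the pretty-good or hypothesis-testing decoder for $X$ from $(F(X),Y_\mathcal{A})$, as in \cite{renes-2011-ChannelCoding}. Its error is controlled by $\epsilon_1$-smoothing of $H_{\max}(X|Y_\mathcal{A})$ plus a term $2^{-(r_F-H^{\epsilon_1}_{\max}(X|Y_\mathcal{A})-\delta')}$. Taking a union bound over $\mathbb{A}$ lets us choose
\[
r_F=\max_{\mathcal{A}\in\mathbb{A}}H^{\epsilon_1}_{\max}(X|Y_\mathcal{A})_\psi+\log(|\mathbb{A}|+1)+\delta/2+O(1).
\]
The $\log(|\mathbb{A}|+1)$ term appears because the smoothed states for the different $\mathcal{A}$ must be handled jointly, for example through a common smoothing of the $X$ marginal. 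That common smoothing is also why $\epsilon'$ picks up the term $\epsilon_1(|\mathbb{A}|+1)$.

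\textbf{Security: privacy amplification.} For each $\mathcal{B}\in\mathbb{B}$ we apply the quantum leftover hash lemma with side information $Y_\mathcal{B}$ to the combined hash $(F,K)$ of total length $r_F+R$. Its output is $\epsilon'$-close to uniform and independent of $Y_\mathcal{B}$ as long as
\[
r_F+R\le H^{\epsilon'}_{\min}(X|Y_\mathcal{B})_\psi-2\log\frac{1}{\epsilon_2}+O(1).
\]
Taking the minimum over $\mathcal{B}$ and subtracting the value of $r_F$ chosen above gives \eqref{eq:one_shot_lower_bound}, with the remaining $\log(|\mathbb{A}|+1)$ and constants collected into the final $-2\log(|\mathbb{A}|+1)-6$. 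A union bound over $\mathbb{B}$ together with the expurgation step gives the factor $|\mathbb{B}|$ in $\epsilon$.

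\textbf{Main obstacle.} The hard step is the compound source-coding lemma. The difficulty is that a single smoothed state must simultaneously witness the max-entropy bounds for every authorized set, and it must stay compatible with the smoothing used for the min-entropy in the security step; the choice $\epsilon'=\epsilon_1(|\mathbb{A}|+1)+\epsilon_2$ reflects exactly this constraint. Our first approach would be to build one decoder for the combined system as a sequential or union-of-projectors decoder. We would apply the Hayashi--Nagaoka operator inequality to the projectors associated with each $\mathcal{A}$, which should produce the $(|\mathbb{A}|+1)$ factors. We would then relate each resulting hypothesis-testing quantity to $H^{\epsilon_1}_{\max}(X|Y_\mathcal{A})$ through the standard duality and conversion bounds.
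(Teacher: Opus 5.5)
Your architecture is a different but workable route. The paper hashes $X$ once into a shaper input $U=f_\Gamma(X)$ with $\log|\mathcal{U}|=\min_{\mathcal{B}\in\mathbb{B}}H^{\epsilon'}_{\min}(X|Y_\mathcal{B})_\psi-\delta-2$, and treats shaper-plus-$W$ as a virtual channel with almost uniform input. It builds a compound source code for $U$ with a linear encoder $g$ (Lemma~\ref{lem:one-shot-source-code-compound-quantum-side-information}) and turns it into a channel code by letting the secret index the coset $g^{-1}(c)$ (Lemma~\ref{lem:information_reconcilliation_to_channel_coding}). Security is the leftover hash lemma applied to $U$, and a final averaging over $c$ selects $c^*$. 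Your pair $(F(X),K(X))$ plays the role of the split of $U$ into bin $C$ and coset index $S$. Sampling $X$ from the posterior given $(F,K)=(f,s)$ replaces the shaper-plus-coset construction. Your route avoids the linearity requirement and decodes $X$ directly. The paper's route is more modular and reuses the Renes--Renner source-to-channel reduction.

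However, the step you call the main obstacle is misdiagnosed, and the fix you sketch would not work. No single smoothed state has to serve every authorized set. Nothing ties it to the min-entropy smoothing either: the leftover hash lemma is applied to the true state, and the paper merely reuses the parameter $\epsilon'$ there. Each $\mathcal{A}$ has its own decoder acting on $Y_\mathcal{A}$ alone, built from its own smoothed state in $\mathcal{B}_{\epsilon_1}(\psi_{XY_\mathcal{A}})$, at the cost of an additive smoothing error when you return to $\psi$. A combined Hayashi--Nagaoka decoder using the projectors of all $\mathcal{A}$ is not implementable, because the users in $\mathcal{A}$ do not hold $Y_{\mathcal{A}'}$ for $\mathcal{A}'\not\subseteq\mathcal{A}$. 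It is also unnecessary: each authorized set knows its identity, so this is not a compound channel with unknown state. The only coupling is the shared hash. In Lemma~\ref{lem:one-shot-source-code-compound-quantum-side-information}, the paper makes each hash-averaged error at most $\epsilon'/(|\mathbb{A}|+1)$. It then uses Markov's inequality plus a union bound over $\mathbb{A}$ to find one hash good for all $\mathcal{A}$. Demanding $\sqrt{8\cdot 2^{-(m-H^{\epsilon_1}_{\max})}}\le\epsilon_2/(|\mathbb{A}|+1)$ gives the $2\log\frac{|\mathbb{A}|+1}{\epsilon_2}+3$ overhead. The Markov step inflates the smoothing error to $\epsilon_1(|\mathbb{A}|+1)$ in $\epsilon'$. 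Your ``union bound over $\mathbb{A}$'' was the right idea. In your route you can fold it into the final expurgation over hash choices and $f$, which already pays the factor $|\mathbb{A}|$ in $\epsilon$, and so drop the $2\log(|\mathbb{A}|+1)$ rate penalty.

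Finally, your slack allocation is reversed relative to the stated $\epsilon$. In the theorem, $\delta+2$ is the leftover-hash slack. It yields non-uniformity $2\epsilon'+2^{-\delta/2-1}$, hence averaged reliability $3\epsilon'+2^{-\delta/2-1}$, and security $4\epsilon'+2^{-\delta/2}$ after a triangle inequality. The $2\log\frac{1}{\epsilon_2}$ term is binning slack. Your choices would instead give a security term of order $\epsilon_2$ and a reliability term of order $2^{-\delta/4}$. To reproduce the statement exactly, take $r_F+R=\min_{\mathcal{B}}H^{\epsilon'}_{\min}(X|Y_\mathcal{B})_\psi-\delta-2$ and $r_F=\max_{\mathcal{A}}\lceil H^{\epsilon_1}_{\max}(X|Y_\mathcal{A})_\psi+2\log\frac{|\mathbb{A}|+1}{\epsilon_2}\rceil+3$.
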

\begin{IEEEproof}
    See Section~\ref{sec:proof_of_main_achievability_result}.
\end{IEEEproof}

The minimum over unauthorized sets captures the worst-case privacy amplification constraint, while the maximum over authorized sets represents the rate penalty for source coding with quantum side information for the worst-case authorized set.
Additionally,  $\delta$ embodies a trade-off between $\epsilon$ and the rate. Notice that as $\delta$ increases, $\epsilon$ decreases, but the achievable rate decreases.

We have the following upper bound on  achievable rates.

\begin{theorem} \label{thm:general_asymptotic_secret_sharing_converse}
    For secret sharing over a classical-quantum broadcast channel $W$ among $L$ users with monotone access structure $\mathbb{A}$, the rate $R$ of an $\epsilon$-good code is upper bounded~by 
    \begin{align}
        R \leq \max_{P_{SX}}\left[\min_{\mathcal{B}\in\mathbb{B}}H_\textnormal{min}^{\sqrt{\epsilon}}(S|Y_\mathcal{B})_\psi-\max_{\mathcal{A}\in\mathbb{A}}H_\textnormal{max}^{\sqrt{2\epsilon}}(S|Y_\mathcal{A})_\psi\right],
    \end{align}
    where
    \begin{align}
        \psi_{SXY_{[1:L]}}\triangleq\sum_{x\in\mathcal{X}}P_{SX}(s,x)|s\rangle\langle s|\otimes|x\rangle\langle x|\otimes\rho_{Y_{[1:L]}}^x.
    \end{align}
\end{theorem}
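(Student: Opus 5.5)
Given an $\epsilon$-good $2^R$-code, I would fix the uniform secret $S$ and the deterministic encoder $X=\Enc(S)$. This defines a joint distribution $P_{SX}$ and the state $\psi_{SXY_{[1:L]}}$ of the statement. It suffices to show that, for every pair $\mathcal{A}\in\mathbb{A}$ and $\mathcal{B}\in\mathbb{B}$,
\begin{align}
R \leq H_\textnormal{min}^{\sqrt{\epsilon}}(S|Y_\mathcal{B})_\psi - H_\textnormal{max}^{\sqrt{2\epsilon}}(S|Y_\mathcal{A})_\psi .
\end{align}
Minimizing over $\mathcal{B}$, maximizing the subtracted term over $\mathcal{A}$, and then maximizing over $P_{SX}$ gives the theorem. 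Following \cite{renes-2011-ChannelCoding,qi-2018-PrivateQBrodcast}, I view the code as a one-shot secret-key distribution protocol. The key $S$ must be simultaneously decodable by $Y_\mathcal{A}$ and secret from $Y_\mathcal{B}$, so its length is bounded by a min-entropy term minus a max-entropy term.

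\emph{Security step.} By \eqref{eq:definition_non_asymptotic_security_condition}, $\|\psi_{SY_\mathcal{B}}-\rho_S\otimes\sigma_{Y_\mathcal{B}}\|_1\le\epsilon$. Converting trace distance to purified distance via the Fuchs--van de Graaf inequality, $P\le\sqrt{2D}=\sqrt{\|\cdot\|_1}\le\sqrt{\epsilon}$. Hence $\rho_S\otimes\sigma_{Y_\mathcal{B}}\in\mathcal{B}_{\sqrt\epsilon}(\psi_{SY_\mathcal{B}})$. Since $\rho_S=2^{-R}\mathds{1}_S$, this candidate satisfies $\rho_S\otimes\sigma_{Y_\mathcal{B}}\le 2^{-R}\mathds{1}_S\otimes\sigma_{Y_\mathcal{B}}$, and \eqref{eq:smooth_conditional_min_entropy} then gives $H_\textnormal{min}^{\sqrt{\epsilon}}(S|Y_\mathcal{B})_\psi\ge R$.

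\emph{Reliability step.} Let $\hat S=\Dec_\mathcal{A}(Y_\mathcal{A})$, realized by a POVM on $Y_\mathcal{A}$. Then $\Pr\{S\ne\hat S\}\le\epsilon$, so the classical state $\psi_{S\hat S}$ is $\epsilon$-close in trace distance to the perfectly correlated state $\bar\psi_{S\hat S}=\sum_s 2^{-R}|s\rangle\langle s|\otimes|s\rangle\langle s|$. The fidelity of these classical states is $F\ge(1-\epsilon)$, so the purified distance is at most $\sqrt{\epsilon}\le\sqrt{2\epsilon}$; the precise constant depends on the conversion used, and $\sqrt{2\epsilon}$ leaves slack. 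For the perfectly correlated state, $H_\textnormal{max}(S|\hat S)_{\bar\psi}=0$, so $H_\textnormal{max}^{\sqrt{2\epsilon}}(S|\hat S)_\psi\le 0$. The data-processing inequality for the smooth max-entropy under the measurement channel $Y_\mathcal{A}\to\hat S$ acting on the conditioning system then gives
\begin{align}
H_\textnormal{max}^{\sqrt{2\epsilon}}(S|Y_\mathcal{A})_\psi\le H_\textnormal{max}^{\sqrt{2\epsilon}}(S|\hat S)_\psi\le 0 .
\end{align}

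Combining the two steps gives $R\le H_\textnormal{min}^{\sqrt\epsilon}(S|Y_\mathcal{B})-0\le H_\textnormal{min}^{\sqrt\epsilon}(S|Y_\mathcal{B})-H_\textnormal{max}^{\sqrt{2\epsilon}}(S|Y_\mathcal{A})$ for all $\mathcal{A}$ and $\mathcal{B}$, as required.

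The main obstacle I anticipate is the smoothing bookkeeping in the reliability step. In the definition \eqref{eq:smooth_conditional_max_entropy}, the smoothing ball is defined via purified distance on subnormalized states. I must check that the perfectly correlated candidate, or a suitable subnormalized truncation $\sum_s P_{S\hat S}(s,s)|ss\rangle\langle ss|$, lies in the $\sqrt{2\epsilon}$-ball. I must also check that the unsmoothed max-entropy of that candidate is $\le 0$; for the truncation, the fidelity with $\mathds{1}\otimes\sigma$ is at most $1$. For the security step, I need to verify the triangle and norm conventions: the security criterion uses the full $\|\cdot\|_1$ without the factor $\tfrac12$, which is exactly what yields $\sqrt{\epsilon}$.
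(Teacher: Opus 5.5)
Your proposal is correct and follows the paper's proof essentially step for step: the product state $\rho_S\otimes\sigma_{Y_\mathcal{B}}$ lies in the $\sqrt{\epsilon}$-ball around $\psi_{SY_\mathcal{B}}$, which gives $H_{\min}^{\sqrt{\epsilon}}(S|Y_\mathcal{B})_\psi\ge R$, and the perfectly correlated state lies in the $\sqrt{2\epsilon}$-ball around the post-decoding state, so data processing gives $H_{\max}^{\sqrt{2\epsilon}}(S|Y_\mathcal{A})_\psi\le 0$. One small slip: the fidelity only satisfies $F\ge(1-\epsilon)^2$, not $F\ge 1-\epsilon$ (take a fraction $\epsilon$ of secrets that are always misdecoded), so the purified distance is at most $\sqrt{2\epsilon-\epsilon^2}\le\sqrt{2\epsilon}$ rather than $\sqrt{\epsilon}$; this is exactly the radius the paper obtains from $\|\sigma^{\mathcal{A}}_{SS'}-\bar\Phi_{SS'}\|_1\le 2\epsilon$ and $P\le\sqrt{\|\cdot\|_1}$, so your conclusion is unaffected.
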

\begin{IEEEproof}
    See  Section~\ref{sec:proof_of_converse}.
\end{IEEEproof}

This converse follows by considering an upper bound on secret key distribution over the channel.

When sharing a secret across   $n$ i.i.d. uses of a classical-quantum broadcast channel, we have the following second order characterization of the rate.

\begin{theorem} \label{thm:general_second_order_asymptotics}
   For secret sharing over a classical-quantum broadcast channel $W$ among $L$ users with monotone access structure $\mathbb{A}$, there exists a sequence of $(2^{nR},n)$-codes with achievable rate $R$ satisfying
    \begin{align}
        R &\geq \max_{P_X} \Bigg\{\min_{\mathcal{B\in\mathbb{B}}}\left[H(X|Y_\mathcal{B})_\psi + \sqrt{\frac{1}{n}V(X|Y_\mathcal{B})_\psi}\Phi^{-1}({\epsilon'}^2)\right]\nonumber\\
        & \qquad\qquad\qquad- \max_{\mathcal{A}\in\mathbb{A}}\left[H(X|Y_\mathcal{A})_\psi - \sqrt{\frac{1}{n}V(X|Y_\mathcal{A})_\psi}\Phi^{-1}(\epsilon_1)\right]+\mathcal{O}\left(\frac{\log n}{n}\right)\Bigg\}, \label{eq:second_order_expansion_result}
    \end{align}
    where   $\psi_{XY_{[1:L]}}$ is defined in \eqref{eq:classical_quantum_state}, $\epsilon_1>0$, $\delta>0$, $\epsilon'=\epsilon_1(|\mathbb{A}+1|)$, $\epsilon$ is defined as in Theorem~\ref{thm:general_one_shot_secret_sharing_ach}, and $\Phi^{-1}$ is the inverse cumulative distribution function of a standard normal random variable.
\end{theorem}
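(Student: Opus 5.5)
We obtain Theorem~\ref{thm:general_second_order_asymptotics} by applying the one-shot bound of Theorem~\ref{thm:general_one_shot_secret_sharing_ach} to the product channel $W^{\otimes n}$. We restrict the outer maximization to i.i.d. inputs $P_{X^n}=P_X^{\otimes n}$. The state \eqref{eq:classical_quantum_state} then becomes $\psi^{\otimes n}_{XY_{[1:L]}}$, and for every $\mathcal{A}\in\mathbb{A}$ and $\mathcal{B}\in\mathbb{B}$ its marginals are the products $\psi_{XY_\mathcal{A}}^{\otimes n}$ and $\psi_{XY_\mathcal{B}}^{\otimes n}$. Since $L$ is fixed, the quantities $|\mathbb{A}|$ and $|\mathbb{B}|$ do not depend on $n$.

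We choose the slack parameters so that the one-shot error stays fixed while the rate loss is $\mathcal{O}(\log n)$. We take $\delta=\delta_n=2\log n$, or simply keep $\delta$ fixed; any $\delta_n=\mathcal{O}(\log n)$ with $2^{-\delta_n/2}$ controlled suffices. We take $\epsilon_2=1/\sqrt{n}$, so that $2\log(1/\epsilon_2)=\log n$. Dividing \eqref{eq:one_shot_lower_bound} by $n$, the terms $\delta$, $2\log\frac{1}{\epsilon_2}$, $2\log(|\mathbb{A}|+1)$ and $6$ together contribute $\mathcal{O}(\frac{\log n}{n})$. Letting $\epsilon_2\to 0$ makes $\epsilon'=\epsilon_1(|\mathbb{A}|+1)+\epsilon_2$ tend to $\epsilon_1(|\mathbb{A}|+1)$. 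The perturbation of $\epsilon'$ by $1/\sqrt n$ changes $\sqrt{n}\,\Phi^{-1}(\epsilon'^2)$ by only $\mathcal{O}(1)$ and is absorbed into the remainder.

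It remains to expand each smooth entropy to second order.

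For the max-entropy term, we use the second-order expansion of the smooth max-entropy from \cite{tomamichel-2013-SecondOrder}, via duality with the hypothesis-testing relative entropy:
\begin{align}
H_{\max}^{\epsilon_1}(X^n|Y^n_\mathcal{A})_{\psi^{\otimes n}} = nH(X|Y_\mathcal{A})_\psi - \sqrt{nV(X|Y_\mathcal{A})_\psi}\,\Phi^{-1}(\epsilon_1^2)+\mathcal{O}(\log n).
\end{align}
The statement writes the argument as $\Phi^{-1}(\epsilon_1)$; this reflects the convention relating purified-distance smoothing to the hypothesis-testing error. To stay consistent with the paper I would follow the exact form in \cite{tomamichel-2013-SecondOrder,li-2014-SecondOrder}.

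For the min-entropy term, we use
\begin{align}
H_{\min}^{\epsilon'}(X^n|Y^n_\mathcal{B})_{\psi^{\otimes n}}=nH(X|Y_\mathcal{B})_\psi+\sqrt{nV(X|Y_\mathcal{B})_\psi}\,\Phi^{-1}(\epsilon'^2)+\mathcal{O}(\log n).
\end{align}
This follows from the representation $H_{\min}^\epsilon=\max_{\sigma}-D_{\max}^\epsilon(\rho\|\mathds{1}\otimes\sigma)$ and the expansion of $D_{\max}^\epsilon$, choosing $\sigma_{B^n}=\psi_{Y_\mathcal{B}}^{\otimes n}$. This choice gives a valid lower bound, which is the direction we need.

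Because $\mathbb{A}$ and $\mathbb{B}$ are finite and fixed, the min and max over them commute with the expansions up to $\mathcal{O}(\log n)$. Dividing by $n$ then gives \eqref{eq:second_order_expansion_result}, and the resulting code is $\epsilon$-good with $\epsilon$ as in Theorem~\ref{thm:general_one_shot_secret_sharing_ach}.

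The main obstacle is bookkeeping of directions. The max-entropy expansion must be an upper bound and the min-entropy expansion a lower bound, so only the appropriate one-sided second-order results are needed. One must also check that the $\mathcal{O}(\log n)$ remainders are uniform, or at least valid for each of the finitely many sets. Where $V=0$ one applies the degenerate cases of the cited expansions, where the remainder remains $\mathcal{O}(\log n)$ or better.
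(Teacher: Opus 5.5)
Applying Theorem~\ref{thm:general_one_shot_secret_sharing_ach} to $W^{\otimes n}$ and expanding $H_{\max}^{\epsilon_1}$ proves a strictly weaker statement than \eqref{eq:second_order_expansion_result}. The expansion you wrote, $H_{\max}^{\epsilon_1}(X^n|Y^n_\mathcal{A})_{\psi^{\otimes n}} = nH(X|Y_\mathcal{A})_\psi - \sqrt{nV(X|Y_\mathcal{A})_\psi}\,\Phi^{-1}(\epsilon_1^2)+\mathcal{O}(\log n)$, is the correct one for purified-distance smoothing. The square is not a matter of convention. Since $\Phi^{-1}$ is increasing and $\epsilon_1^2<\epsilon_1$, whenever $V(X|Y_\mathcal{A})_\psi>0$ your subtracted term exceeds the one in the theorem by $\sqrt{V(X|Y_\mathcal{A})_\psi/n}\,\bigl(\Phi^{-1}(\epsilon_1)-\Phi^{-1}(\epsilon_1^2)\bigr)=\Theta(n^{-1/2})$. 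That cannot be absorbed into $\mathcal{O}(\frac{\log n}{n})$, so ``following the exact form in the references'' gives $\Phi^{-1}(\epsilon_1^2)$, not $\Phi^{-1}(\epsilon_1)$. Smoothing the max-entropy with $\sqrt{\epsilon_1}$ instead does not help. In Theorem~\ref{thm:general_one_shot_secret_sharing_ach} the max-entropy smoothing parameter enters $\epsilon'$, and hence the code's error, linearly, so you would get $\epsilon'\approx\sqrt{\epsilon_1}(|\mathbb{A}|+1)$ rather than $\epsilon_1(|\mathbb{A}|+1)$. The loss sits in the max-entropy one-shot bound itself, not in the asymptotic expansion.

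The missing ingredient is the paper's hypothesis-testing version of the compound source-coding lemma (Lemma~\ref{lem:compound_source_coding_hypothesis_testing_entropy}). For each authorized set, take an optimal test $Q_{XY_\mathcal{A}}=\sum_x|x\rangle\langle x|\otimes Q^x_{Y_\mathcal{A}}$ for $H_h^{\epsilon_1-\eta}(X|Y_\mathcal{A})_{\psi|\psi}$. Use the square-root decoder $\bigl(\sum_{x':f(x')=c}Q^{x'}_{Y_\mathcal{A}}\bigr)^{-1/2}Q^x_{Y_\mathcal{A}}\bigl(\sum_{x':f(x')=c}Q^{x'}_{Y_\mathcal{A}}\bigr)^{-1/2}$ in place of the decoder of Lemma~\ref{lem:average_error_probability_bound}. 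The Hayashi--Nagaoka inequality (Lemma~\ref{lem:upper_bound_on_povm}) then bounds its hash-averaged error by $\epsilon_1$ at compression length $H_h^{\epsilon_1-\eta}(X|Y_\mathcal{A})_{\psi|\psi}+\log\frac{\epsilon_1}{\eta^2}+\mathcal{O}(1)$, and the same expurgation over hash functions covers all $\mathcal{A}\in\mathbb{A}$ at once. Plugging this into the otherwise unchanged scheme gives Lemma~\ref{lem:achievability_hypothesis_testing_bound}. Lemma~\ref{lem:second_order_expansion_of_conditional_hypothesis_testing_entropy} with $\eta=n^{-1/2}$ then yields $nH(X|Y_\mathcal{A})_\psi-\sqrt{nV(X|Y_\mathcal{A})_\psi}\,\Phi^{-1}(\epsilon_1)+\mathcal{O}(\log n)$, with $\log\frac{\epsilon_1}{\eta^2}=\mathcal{O}(\log n)$. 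This route has no $\epsilon_2$, so $\epsilon'=\epsilon_1(|\mathbb{A}|+1)$ exactly as stated. The rest of your plan agrees with the paper: lower-bounding $H_{\min}^{\epsilon'}$ via $D_{\max}^{\epsilon'}$ with $\sigma=\psi_{Y_\mathcal{B}}^{\otimes n}$, taking $\delta=\mathcal{O}(\log n)$, and using finiteness of $\mathbb{A}$ and $\mathbb{B}$.
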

\begin{IEEEproof}
    See Appendix~\ref{app:second_order_asymptotic_proof}.
\end{IEEEproof}

After accounting for decay in the probability of error as $n\rightarrow\infty$, the next result   follows  from  Theorem \ref{thm:general_second_order_asymptotics}.
\begin{corollary}\label{cor:asymptotic_main_result}
   For secret sharing over a classical-quantum broadcast channel $W$ among $L$ users with monotone access structure $\mathbb{A}$, there exists a sequence of $(2^{nR},n)$-codes with achievable rate $R$ satisfying
    \begin{align}
        R \geq \max_{P_X}\Big[\min_{\mathcal{B\in\mathbb{B}}}H(X|Y_\mathcal{B})_\psi - \max_{\mathcal{A}\in\mathbb{A}} H(X|Y_\mathcal{A})_\psi\Big],
    \end{align}
     where   $\psi_{XY_{[1:L]}}$ is defined in \eqref{eq:classical_quantum_state}.
\end{corollary}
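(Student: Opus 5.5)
The plan is to derive Corollary~\ref{cor:asymptotic_main_result} directly from Theorem~\ref{thm:general_second_order_asymptotics}. We do this by letting the error parameters decay slowly with $n$, so that the overall error $\epsilon$ vanishes while the second-order penalty terms still become negligible.

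Fix an input distribution $P_X$ attaining (or approaching) the maximum in the corollary, and the induced state $\psi_{XY_{[1:L]}}$ from \eqref{eq:classical_quantum_state}. We apply Theorem~\ref{thm:general_second_order_asymptotics} to $W^{\otimes n}$ with $P_X^{\otimes n}$. Since the state is i.i.d., the one-shot quantities in Theorem~\ref{thm:general_one_shot_secret_sharing_ach} become smooth entropies of $\psi^{\otimes n}$, and Theorem~\ref{thm:general_second_order_asymptotics} gives their expansion. We choose the parameters as sequences:
\begin{align}
\epsilon_1 = \epsilon_1(n) = n^{-1/4}, \qquad \delta = \delta(n) = \sqrt{n}.
\end{align}
(If needed, $\epsilon_2 = \epsilon_2(n) = n^{-1/4}$ as well, in the one-shot form.) Then $\epsilon' \to 0$ and $2^{-\delta/2}\to 0$. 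Since $|\mathbb{A}|$ and $|\mathbb{B}|$ are constants independent of $n$, the error $\epsilon = |\mathbb{A}|(3\epsilon'+2^{-\delta/2-1}) + |\mathbb{B}|(4\epsilon'+2^{-\delta/2})$ tends to zero. This gives both the reliability and the security conditions of Definition~\ref{def:asypmtotic_reliability}.

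Next we control the rate penalty. We use the Gaussian tail bound $|\Phi^{-1}(t)| \le \sqrt{2\ln(1/t)}$ for small $t$. With the choices above, $|\Phi^{-1}(\epsilon_1)|$ and $|\Phi^{-1}({\epsilon'}^2)|$ are $\mathcal{O}(\sqrt{\log n})$. The variances $V(X|Y_\mathcal{B})_\psi$ and $V(X|Y_\mathcal{A})_\psi$ are finite constants because the alphabets and Hilbert spaces are finite-dimensional. Hence every term $\sqrt{V/n}\,\Phi^{-1}(\cdot)$ is $\mathcal{O}(\sqrt{\log n / n}) \to 0$. The remaining one-shot penalties $(\delta + 2\log(1/\epsilon_2) + 2\log(|\mathbb{A}|+1) + 6)/n$ are also $o(1)$.

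The main obstacle is that the second-order expansions of Tomamichel--Hayashi and Li are stated for a fixed smoothing parameter, while here $\epsilon_1$ varies with $n$. The $\mathcal{O}(\log n)$ remainder could depend on $\epsilon$ and blow up as $\epsilon\to 0$. To avoid this, we would not rely on the expansion for vanishing $\epsilon$. Instead we use the asymptotic equipartition property with explicit dependence on the smoothing parameter (Tomamichel--Colbeck--Renner), which gives
\begin{align}
\frac{1}{n}H_{\min}^{\epsilon}(X^n|Y^n_\mathcal{B}) \ge H(X|Y_\mathcal{B}) - \frac{c\sqrt{\log(2/\epsilon^2)}}{\sqrt{n}},
\end{align}
together with the matching upper bound for $H_{\max}$. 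With $\epsilon = n^{-1/4}$, the correction is $\mathcal{O}(\sqrt{\log n/n})$. Dividing the one-shot bound of Theorem~\ref{thm:general_one_shot_secret_sharing_ach} by $n$ and letting $n\to\infty$ then gives the rate $\min_\mathcal{B} H(X|Y_\mathcal{B})_\psi - \max_\mathcal{A} H(X|Y_\mathcal{A})_\psi$. Maximizing over $P_X$ completes the proof.
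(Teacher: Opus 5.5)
Your proposal is correct, but the argument you end up relying on differs from the paper's. The paper plugs $\epsilon_1=1/\sqrt{n}$ and $\delta=\log n$ into Theorem~\ref{thm:general_second_order_asymptotics} and lets $n\to\infty$. In doing so it applies the fixed-$\epsilon$ expansions of Lemmas~\ref{lem:second_order_expansion_of_conditional_hypothesis_testing_entropy} and~\ref{lem:second_order_expansion_of_max_relative_entropy} with a smoothing parameter that vanishes in $n$. That is exactly the non-uniformity you flag: the $\mathcal{O}(\log n)$ remainder depends on $\epsilon$. There is a further problem you did not mention: with $\eta=n^{-1/2}$, the choice $\epsilon_1=1/\sqrt{n}$ makes the hypothesis-testing smoothing parameter $\epsilon_1-\eta$ equal to zero.

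Your route bypasses Theorem~\ref{thm:general_second_order_asymptotics} and the hypothesis-testing variant altogether. You apply Theorem~\ref{thm:general_one_shot_secret_sharing_ach} to $W^{\otimes n}$ with $P_X^{\otimes n}$ and bound $H_{\min}^{\epsilon'}$ and $H_{\max}^{\epsilon_1}$ using the fully quantum AEP. Its correction term $\mathcal{O}\big(\sqrt{\log(1/\epsilon)/n}\big)$ is explicit in the smoothing parameter, and its constant depends only on the single-letter state; for cq states it is bounded in terms of $|\mathcal{X}|$. With $\epsilon_1=\epsilon_2=n^{-1/4}$ and $\delta=\sqrt{n}$, every penalty divided by $n$ is $o(1)$ and $\epsilon\to 0$, so the corollary follows rigorously. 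The paper's route is shorter once Theorem~\ref{thm:general_second_order_asymptotics} is available but inherits the gaps above. Yours gives up only second-order information, which the corollary does not need.

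When you write it up:
\begin{itemize}
\item Drop the opening discussion of Theorem~\ref{thm:general_second_order_asymptotics} and the $\Phi^{-1}$ tail bound, since they are not used.
\item Make $\epsilon_2=n^{-1/4}$ mandatory rather than optional, because $\epsilon_2$ enters both $\epsilon'$ and the $2\log(1/\epsilon_2)$ penalty.
\item Note that the AEP's side condition $n\ge\frac{8}{5}\log(2/\epsilon^2)$ holds for large $n$.
\item To get a fixed-rate sequence of codes, target any $R$ strictly below the limit and absorb the slack by enlarging $\delta$. This only shrinks $\log|\mathcal{U}|$ and hence decreases $\epsilon$.
\end{itemize}
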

\begin{IEEEproof}
    See Appendix~\ref{app:proof_asymptotic_general_secret_sharing_achievability_proof}.
\end{IEEEproof}

We now consider the special case of an access structure in which all shares are required to recover the secret, i.e., $\mathbb{A}=\{[1:L]\}$. In this case, we do not need to consider source coding with \emph{compound} side information, allowing us to recover a better bound (in addition to the bound no longer depending on the size of the authorized set).
\begin{corollary}\label{cor:nonasymptotic_all_user_access_structure}
    For secret sharing over a classical-quantum broadcast channel $W$ among $L$ user with the access structure $\mathbb{A}=\{[1:L]\}$, there exists an $\epsilon$-good $2^R$-code  satisfying 
    \begin{align}
        R \geq \max_{P_X}\Big[\min_{\mathcal{B\in\mathbb{B}}}H^{\epsilon'}_\textnormal{min}(X|Y_\mathcal{B})_\psi -  H_\textnormal{max}^{\epsilon_1}\big(X|Y_{[1:L]}\big)_\psi - \delta -2\log\frac{1}{\epsilon_2} - 6\Big].
    \end{align}
    for $\epsilon_1,\epsilon_2,\delta>0$, $\epsilon=3\epsilon'+2^{-\delta/2-1} + |\mathbb{B}|(4\epsilon' + 2^{-\delta/2})$, and  $\psi_{XY_{[1:L]}}$  defined in \eqref{eq:classical_quantum_state}.
\end{corollary}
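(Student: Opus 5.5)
We obtain the corollary by rerunning the proof of Theorem~\ref{thm:general_one_shot_secret_sharing_ach} (Section~\ref{sec:proof_of_main_achievability_result}) with $\mathbb{A}=\{[1:L]\}$. We do not simply substitute $|\mathbb{A}|=1$ into the theorem, since that leaves a $2\log 2$ penalty and inflates $\epsilon'$ to $2\epsilon_1+\epsilon_2$. The scheme has the same two layers as before. First, draw $X\sim P_X$ i.i.d. and bin it with a random hash $F: \mathcal{X}\to[1:2^{r_1}]$, which handles reliability as source coding with quantum side information $Y_{[1:L]}$. Second, extract the secret with an independent two-universal hash $G:\mathcal{X}\to[1:2^{R}]$, which handles security against every $\mathcal{B}\in\mathbb{B}$ by privacy amplification. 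The dealer transmits $x$ drawn from the conditional distribution of $X$ given $(F(X),G(X))=(f,s)$, with $f$ a fixed, publicly known value. A standard derandomization then fixes the hash functions and $f$.

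\textbf{Reliability with a single side-information system.} Because only one authorized set exists, we replace the compound source-coding lemma of Section~\ref{sec:supporting_results} with the ordinary one-shot source coding with quantum side information result (Renes--Renner type). It states that binning at rate $r_1 = H_{\max}^{\epsilon_1}(X|Y_{[1:L]})_\psi + 2\log\frac{1}{\epsilon_2}+\text{const}$ gives a decoder of $X$ from $(F(X),Y_{[1:L]})$ with error of order $\epsilon_1+\epsilon_2$. This step produces the $H_{\max}^{\epsilon_1}$ term. The compound lemma needed a union/pretty-good-measurement argument over $|\mathbb{A}|$ sets, which is the source of both the $2\log(|\mathbb{A}|+1)$ term and the factor $|\mathbb{A}|+1$ in $\epsilon'$. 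With one set, neither appears, and the smoothing parameter entering the later steps becomes $\epsilon'=\epsilon_1+\epsilon_2$, as in the corollary (read with $|\mathbb{A}|=1$ where the compound factor would normally sit).

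\textbf{Security and combining the errors.} For each $\mathcal{B}\in\mathbb{B}$, apply the leftover hash lemma with quantum side information to $(F(X),G(X))$ against $Y_\mathcal{B}$. The joint output is close to uniform, independent of $Y_\mathcal{B}$, provided $r_1+R\le H_{\min}^{\epsilon'}(X|Y_\mathcal{B})_\psi-\delta$. The error is $\mathcal{O}(\epsilon')+2^{-\delta/2}$ per set. Union bounding over $\mathbb{B}$ yields the $|\mathbb{B}|(4\epsilon'+2^{-\delta/2})$ term. Uniformity of $(F,G)$ with respect to the trivial side information also lets us convert the randomized scheme into one with uniform secret $S$ and a fixed $f$. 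Reliability under this conditioning costs $3\epsilon'+2^{-\delta/2-1}$ for the single authorized set. Choosing $R=\min_{\mathcal{B}}H_{\min}^{\epsilon'}(X|Y_\mathcal{B})-r_1-\delta$, substituting $r_1$, and collecting constants gives the claimed rate.

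\textbf{Main obstacle.} The delicate part is checking that every place $\epsilon'$ and the constant $6$ enter the general proof was driven only by the compound structure, so that they reduce cleanly. This requires matching the constant in the non-compound source-coding bound with the one used in the compound lemma with $|\mathbb{A}|=1$, to confirm that the $-6$ and the $2\log\frac{1}{\epsilon_2}$ terms come out unchanged. I would first try instantiating the compound lemma's proof with a single state. If that proof's $\log(|\mathbb{A}|+1)$ really comes from a $(|\mathbb{A}|+1)$-fold operator inequality, such as Hayashi--Nagaoka with one extra term, then with one set this should collapse to the standard bound.
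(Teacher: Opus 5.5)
Your argument is correct, and the step specific to this corollary matches the paper's. You swap the compound lemma for the single-side-information Renes--Renner bound (Lemma~\ref{lem:classical_data_compression_with_quantum_side_information}). This removes $2\log(|\mathbb{A}|+1)$, gives $\epsilon'=\epsilon_1+\epsilon_2$, and leaves a single reliability term in the final averaging.

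The scheme underneath is different, though. The paper reruns Theorem~\ref{thm:general_one_shot_secret_sharing_ach} verbatim:
a shaper hash $f_\Gamma:\mathcal{X}\to\mathcal{U}$ makes the virtual-channel input $U$ nearly uniform, and $U$ (not $X$) is compressed by a linear $g:\mathcal{U}\to\mathcal{C}$. Lemma~\ref{lem:information_reconcilliation_to_channel_coding} then turns this into coset encoders $\Enc_c:\mathcal{S}\to g^{-1}(c)$, and the leftover hash lemma on $f_\Gamma(X)$ gives security. Finally, $H_{\max}^{\epsilon_1}(U|Y_{[1:L]})_\psi$ is bounded by $H_{\max}^{\epsilon_1}(X|Y_{[1:L]})_\psi$.

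You instead use output statistics of random binning. You bin $X$ itself with $F$, take the secret to be $G(X)$, and pass from the source model to the channel model by bounding the distance of $(F(X),G(X))$ from uniform. That bound is the leftover hash lemma with trivial side information, which your security condition implies since $H_{\min}^{\epsilon'}(X)_\psi\ge H_{\min}^{\epsilon'}(X|Y_{\mathcal{B}})_\psi$.

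What each route buys:
\begin{itemize}
\item Your route needs no linearity. It also needs no $U\to X$ step in the max-entropy term, which in the paper relies on treating $\Gamma$ as known to the decoder.
\item The paper's route buys modularity: the corollary becomes a one-line substitution.
\item In exchange, you must keep $F$ random until the final derandomization over $(F,G,f)$. So you need the hash-averaged bound (Lemma~\ref{lem:average_error_probability_bound} with the smoothing of Lemma~\ref{lem:max_entropy_smoothing}), not just the existence statement.
\item You also need $(F,G)$ to be jointly 2-universal, which holds for independent families.
\end{itemize}

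Two bookkeeping remarks. First, take $r_1=\lceil H_{\max}^{\epsilon_1}(X|Y_{[1:L]})_\psi-2\log\epsilon_2\rceil+3$ with your literal choice $r_1+R=\min_{\mathcal{B}}H_{\min}^{\epsilon'}(X|Y_{\mathcal{B}})_\psi-\delta$. You then get $-4$ in the rate, but the errors are $3\epsilon'+2^{-\delta/2}$ and $4\epsilon'+2^{1-\delta/2}$, which exceed the stated $\epsilon$. Subtracting a further $2$ (equivalently, shifting $\delta$) makes each square-root term $2^{-\delta/2-1}$. That yields exactly $3\epsilon'+2^{-\delta/2-1}$, $4\epsilon'+2^{-\delta/2}$ and the $-6$.

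Second, your guess about the compound lemma is off. The $2\log(|\mathbb{A}|+1)$ there comes from requiring averaged error $\epsilon/(k+1)$ for each side-information system, so that a Markov argument leaves one hash good for all of them. It does not come from an extra Hayashi--Nagaoka term. This is harmless here, since you bypass that lemma.
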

\begin{IEEEproof}
    See  Appendix~\ref{app:proof_nonasymptotic_all_user_access_structure_achievability_proof}.
\end{IEEEproof}

We can also consider the asymptotic characterization for the access structure $\mathbb{A}=\{[1:L]\}$.

\begin{corollary}\label{cor:asymptotic_all_user_access_structure}
 For secret sharing over a classical-quantum broadcast channel $W$ among $L$ users with the access structure $\mathbb{A}=\{[1:L]\}$, there exists a sequence of $(2^{nR},n)$-codes with achievable rate $R$ satisfying
    \begin{align}
        R \geq \max_{P_X}\Big[\min_{\mathcal{B\in\mathbb{B}}}H(X|Y_\mathcal{B})_\psi -  H(X|Y_{[1:L]})_\psi\Big],
    \end{align}
    where   $\psi_{XY_{[1:L]}}$ is defined in \eqref{eq:classical_quantum_state}.
\end{corollary}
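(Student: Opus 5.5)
The proof reduces Corollary~\ref{cor:asymptotic_all_user_access_structure} to Corollary~\ref{cor:asymptotic_main_result} with $\mathbb{A}=\{[1:L]\}$: then $\max_{\mathcal{A}\in\mathbb{A}} H(X|Y_\mathcal{A})_\psi = H(X|Y_{[1:L]})_\psi$, and the claimed bound is immediate. Since $|\mathbb{A}|=1$ is a constant, the $2\log(|\mathbb{A}|+1)$ penalty vanishes after normalizing by $n$ anyway. For a self-contained route that mirrors how Corollary~\ref{cor:asymptotic_main_result} is obtained from Theorem~\ref{thm:general_second_order_asymptotics}, I would instead start from the sharper one-shot bound of Corollary~\ref{cor:nonasymptotic_all_user_access_structure}. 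I would apply it to the channel $W^{\otimes n}$ with the product input $P_X^{\otimes n}$, so that the relevant state is $\psi^{\otimes n}_{XY_{[1:L]}}$.

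\textbf{Step 1 (smooth entropies to conditional entropies).} Fix $P_X$. For each $\mathcal{B}\in\mathbb{B}$, I would invoke the second-order expansion of the smooth min-entropy, via $D^\epsilon_{\max}$ as in \cite{tomamichel-2013-SecondOrder}:
\begin{align}
H_{\min}^{\epsilon'}(X^n|Y^n_\mathcal{B})_{\psi^{\otimes n}} = nH(X|Y_\mathcal{B})_\psi + \sqrt{nV(X|Y_\mathcal{B})_\psi}\,\Phi^{-1}({\epsilon'}^2) + \mathcal{O}(\log n).
\end{align}
Similarly, the smooth max-entropy satisfies
\begin{align}
H_{\max}^{\epsilon_1}(X^n|Y^n_{[1:L]})_{\psi^{\otimes n}} = nH(X|Y_{[1:L]})_\psi - \sqrt{nV(X|Y_{[1:L]})_\psi}\,\Phi^{-1}(\epsilon_1) + \mathcal{O}(\log n).
\end{align}
This expansion follows from the duality of min/max entropies or directly from \cite{tomamichel-2013-SecondOrder}. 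Since $\mathbb{B}$ is finite, the minimum over $\mathcal{B}$ commutes with the $\mathcal{O}(\log n)$ remainder.

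\textbf{Step 2 (choice of parameters).} Dividing by $n$, the terms $\delta$, $2\log(1/\epsilon_2)$ and $6$ contribute $o(1)$ as long as they are $o(n)$. I would choose $\epsilon_1=\epsilon_2=n^{-1/4}$ and $\delta=\sqrt{n}$, for example. Then $\epsilon' = 2n^{-1/4}+\dots\to 0$, and the total error $\epsilon=3\epsilon'+2^{-\delta/2-1}+|\mathbb{B}|(4\epsilon'+2^{-\delta/2})\to 0$, so both the reliability and security conditions of Definition~\ref{def:asypmtotic_reliability} hold.

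\textbf{Main obstacle.} With $\epsilon_1\to 0$, the second-order expansions are no longer at fixed $\epsilon$. The term $\sqrt{nV}\,\Phi^{-1}(\epsilon_1)$ is of order $\sqrt{n\log n}$, which is still $o(n)$. However, the Berry--Esseen based $\mathcal{O}(\log n)$ remainders need uniformity in $\epsilon$. I would avoid this by instead using the $\epsilon$-dependent asymptotic equipartition bounds of Tomamichel et al.:
\begin{align}
H^{\epsilon}_{\min}(X^n|Y^n)\geq nH(X|Y) - \sqrt{n}\,c\sqrt{\log(2/\epsilon^2)},
\end{align}
together with the analogous upper bound for $H_{\max}$. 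These bounds hold for every $\epsilon$ and $n$. With $\epsilon=n^{-1/4}$, the penalty is $\mathcal{O}(\sqrt{n\log n})=o(n)$. Finally, I would take the limit $n\to\infty$ and maximize over $P_X$ to conclude.
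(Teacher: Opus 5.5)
Your first paragraph is exactly the paper's proof: it specializes Corollary~\ref{cor:asymptotic_main_result} to $\mathbb{A}=\{[1:L]\}$, so the maximum over authorized sets collapses to $H(X|Y_{[1:L]})_\psi$. The self-contained route through Corollary~\ref{cor:nonasymptotic_all_user_access_structure} and the $\epsilon$-dependent fully quantum AEP is also sound, and it handles letting $\epsilon_1,\epsilon_2\to 0$ more carefully than the fixed-$\epsilon$ expansions behind Theorem~\ref{thm:general_second_order_asymptotics}; the one slip is that, under purified-distance smoothing, the max-entropy expansion carries $\Phi^{-1}(\epsilon_1^2)$ rather than $\Phi^{-1}(\epsilon_1)$, which does not matter at first order.
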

\begin{IEEEproof}
    The result follows by simplifying Corollary~\ref{cor:asymptotic_main_result} using $\mathbb{A} =\{[1:L]\}$.
\end{IEEEproof}

Consider now the asymptotic regime where $W$ is classical. We define the secret sharing capacity as the maximum rate at which a secret can be shared reliably and securely across $W$.
\begin{corollary}[Classical Capacity] \label{cor:classical_secret_sharing_capacity_all_user}
    The secret sharing capacity over a classical broadcast channel among $L$ users with the access structure $\mathbb{A}=\{[1:L]\}$ is
    \begin{align}
\max\limits_{P_{X}}\min_{\mathcal{B}\in\mathbb{B}}I(X;Y_{[1:L]\setminus\mathcal{B}}|Y_\mathcal{B}).
    \end{align}
\end{corollary}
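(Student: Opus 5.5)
Achievability will come from Corollary~\ref{cor:asymptotic_all_user_access_structure}, and the matching upper bound from a single-letter converse. For achievability, I specialize Corollary~\ref{cor:asymptotic_all_user_access_structure} to a classical channel. All systems then commute, so $\psi$ is a classical joint distribution $P_XW(y_{[1:L]}|x)$ and $H(X|Y_\mathcal{B})$ is the Shannon conditional entropy. For each $\mathcal{B}\in\mathbb{B}$, the chain rule gives
\begin{align}
H(X|Y_\mathcal{B})-H(X|Y_{[1:L]}) = I(X;Y_{[1:L]\setminus\mathcal{B}}|Y_\mathcal{B}),
\end{align}
since $Y_{[1:L]} = (Y_\mathcal{B},Y_{[1:L]\setminus\mathcal{B}})$. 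The second term does not depend on $\mathcal{B}$, so it can be moved inside the minimum. This yields achievability of $\max_{P_X}\min_{\mathcal{B}\in\mathbb{B}} I(X;Y_{[1:L]\setminus\mathcal{B}}|Y_\mathcal{B})$.

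For the converse, I work directly with an asymptotic Fano-type argument rather than the one-shot Theorem~\ref{thm:general_asymptotic_secret_sharing_converse}, because its second-order terms are awkward to single-letterize. Take a sequence of $(2^{nR},n)$-codes with vanishing error and leakage, and fix $\mathcal{B}\in\mathbb{B}$. Since $[1:L]$ is the only authorized set, Fano's inequality gives $H(S|Y^n_{[1:L]})\le n\epsilon_n$. Security in the trace-distance sense, together with continuity of entropy (Alicki--Fannes/Winter in the classical case), gives $I(S;Y^n_\mathcal{B})\le n\epsilon_n$. Hence
\begin{align}
nR = H(S) \le I(S;Y^n_{[1:L]}) - I(S;Y^n_\mathcal{B}) + 2n\epsilon_n = I(S;Y^n_{\mathcal{B}^c}|Y^n_\mathcal{B}) + 2n\epsilon_n,
\end{align}
where $\mathcal{B}^c\triangleq[1:L]\setminus\mathcal{B}$. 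Because $S - X^n - Y^n_{[1:L]}$ is a Markov chain, $I(S;Y^n_{\mathcal{B}^c}|Y^n_\mathcal{B}) \le I(X^n;Y^n_{\mathcal{B}^c}|Y^n_\mathcal{B})$. Since the channel is memoryless, $H(Y^n_{[1:L]}|X^n)=\sum_i H(Y_{[1:L],i}|X_i)$ and $H(Y^n_\mathcal{B}|X^n)=\sum_i H(Y_{\mathcal{B},i}|X_i)$. The conditional mutual information can then be written as
\begin{align}
I(X^n;Y^n_{\mathcal{B}^c}|Y^n_\mathcal{B}) = \big[H(Y^n_{[1:L]})-H(Y^n_\mathcal{B})\big] - \big[H(Y^n_{[1:L]}|X^n)-H(Y^n_\mathcal{B}|X^n)\big].
\end{align}

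The main obstacle is single-letterizing this term: the bound must hold for every $\mathcal{B}$ under a common input distribution. The bracket $H(Y^n_{\mathcal{B}^c}|Y^n_\mathcal{B})$ is at most $\sum_i H(Y_{\mathcal{B}^c,i}|Y_{\mathcal{B},i})$ because conditioning reduces entropy. The conditional term splits exactly across coordinates by memorylessness. Together these give
\begin{align}
I(X^n;Y^n_{\mathcal{B}^c}|Y^n_\mathcal{B}) \le \sum_i I(X_i;Y_{\mathcal{B}^c,i}|Y_{\mathcal{B},i}).
\end{align}
I then introduce a time-sharing variable $Q$ uniform on $[1:n]$ and take $X=X_Q$, which is a single distribution shared by all $\mathcal{B}$. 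This gives $R\le \min_\mathcal{B} I(X;Y_{\mathcal{B}^c}|Y_\mathcal{B},Q)+2\epsilon_n$.

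To remove $Q$, I need $I(X;Y_{\mathcal{B}^c}|Y_\mathcal{B},Q)\le I(X;Y_{\mathcal{B}^c}|Y_\mathcal{B})$. Concavity of conditional mutual information in $P_X$ is not automatic, so I use the Markov chain $Q-X-Y$. Expanding $I(XQ;Y_{\mathcal{B}^c}|Y_\mathcal{B})$ by the chain rule in two ways gives
\begin{align}
I(X;Y_{\mathcal{B}^c}|Y_\mathcal{B}) = I(Q;Y_{\mathcal{B}^c}|Y_\mathcal{B}) + I(X;Y_{\mathcal{B}^c}|Y_\mathcal{B},Q).
\end{align}
The term $I(Q;Y_{\mathcal{B}^c}|Y_\mathcal{B})$ is nonnegative, so the required inequality holds. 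Taking $n\to\infty$ then gives $R\le \min_\mathcal{B} I(X;Y_{\mathcal{B}^c}|Y_\mathcal{B})$ for the single distribution $P_X$, and hence $R\le\max_{P_X}\min_{\mathcal{B}\in\mathbb{B}} I(X;Y_{\mathcal{B}^c}|Y_\mathcal{B})$, which matches the achievability bound.
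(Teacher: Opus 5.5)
Your proof is correct. The achievability half matches the paper's: both specialize Corollary~\ref{cor:asymptotic_all_user_access_structure}, and your identity $H(X|Y_\mathcal{B})-H(X|Y_{[1:L]})=I(X;Y_{[1:L]\setminus\mathcal{B}}|Y_\mathcal{B})$ spells out a step the paper leaves implicit.

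The converse takes a different route. The paper does not argue directly. It observes that each pair $(Y_{[1:L]},Y_\mathcal{B})$ is a physically degraded wiretap channel, since $X-Y_{[1:L]}-Y_\mathcal{B}$. So the problem is a degraded compound wiretap channel with one legitimate receiver, and the paper imports the converse of Liang et al. Its only work is turning trace-distance security into weak secrecy via Csiszár's Lemma~1.

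You instead re-derive that compound converse from scratch:
\begin{itemize}
\item Fano's inequality handles the single authorized set.
\item Alicki--Fannes--Winter replaces Csiszár's lemma. Both work because the bound is a vanishing distance times $\log|\mathcal{S}|=nR$ plus a constant, and does not depend on the eavesdropper's alphabet size $|\mathcal{Y}_\mathcal{B}|^n$.
\item The identity $I(S;Y^n_{[1:L]})-I(S;Y^n_\mathcal{B})=I(S;Y^n_{\mathcal{B}^c}|Y^n_\mathcal{B})$ is exactly where degradedness enters.
\item Memorylessness gives single-letterization.
\item The decomposition $I(X;Y_{\mathcal{B}^c}|Y_\mathcal{B})=I(Q;Y_{\mathcal{B}^c}|Y_\mathcal{B})+I(X;Y_{\mathcal{B}^c}|Y_\mathcal{B},Q)$ is precisely concavity of $I(X;Y_{\mathcal{B}^c}|Y_\mathcal{B})$ in $P_X$. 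This is what lets one time-shared input distribution serve all $\mathcal{B}$ at once, which is the crux of the compound setting.
\end{itemize}

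The paper's route is shorter, but the reader must check that the cited theorem's secrecy criterion and degradedness notion match this setting. Yours is self-contained and shows why the minimum over $\mathcal{B}$ sits inside a single maximum over $P_X$.

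One small wording fix: the time-shared $P_X$ depends on $n$. Bound by $\max_{P_X}\min_{\mathcal{B}\in\mathbb{B}}$ first and then let $n\to\infty$, rather than speaking of a limit ``for the single distribution $P_X$''.
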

\begin{IEEEproof}
    See Appendix~\ref{app:proof_classical_capacity}.
\end{IEEEproof}

This result coincides with the  capacity result for secret sharing over a classical broadcast channel in \cite{zou-2015-ClassicalChannelSS}.

\section{Supporting Results} \label{sec:supporting_results}

In this section, we state supporting results that will be used in the proof of Theorem~\ref{thm:general_one_shot_secret_sharing_ach}.

\subsection{Hash Functions}\label{sec:hash_functions_supporting_results}
We will leverage properties of 2-universal hash functions, including the leftover hash lemma, to  obtain a result for one-shot source coding with compound quantum side information, and to provide uniformity and security guarantees in the proof of Theorem~\ref{thm:general_one_shot_secret_sharing_ach}.

\begin{definition}[2-Universal Hash Family, \cite{carter-1977-UniversalHash}] \label{def:hash_function}
    The family of hash functions $\mathcal{F}$, consisting of hash functions $f: \mathcal{X}\rightarrow \{0,1\}^r$, is 2-universal if $\forall x,x' \in \mathcal{X}, x\neq x' \implies \Pr[F(x)=F(x')]\leq 2^{-r}$, where $F$ is uniformly selected from $\mathcal{F}$.
\end{definition} 

Let $\Gamma\in \big[1:|\mathcal{F}|\big]$ index the hash functions in the 2-universal hash family $\mathcal{F}$, and  define the classical-quantum state
\begin{align}
\rho_{f_\Gamma(X)\Gamma Z} & \triangleq \sum\limits_{u}\sum\limits_{\gamma} \sum\limits_{x} P_{f_\Gamma(X)\Gamma X}(u,\gamma,x)|u\rangle\langle u|\otimes |\gamma\rangle\langle \gamma|\otimes\rho_{Z}^{x},
\end{align}
where $P_{f_\Gamma(X)\Gamma X}$ is the distribution induced by the family of 2-universal hash functions for arbitrary $P_X$ and independent uniformly distributed $\Gamma$, i.e., 
\begin{align}
    P_{f_\Gamma(X)\Gamma X}(u,\gamma, x) = P_\Gamma^U(\gamma)P_X(x)\mathds{1}\{f_\gamma(x)=u\}. \label{eq:probability_distribution_induced_by_the_hash_function}
\end{align}

\begin{lemma}[Leftover Hash Lemma, \cite{tomamichel-2011-LeftoverHashing}]\label{lem:lhl}
For any $\epsilon>0$, $\rho_{f_\Gamma(X)\Gamma Z}$ defined above, and the maximally mixed state $\rho_U$ on $\mathcal{H}_{f_\Gamma(X)}$, we have
    \begin{align}
        \|\rho_{f_\Gamma(X)\Gamma Z} - \rho_U\otimes\rho_{\Gamma Z} \|_1\leq 2 \epsilon + \sqrt{2^{r-H^\epsilon_{\min}(X|Z)_\rho}}.
    \end{align}
\end{lemma}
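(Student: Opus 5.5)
We would prove Lemma~\ref{lem:lhl} in the standard way: first for the unsmoothed quantity with a collision-entropy bound, then pass to the smoothed version by replacing $\rho_{XZ}$ with a nearby optimizer.

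\emph{Step 1 (smoothing reduction).} Let $\tilde\rho_{XZ}\in\mathcal{B}_\epsilon(\rho_{XZ})$ attain $H^\epsilon_{\min}(X|Z)_\rho$. Since $\rho_{XZ}$ is classical on $X$, we may take $\tilde\rho_{XZ}$ classical on $X$ by applying the pinching (dephasing) map in the $X$ basis. This map does not increase purified distance, and it does not decrease $H_{\min}$, because it preserves $\mathds{1}_X\otimes\sigma_Z$. Write $\tilde\rho_{XZ}=\sum_x|x\rangle\langle x|\otimes\tilde\rho^x_Z$, with $\tilde\rho^x_Z$ subnormalized. Apply the same hashing channel $X\mapsto (f_\Gamma(X),\Gamma)$, with $\Gamma$ uniform, to both states. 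By the triangle inequality,
\begin{align}
\|\rho_{f_\Gamma(X)\Gamma Z}-\rho_U\otimes\rho_{\Gamma Z}\|_1 &\le \|\rho_{f_\Gamma(X)\Gamma Z}-\tilde\rho_{f_\Gamma(X)\Gamma Z}\|_1 + \|\tilde\rho_{f_\Gamma(X)\Gamma Z}-\rho_U\otimes\tilde\rho_{\Gamma Z}\|_1 \nonumber\\
&\quad + \|\rho_U\otimes(\tilde\rho_{\Gamma Z}-\rho_{\Gamma Z})\|_1 .
\end{align}
The first and third terms are each at most $\|\rho_{XZ}-\tilde\rho_{XZ}\|_1$, by monotonicity of the trace norm under CPTP maps and partial trace. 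For subnormalized states, $\tfrac12\|\rho-\tilde\rho\|_1\le P(\rho,\tilde\rho)$ (via the generalized trace distance), but this gives only $\|\rho-\tilde\rho\|_1\le 2\epsilon$ per term, i.e.\ $4\epsilon$ in total. To get exactly $2\epsilon$, we will instead compare the first and third terms jointly: conditioned on each $\gamma$, they are the actions of a single map $\rho_{XZ}\mapsto(\rho_{f(X)Z}-\rho_U\otimes\rho_Z)$ on the difference $\rho_{XZ}-\tilde\rho_{XZ}$. We will bound that map's trace-norm contraction carefully, and otherwise follow the convention of \cite{tomamichel-2011-LeftoverHashing}, where the smoothing cost is stated as $2\epsilon$. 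This constant bookkeeping is the main place where care is needed.

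\emph{Step 2 (unsmoothed bound for subnormalized cq states).} Fix $\sigma_Z$ attaining $H_{\min}(X|Z)_{\tilde\rho}=\lambda$, so $\tilde\rho_{XZ}\le 2^{-\lambda}\mathds{1}_X\otimes\sigma_Z$. For each $\gamma$, apply the weighted Hölder/Cauchy--Schwarz inequality
\[
\|A\|_1\le\sqrt{\Tr[\omega]}\,\sqrt{\Tr[(\omega^{-1/4}A\omega^{-1/4})^2]},
\qquad \omega=\mathds{1}_U\otimes\sigma_Z,
\]
to $A=\tilde\rho_{f_\gamma(X)Z}-\rho_U\otimes\tilde\rho_Z$. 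Then average over $\gamma$ and use Jensen's inequality to move the expectation inside the square root. Expanding the square and using 2-universality, the cross terms with $x\neq x'$ contribute at most $2^{-r}\Tr[(\sigma_Z^{-1/4}\tilde\rho_Z\sigma_Z^{-1/4})^2]$, which cancels against the subtracted uniform term. What remains is the diagonal collision term $\sum_x\Tr[(\sigma_Z^{-1/4}\tilde\rho^x_Z\sigma_Z^{-1/4})^2]\le 2^{-\lambda}$, where the last inequality follows from the operator inequality $\tilde\rho^x_Z\le 2^{-\lambda}\sigma_Z$. Since $\Tr[\omega]=2^r$, we obtain the bound $\sqrt{2^{r-\lambda}}$.

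\emph{Step 3 (combine).} Combining Steps 1 and 2 gives $2\epsilon+\sqrt{2^{r-H^\epsilon_{\min}(X|Z)_\rho}}$. The expected obstacle is the cancellation in Step 2 when $\tilde\rho$ is subnormalized. Because $\Tr\tilde\rho\le1$, the cross-term bound still has the correct sign, so the cancellation goes through and yields an upper bound.
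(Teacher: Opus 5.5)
The paper gives no proof of this lemma; it only cites \cite{tomamichel-2011-LeftoverHashing}. Your Step 2 is the standard collision argument and is sound. In fact the cross-term cancellation needs no normalization: $\Tr\tilde\rho_{XZ}\le 1$ enters only the diagonal term, via $\sum_x\Tr[\sigma_Z^{-1/2}\tilde\rho^x_Z\sigma_Z^{-1/2}\tilde\rho^x_Z]\le 2^{-\lambda}\Tr\tilde\rho_{XZ}$.

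The gap is the constant in Step 1, which you defer rather than prove. Your joint comparison is the decomposition $T(\rho_{XZ})=T(\Delta)+T(\tilde\rho_{XZ})$, where $T(\omega_{XZ})\triangleq\omega_{f_\Gamma(X)\Gamma Z}-\rho_U\otimes\omega_{\Gamma Z}$ is linear and $\Delta=\rho_{XZ}-\tilde\rho_{XZ}$. To obtain $2\epsilon$ you would need $\|T(\Delta)\|_1\le\|\Delta\|_1$, and no such contraction holds. Take $\Delta=\epsilon\,|x_0\rangle\langle x_0|\otimes(|0\rangle\langle 0|-|1\rangle\langle 1|)$. Then $T(\Delta)=\epsilon\sum_\gamma P^U_\Gamma(\gamma)\,\big(|f_\gamma(x_0)\rangle\langle f_\gamma(x_0)|-\rho_U\big)\otimes|\gamma\rangle\langle\gamma|\otimes(|0\rangle\langle 0|-|1\rangle\langle 1|)$, so $\|T(\Delta)\|_1=2(1-2^{-r})\|\Delta\|_1$. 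The only thing your argument extracts from $P(\rho,\tilde\rho)\le\epsilon$ is $\|\Delta\|_1\le 2\epsilon$. Any contraction bound on $T$ of this kind therefore ends at $4\epsilon$, up to the factor $1-2^{-r}$, not $2\epsilon$.

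Deferring to the convention of \cite{tomamichel-2011-LeftoverHashing} does not help either. The $2\epsilon$ there bounds the halved distance $\tfrac12\|\rho_{f_\Gamma(X)\Gamma Z}-\rho_U\otimes\rho_{\Gamma Z}\|_1$, together with a $\tfrac12\sqrt{2^{r-H^\epsilon_{\min}(X|Z)_\rho}}$ term and with purified-distance smoothing. In the unnormalized trace norm used in this statement, that result reads $4\epsilon+\sqrt{2^{r-H^\epsilon_{\min}(X|Z)_\rho}}$.

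So what your argument actually establishes, and what the cited result supports, is the lemma with $4\epsilon$ in place of $2\epsilon$. To reach the stated constant you would need one of two things. Either smooth in trace norm, $\|\rho_{XZ}-\tilde\rho_{XZ}\|_1\le\epsilon$, in which case your three-term triangle inequality gives $2\epsilon$ directly. Or find a genuinely new argument that uses the purified-distance constraint beyond the trace-distance bound it implies. The proposal provides neither. In the paper this only changes constants downstream: the $2\epsilon'$ in the uniformity bound becomes $4\epsilon'$, and the $4\epsilon'$ in the security bound becomes $8\epsilon'$. As written, however, the stated constant is not proved.
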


\subsection{Source Coding with Compound Quantum Side Information}\label{sec:source_coding_supporting_results}

Consider the classical quantum state $\psi_{XB^i}=\sum_{x\in\mathcal{X}}P(x)|x\rangle\langle x|\otimes\varphi^x_{B^i}$,   $i\in[1:k]$, where we define $\varphi^x_{B^i}\in \mathcal{S}_=(\mathcal{H}_{B^i})$ to be the density operator of system $B^i$ given $X=x$. The transmitter holds the classical random variable $X\in\mathcal{X}$ and the receiver observes one of the $k$ quantum systems $B^i$. The transmitter encodes $X$ into $C\in\mathcal{C}$, which is noiselessly communicated to the receiver, who attempts to recover $X$. The alphabets $\mathcal{X}$ and $\mathcal{C}$ are assumed to be finite. 

\begin{definition} \label{def:source_code_compound_quantum_side_information}
    A source code with compound quantum side information for  the sequence of classical quantum state $\{\psi_{XB^1},\dots,\psi_{XB^k}\}$ consists of 
    \begin{itemize}
        \item an encoder $g:\mathcal{X}\rightarrow\mathcal{C}$;
        \item decoders $h_i:\mathcal{S}_=(\mathcal{H}_{B^i})\times\mathcal{C}\rightarrow\mathcal{X}$, for each $i\in[1:k]$;
    \end{itemize}
   has probability of error defined by
    \begin{align}
        p_e=\max_{i\in[1:k]} \Pr\Big[h_i\big(\varphi^X_{B^i},g(X)\big)\neq X\Big],
    \end{align}
    and is said to be $\epsilon$-good if $p_e\leq\epsilon$.
\end{definition}

\begin{lemma}\label{lem:one-shot-source-code-compound-quantum-side-information}
    Let $\epsilon>0$. For a sequence of classical-quantum states $\{\psi_{XB^1},\dots,\psi_{XB^k}\}$, there exists an $\epsilon$-good source code with compound quantum side information satisfying 
    \begin{equation}
        \log|\mathcal{C}|=\max\limits_{i\in[1:k]} \Big\lceil H_\textnormal{max}^{\epsilon_1}(X|B^i)_\psi-2\log\epsilon_2 + 2\log{(k+1)}\Big\rceil + 3,
    \end{equation}
    where $\epsilon_1,\epsilon_2\geq0$ and $\epsilon_1+\epsilon_2/(k+1)=\epsilon/(k+1)$.
\end{lemma}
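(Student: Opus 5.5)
We will build the code by random binning with a 2-universal hash family, combined with the one-shot decoding technique for source coding with quantum side information of Renes--Renner. The encoder draws $\Gamma$ uniformly and sends $C=f_\Gamma(X)$ with $r=\log|\mathcal{C}|-(\text{const})$ bits. The decoder for system $B^i$ then has to find the unique $x'$ in the bin $\{x': f_\Gamma(x')=C\}$ that is consistent with $\varphi^x_{B^i}$.

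\emph{Single-side-information step.} For each fixed $i$, the max-entropy decoding bound says the following. There is a measurement built from the optimal smoothed state $\tilde\psi_{XB^i}\in\mathcal{B}_{\epsilon_1}(\psi_{XB^i})$, namely a pretty-good (square-root) measurement on projectors $\Pi^x_{B^i}$ onto the support of $\tilde\varphi^x_{B^i}$. Its error probability, averaged over $\Gamma$, is at most $c\,\epsilon_1+ c'\sqrt{2^{H_{\max}^{\epsilon_1}(X|B^i)_\psi - r}}$ for small constants. The ingredients are the Hayashi--Nagaoka operator inequality $\mathds{1}-(S+T)^{-1/2}S(S+T)^{-1/2}\le 2(\mathds{1}-S)+4T$, 2-universality (each wrong $x'$ collides with probability $\le 2^{-r}$), and the identity $\sum_x \Tr[\tilde\psi_{B}^{x}\Pi^{x'}]\lesssim 2^{H_{\max}}$. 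The rounding $\lceil\cdot\rceil$ and the additive $3$ absorb the constants.

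\emph{Compound step.} The same hash value $C=f_\Gamma(X)$ must serve all $k$ decoders at once. Since decoder $h_i$ only acts on $B^i$, we can choose
\[
r=\max_i\big\lceil H_{\max}^{\epsilon_1}(X|B^i)_\psi-2\log\epsilon_2+2\log(k+1)\big\rceil,
\]
so that for every $i$ the expected error over $\Gamma$ is at most $p_i\le (\epsilon_1+\epsilon_2/(k+1))\cdot$const. We then need one fixed $\gamma$ that is good for all $i$ simultaneously. Apply Markov's inequality to each $i$: $\Pr_\Gamma[p_i(\Gamma)>(k+1)\mathbb{E}p_i]\le 1/(k+1)$. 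A union bound over the $k$ indices leaves probability at least $1/(k+1)>0$ of a $\gamma$ for which every $p_i(\gamma)\le (k+1)\mathbb{E}p_i$. This is where the $(k+1)$ factors come from. The parameters are tuned so that $(k+1)\mathbb{E}p_i\le\epsilon$, i.e.\ $(k+1)\epsilon_1+\epsilon_2=\epsilon$, which is exactly the stated constraint $\epsilon_1+\epsilon_2/(k+1)=\epsilon/(k+1)$. The $2\log(k+1)$ term in $r$ turns the square-root term $\sqrt{2^{H_{\max}-r}}$ into $\epsilon_2/(k+1)$. Finally, fix $g=f_\gamma$ and $h_i$ as the corresponding measurements, with $\log|\mathcal{C}|=r$ plus the constant $3$ coming from the slack in the decoding bound.

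\emph{Main obstacle.} The hard part is the single-side-information bound with explicit constants. One must check that smoothing in purified distance, with a subnormalized $\tilde\psi$, costs only an additive $O(\epsilon_1)$ (via trace distance $\le$ purified distance, applied to the measurement's success probability). One must also verify that the Hayashi--Nagaoka factors and the constants in $2^{H_{\max}}$ together fit inside the ``$+3$'' and the $-2\log\epsilon_2$ term. If the constants do not close, we would cite the Renes--Renner one-shot source-coding lemma directly for each $i$ as a black box, in its hash-averaged form, and carry out only the Markov/union-bound derandomization ourselves.
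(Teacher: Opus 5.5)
Your overall architecture matches the paper's proof. You use a per-$i$, hash-averaged Renes--Renner bound, take the hash length as the maximum over $i$ with the extra $2\log(k+1)$, pay for smoothing additively via trace distance $\le$ purified distance, and then apply Markov plus a union bound over the $k$ systems. That leaves a $1/(k+1)$ fraction of hash functions good for every $i$ at once, which is exactly how the constraint $(k+1)\epsilon_1+\epsilon_2=\epsilon$ arises.

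Your primary sketch goes wrong in the single-side-information step, and this is not a matter of constants. With $\Pi^x_{B^i}$ the support projectors of $\tilde\varphi^x_{B^i}$, the first Hayashi--Nagaoka term vanishes. What remains is $4\cdot 2^{-m}\Tr[\Pi_{XB^i}(\mathds{1}_X\otimes\tilde\varphi_{B^i})]$. This is an $H_0$-type quantity: it enters linearly, with no square root, and is in general strictly larger than $2^{H_{\max}(X|B^i)_{\tilde\psi}}$. Already for trivial $B^i$ it equals the support size of $P_X$, versus $\bigl(\sum_x\sqrt{P(x)}\bigr)^2$. So the claimed relation $\sum_x\Tr[\tilde\varphi_{B}\Pi^{x}]\lesssim 2^{H_{\max}}$ fails, and this route cannot produce the $\sqrt{2^{H_{\max}-m}}$ form that you (correctly) state.

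The paper, following Renes--Renner, instead takes
$P_{XB^i}=\{\psi_{XB^i}-2^{-(m-1)}\mathds{1}_X\otimes\varphi_{B^i}\}_+$.
It then applies Audenaert's inequality
$\Tr[\rho\{\rho-\sigma\}_-+\sigma\{\rho-\sigma\}_+]\le\Tr[\rho^{1/2}\sigma^{1/2}]$,
which gives $\sqrt{8\cdot 2^{-(m-H_{\max}(X|B^i))}}$. This is precisely where the $+3$ and the $-2\log\epsilon_2$ in the statement come from. Your fallback of invoking the hash-averaged Renes--Renner bound for each $i$ is therefore required, not optional. With it, your compound derandomization step completes the proof and coincides with the paper's argument.
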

\begin{IEEEproof}
    See Appendix~\ref{app:proof_of_lem_source_coding_with_compound_quantum_side_information}.
\end{IEEEproof}
    
Our proof extends the one-shot source coding result with quantum side information in \cite[Th.~1]{renes-2012-OneShotCompression}. Building on the argument therein, we derive a source coding result for compound quantum side information. Under the assumption that the set of possible side-information states is finite, we show that a single encoder can be constructed that is reliable for all side-information states.

\subsection{Channel Coding from Source Coding with Compound Quantum Side Information} \label{sec:channel_coding_supporting_results}
Reference \cite{renes-2011-ChannelCoding} has shown that a channel code can be constructed from a protocol for source coding with quantum side information. We establish a similar result for the compound setting, relating channel coding for a compound channel to source coding with compound quantum side information.
\begin{lemma} \label{lem:information_reconcilliation_to_channel_coding}
    Given the classical-quantum channel $\Theta:\mathcal{U}\rightarrow \mathcal{S}_=(\mathcal{H}_{Y_{[1:L]}})$ with almost uniformly distributed input $U\sim \tilde{P}_{U}$, i.e., $\mathbb{V}(\tilde{P}_{U},P_{U}^U)\leq\epsilon'$,  suppose that we have an $\epsilon''$-reliable source code for source $U$ with compound quantum side information $(\rho_{Y_\mathcal{A}}^{U})_{\mathcal{A}\in\mathbb{A}}$, with relationship defined by $\Theta$, consisting of a linear encoder $g:\mathcal{U}\rightarrow\mathcal{C}$ and  decoders $h_\mathcal{A}:\mathcal{S}_=(\mathcal{H}_{Y_\mathcal{A}})\times\mathcal{C}\rightarrow\mathcal{U}$ for all $\mathcal{A}\in\mathbb{A}$.
    Then, there exists a family of encoders and decoders $\big\{\Enc_c,\{\Dec_{c,\mathcal{A}}\}_{\mathcal{A}\in\mathbb{A}}\big\}_{c\in\mathcal{C}}$, where $\Enc_c:\mathcal{S}\rightarrow\mathcal{U}$ and $\Dec_{c,\mathcal{A}}:\mathcal{H}_{Y_\mathcal{A}}\rightarrow\mathcal{S}$ for a set of uniformly distributed secrets $\mathcal{S}$ with $|\mathcal{S}| = |\mathcal{U}|/|\mathcal{C}|$, for which the probability of error averaged over uniform $C$ is bounded by $\epsilon'+\epsilon''$, i.e., 
    \begin{align}
        \max_{\mathcal{A}\in\mathbb{A}}\sum\limits_{c\in\mathcal{C}}\frac{1}{|\mathcal{C}|}\bar{P}_e^{c,\mathcal{A}} \leq \epsilon'+\epsilon'',
    \end{align}
    where $\bar{P}_e^{c,\mathcal{A}}\triangleq \mathbb{E}\Big[\Pr\big[\Dec_{c,\mathcal{A}}\big(\rho_{Y_\mathcal{A}}^{\Enc_c(S)}\big)\neq S\big]\Big]$ and the expectation is taken over $S$.
    
\end{lemma}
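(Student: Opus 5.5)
We follow the construction of \cite{renes-2011-ChannelCoding}: use the linear structure of $g$ to split $\mathcal{U}$ into cosets, and turn the source decoders into channel decoders. Since $g$ is linear (take $\mathcal{U}$ as a vector space over a finite field and $g$ surjective onto $\mathcal{C}$, restricting $\mathcal{C}$ to the image otherwise), the fibers $g^{-1}(c)$, $c\in\mathcal{C}$, are cosets of $\ker g$. Each therefore has exactly $|\mathcal{U}|/|\mathcal{C}|$ elements. Identify $\mathcal{S}$ with $\ker g$ and fix coset representatives $u_c$ with $g(u_c)=c$.

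For each $c\in\mathcal{C}$, define the encoder $\Enc_c(s)\triangleq u_c+s$; this is a bijection from $\mathcal{S}$ onto $g^{-1}(c)$. For each authorized set $\mathcal{A}$, define the decoder $\Dec_{c,\mathcal{A}}(\rho)\triangleq h_\mathcal{A}(\rho,c)-u_c$ (if the output falls outside $g^{-1}(c)$, return an arbitrary element of $\mathcal{S}$, which simply counts as an error). The key observation is that if $S$ is uniform on $\mathcal{S}$ and $C$ is uniform on $\mathcal{C}$, independent of $S$, then $U\triangleq\Enc_C(S)$ is exactly uniform on $\mathcal{U}$ and $g(U)=C$. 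Moreover, $\Dec_{C,\mathcal{A}}(\rho^{U}_{Y_\mathcal{A}})\neq S$ holds exactly when $h_\mathcal{A}(\rho^U_{Y_\mathcal{A}},g(U))\neq U$. Hence, for every $\mathcal{A}$,
\begin{align}
\sum_{c\in\mathcal{C}}\frac{1}{|\mathcal{C}|}\bar{P}_e^{c,\mathcal{A}}=\Pr_{U\sim P_U^U}\big[h_\mathcal{A}(\rho^U_{Y_\mathcal{A}},g(U))\neq U\big].
\end{align}

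It remains to compare this error under the uniform input with the error under $\tilde P_U$, for which the source code is $\epsilon''$-reliable. The error event is a fixed function of $u$ (through the measurement $h_\mathcal{A}$ applied to $\rho^u_{Y_\mathcal{A}}$), so
\begin{align}
\Big|\sum_u (P_U^U(u)-\tilde P_U(u))\,p_{\mathcal{A}}(u)\Big|\le \mathbb{V}(\tilde P_U,P_U^U)\le\epsilon',
\end{align}
where $p_\mathcal{A}(u)\in[0,1]$ is the error probability given $U=u$. This uses $\sum_u|P-Q|\cdot\max p\le\mathbb{V}$, with $\mathbb{V}$ defined without the factor $1/2$. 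Therefore the averaged error is at most $\epsilon''+\epsilon'$ for each $\mathcal{A}$. Taking the maximum over $\mathcal{A}\in\mathbb{A}$ gives the claim, because the same family $\{\Enc_c\}$ serves all decoders simultaneously.

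The main obstacle is the bookkeeping that makes linearity deliver a bijection $\mathcal{S}\to g^{-1}(c)$ with all fibers of equal size. This requires $g$ to be surjective and the alphabets to carry a group structure, so I would state that assumption explicitly (e.g.\ $\mathcal{U}=\{0,1\}^m$, $\mathcal{C}=\{0,1\}^r$). The rest is the exact-uniformity identity and a single variational-distance step.
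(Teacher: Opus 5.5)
Your proposal is correct and follows essentially the same route as the paper. Both split $\mathcal{U}$ into the equal-size fibers $g^{-1}(c)$ via linearity (the paper's preimage lemma implicitly assumes the surjectivity you state explicitly), take $\Enc_c$ to be a bijection from $\mathcal{S}$ onto $g^{-1}(c)$ with $\Dec_{c,\mathcal{A}}=\Enc_c^{-1}\circ h_{\mathcal{A}}(\cdot,c)$, identify the $C$-averaged channel error with the source-code error under uniform $U$, and pay $\mathbb{V}(\tilde P_U,P_U^U)\le\epsilon'$ to pass to $\tilde P_U$. Your coset-translation bijection and your handling of decoder outputs outside $g^{-1}(c)$ are cosmetic refinements; with that convention the channel error event is contained in the source error event rather than equal to it, which still gives the required inequality.
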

\begin{IEEEproof}
    See Appendix~\ref{app:proof_of_information_reconcilliation_to_channel_coding}.
\end{IEEEproof}

\section{Proof of Theorem~\ref{thm:general_one_shot_secret_sharing_ach}}\label{sec:proof_of_main_achievability_result}

\subsection{Overview}\label{sec:proof_overview}
The construction of the coding scheme follows the structure of \cite{renes-2011-ChannelCoding}, and is inspired by random binning \cite{yassaee-2014-OSRB,csiszar-1996-AlmostIndependence}, which leverages source coding with side information and privacy amplification via hash functions. The key insight is that a source coding protocol can be used as a building block for a channel coding scheme (Lemma~\ref{lem:information_reconcilliation_to_channel_coding}), provided the channel input is uniformly distributed. Since the optimal input distribution is generally not uniform, we create a function, referred to as {\it shaper}, using  2-universal hash functions, which maps an almost uniform random variable to one with an arbitrary distribution. The leftover hash lemma (Lemma~\ref{lem:lhl}) is the key tool to both approximate the desired input distribution from a uniform distribution \textit{and} secure the secret.  Defining a virtual channel $W'$ as the composition of the shaper and the channel $W$, gives us a channel with almost uniform input. We then design a source coding protocol for the virtual channel $W'$ (Lemma~\ref{lem:one-shot-source-code-compound-quantum-side-information}),  which is then used to construct a channel coding scheme for $W$. 

\subsection{Coding Scheme}\label{sec:coding_scheme}

First, we create a shaper that will take an almost uniformly distributed random variable as input and outputs a random variable distributed according to $P_X$. We do this using a family of 2-universal hash functions with function index $\Gamma$, input $X\sim P_X$, and output $U\in\mathcal{U}$. The hash function induces the distribution \eqref{eq:probability_distribution_induced_by_the_hash_function} from which we obtain the conditional distribution
\begin{align}
    P_{X|f_\Gamma(X)\Gamma}(x|u,\gamma) = \frac{P_{f_\Gamma(X)\Gamma X}(u,\gamma,x)}{\sum\limits_{x\in\mathcal{X}}P_{f_\Gamma(X)\Gamma X}(u,\gamma,x)},\label{eq:shaper_conditional_distribution}
\end{align}
and $U\sim P_{f_\Gamma(X)}$ with $P_{f_\Gamma(X)}(u)=\sum_{x}\sum_\gamma P_{f_\Gamma(X)\Gamma X}(u,\gamma,x)$.  
We define the shaper $\Shp: \mathcal{U}\times \Gamma\rightarrow \mathcal{X}$ by the conditional distribution \eqref{eq:shaper_conditional_distribution} such that the shaper induces the joint distribution
\begin{align}
    \tilde{P}_{U\Gamma X}(u,\gamma,x) =P_{f_\Gamma(X)}(u)P^U(\gamma)P_{X|f_\Gamma(X)\Gamma}(x|u,\gamma). \label{eq:shaper_distribution}
\end{align}
We then define the virtual channel $W'$ as the concatenation of the shaper and the classical quantum channel $W$, i.e., $W'(U,\gamma)=W(\Shp(U,\Gamma))$. Since the hash function index $\Gamma$ is independent of the virtual channel input $U$ and is known to all parties, we consider it to be part of the channel and omit it from the arguments of $W'$ and write $W'(U)$.

Then, using Lemma~\ref{lem:one-shot-source-code-compound-quantum-side-information}, we obtain an $\epsilon'$-good source code with compound quantum side information for the sequence of classical-quantum states $\{\psi_{U\mathcal{A}}\}_{\mathcal{A}\in\mathbb{A}}$, where 
\begin{align}
    \psi_{U\mathcal{A}}=\sum_{u\in\mathcal{U}}\sum_{\gamma\in\mathit{\Gamma}}\sum_{x\in\mathcal{X}}P^U(u)P^U(\gamma)P_{X|f_\Gamma(X)\Gamma}(x|u,\gamma)|u\rangle\langle u| \otimes \varphi^x_\mathcal{A},
\end{align}
and $\varphi_\mathcal{A}^x$ is the joint state received by the users in $\mathcal{A}$ when the input to $W$ is $x$.
Using Lemma~\ref{lem:information_reconcilliation_to_channel_coding}, we then construct a family of encoders/decoders $\big\{\Enc_c,\{\Dec_{c,\mathcal{A}}\}_{\mathcal{A}\in\mathbb{A}}\big\}_{c\in\mathcal{C}}$, where $\Enc_c:\mathcal{S}\rightarrow\mathcal{U}$ and $\Dec_{c,\mathcal{A}}:\mathcal{H}_{Y_\mathcal{A}}\rightarrow\mathcal{S}$, using the source code with compound quantum side information for the virtual channel $W'$.

In Section~\ref{sec:encoder_decoder_selection}, a specific value $c^*\in\mathcal{C}$ is selected to satisfy the reliability and security requirements, such that we obtain the encoder and decoders for the channel $W$ by defining $\Enc(S)=\Shp(\Gamma,\Enc_{c^*}(S))$, where $\Gamma$ is selected uniformly and independent from $S$, and considering the decoders $\{\Dec_{c^*,\mathcal{A}}\}_{\mathcal{A}\in\mathbb{A}}$.

\subsection{Analysis}
In this section, we find the rate conditions that guarantee the reliability \eqref{eq:definition_nonasymptotic_reliability_condition} and security \eqref{eq:definition_non_asymptotic_security_condition} of the coding scheme described in Subsection~\ref{sec:coding_scheme}, and deduce an achievable rate.

\subsubsection{Reliability}
By Lemma~\ref{lem:one-shot-source-code-compound-quantum-side-information}, the  $\epsilon'$-good source code with compound quantum side information  for the sequence of classical-quantum states $\{\psi_{U\mathcal{A}}\}_{\mathcal{A}\in\mathbb{A}}$  can be chosen such that the compression alphabet $\mathcal{C}$ satisfies
\begin{align}
    \log|\mathcal{C}|=\max\limits_{\mathcal{A}\in\mathbb{A}} \Big\lceil H_\text{max}^{\epsilon_1}\big(U|Y_\mathcal{A}\big)_\psi-2\log\epsilon_2 + 2\log{(|\mathbb{A}|+1)}\Big\rceil + 3, \label{eq:compression_alphabet_lower_limit}
\end{align}
where $\epsilon_1,\epsilon_2>0$ and $\epsilon_1(|\mathbb{A}|+1)+\epsilon_2 = \epsilon'$. 
Then, we bound the distance between the uniform distribution and the shaper input distribution $P_{f_\Gamma(X)}$ as
\begin{align}
    \mathbb{V}(P_{f_\Gamma(X)},P^U_{U}) & \overset{(a)}{\leq} \mathbb{V}\big(P_{f_\Gamma(X)\Gamma X},P_U^UP_{\Gamma}^UP_{X}\big)\\
    & \overset{(b)}{\leq} 2\epsilon'+\sqrt{2^{\log|\mathcal{U}|-H^{\epsilon'}_\text{min}(X)_\psi}},
\end{align}
where $(a)$ is due to Lemma~\ref{lem:variational_distance_marginal_smaller} in Appendix \ref{app_I} and $(b)$ is due to  Lemma~\ref{lem:lhl}, which is small if 
\begin{align}
    \log|U| < H^{\epsilon'}_\text{min}(X)_\psi. \label{eq:shaper_uniformity_condition}
\end{align}

Finally, we can evaluate the average reliability of the family of encoder and decoders constructed by Lemma~\ref{lem:information_reconcilliation_to_channel_coding} for the channel $W'$.
Specifically, $\big\{\Enc_c,\{\Dec_{c,\mathcal{A}}\}_{\mathcal{A}\in\mathbb{A}}\big\}_{c\in\mathcal{C}}$ created in Lemma~\ref{lem:information_reconcilliation_to_channel_coding} satisfy
\begin{align}
    \max_{\mathcal{A}\in\mathbb{A}}\sum\limits_{c\in\mathcal{C}}\frac{1}{|\mathcal{C}|}\bar{P}_e^{c,\mathcal{A}} \leq 3\epsilon'+\sqrt{2^{\log|\mathcal{U}|-H^{\epsilon'}_\text{min}(X)_\psi}},\label{eq:probability_of_error_wprime}
\end{align}
for some set of secrets $\mathcal{S}$ satisfying
\begin{align}
    |\mathcal{S}|=|\mathcal{U}|/|\mathcal{C}|.    \label{eq:alphabet_constraints}
\end{align}

\subsubsection{Security}
We now find conditions under which the security constraints are satisfied for an encoder  selected uniformly at random from $\{\Enc_c\}_{c\in\mathcal{C}}$ . The following holds for each unauthorized set $\mathcal{B}\in\mathbb{B}$. The  classical quantum state induced by the channel coding protocol presented above for a uniformly selected encoder is 
\begin{align}
    \bar{\rho}_{SU\Gamma Y_\mathcal{B}} = |\mathcal{C}|^{-1} \sum_c \tilde{\rho}^c_{SU\Gamma Y_\mathcal{B}},
\end{align}
where 
\begin{align}
    \bar{\rho}^c_{SU\Gamma Y_\mathcal{B}}\triangleq\sum\limits_{s}\sum\limits_{u}\sum\limits_{\gamma}\sum\limits_{x}\bar{P}_{SU\Gamma X|C}(s,u,\gamma,x|c)|s\rangle\langle s|\otimes|u\rangle\langle u|\otimes |\gamma\rangle\langle \gamma|\otimes\rho_{Y_\mathcal{B}}^{x},
\end{align}
with 
\begin{align}
    \bar{P}_{SU\Gamma X|C}(s,u,\gamma,x|c) 
    \triangleq P^U_S(s)P_{U|C}(u|c)P^U_\Gamma(\gamma)P_{X|f_\Gamma(X)\Gamma}(x|u,\gamma).
\end{align}

We note that by the construction of the encoders,
\begin{align}
    \bar{P}_{SU\Gamma X}(s,u,\gamma,x) & =|\mathcal{C}|^{-1}\sum_c \bar{P}_{SU\Gamma X|C}(s,u,\gamma,x|c)\\
    & =P^U_S(s)P^U_{U}(u)P^U_\Gamma(\gamma)P_{X|f_\Gamma(X)\Gamma}(x|u,\gamma).
\end{align}

We also define the classical-quantum state induced by the family of hash functions
\begin{align}
&\rho_{f_\Gamma(X)\Gamma Y_\mathcal{B}} \triangleq \sum\limits_{u}\sum\limits_{\gamma} \sum\limits_{x} P_{f_\Gamma(X)\Gamma X}(u,\gamma,x)|u\rangle\langle u|\otimes|\gamma\rangle\langle \gamma|\otimes\rho_{Y_\mathcal{B}}^{x}.
\end{align}
For the maximally mixed states $\rho_S$ and $\rho_U$, we have
\begin{align}
    \| \bar{\rho}_{SY_\mathcal{B}} -\rho_S\otimes\rho_{Y_\mathcal{B}} \|_1 &\overset{(a)}{\leq} \| \bar{\rho}_{SU\Gamma Y_\mathcal{B}} -\rho_S\otimes\rho_{U}\otimes\rho_{\Gamma Y_\mathcal{B}} \|_1\\
    & \overset{(b)}{=} \| \bar{\rho}_{U\Gamma Y_\mathcal{B}} -\rho_U\otimes\rho_{\Gamma Y_\mathcal{B}} \|_1\\
    & \overset{(c)}{\leq} \|\bar{\rho}_{U\Gamma Y_\mathcal{B}} -\rho_{f_\Gamma(X)\Gamma Y_\mathcal{B}} \|_1 + \|\rho_{f_\Gamma(X)\Gamma Y_\mathcal{B}} - \rho_U\otimes\rho_{\Gamma Y_\mathcal{B}} \|_1\\
    & \overset{(d)}{\leq} \mathbb{V}\big(\bar{P}_{U\Gamma X},P_{f_\Gamma(X)\Gamma X}\big) + \|\rho_{f_\Gamma(X)\Gamma Y_\mathcal{B}} - \rho_U\otimes\rho_{\Gamma Y_\mathcal{B}} \|_1\\
    & \overset{(e)}{=} \mathbb{V}\big(P_U^UP^U_\Gamma,P_{f_\Gamma(X)\Gamma}\big) + \|\rho_{f_\Gamma(X)\Gamma Y_\mathcal{B}} - \rho_U\otimes\rho_{\Gamma Y_\mathcal{B}} \|_1\\
    & \overset{(f)}{\leq} 2\|\rho_{f_\Gamma(X)\Gamma Y_\mathcal{B}} - \rho_U\otimes\rho_{\Gamma Y_\mathcal{B}} \|_1\\
    & \overset{(g)}{<} 4\epsilon' + 2\sqrt{2^{\log|\mathcal{U}|-H^{\epsilon'}_\text{min}(U|Y_\mathcal{B})_\psi}},\label{eq:security_constraint_intermediate}
\end{align}
where $(a)$ holds by the monotonicity of the trace distance (Lemma~\ref{lem:monotonicity_of_trace_distance} in Appendix \ref{app_I}), $(b)$ follows from Lemma~\ref{lem:quantum_state_dependence_on_classical_states} in Appendix \ref{app_H}, $(c)$ holds by the triangle inequality, $(d)$ follows from the strong convexity of the trace distance \cite[(9.72)]{wilde-2011-book}, $(e)$ holds by definition of $\bar{P}_{U\Gamma}$ and properties of variational distance (Lemma~\ref{lem:vaiational_distance_same_channel} in Appendix \ref{app_I}), $(f)$ holds since $\mathbb{V}\big(P_U^UP^U_\Gamma,P_{f_\Gamma(X)\Gamma}\big)\leq \|\rho_{f_\Gamma(X)\Gamma} - \rho_U\otimes\rho_{\Gamma} \|_1\leq \|\rho_{f_\Gamma(X)\Gamma Y_\mathcal{B}} - \rho_U\otimes\rho_{\Gamma Y_\mathcal{B}} \|_1$ by the monotonicity of the trace distance, and $(g)$ holds by Lemma~\ref{lem:lhl}.  

Then, a randomly selected encoder will be secure against the users comprising unauthorized set $\mathcal{B}$ when 
\begin{align}
    \log|\mathcal{U}| < H^{\epsilon'}_\text{min}(X|Y_\mathcal{B})_\psi. \label{eq:security_sufficient_condition_single_B}
\end{align}

\subsubsection{Achievable Rate}
We now find the achievable secret sharing rate based on the conditions derived above. We first consider the conditions \eqref{eq:shaper_uniformity_condition} and \eqref{eq:security_sufficient_condition_single_B} on the input $U$ to the shaper. By \cite[Lemma 3.2.7]{renner-2008-dissertation}, we have
\begin{align}
    H^{\epsilon'}_\text{min}(X|Y_\mathcal{B})_\psi \leq H^{\epsilon'}_\text{min}(X)_\psi,\label{eq:secrecy_redundant_with_uniform_u_condition}
\end{align} 
for all $\mathcal{B}\in\mathbb{B}$. Thus, \eqref{eq:shaper_uniformity_condition} is redundant if \eqref{eq:secrecy_redundant_with_uniform_u_condition} is satisfied for any $\mathcal{B}\in\mathbb{B}$, allowing us to jointly represent all conditions on the shaper input alphabet size  by 
\begin{align}
    \log|\mathcal{U}| < \min_{\mathcal{B}\in\mathbb{B}}H^{\epsilon'}_\text{min}(X|Y_\mathcal{B})_\psi.\label{eq:RU_constraint_ineq}
\end{align}
We select
\begin{align}
    \log|\mathcal{U}| = \min_{\mathcal{B}\in\mathbb{B}}H^{\epsilon'}_\text{min}(X|Y_\mathcal{B})_\psi - \delta-2, \label{eq:u_alphabet_selection}
\end{align}
for some $\delta>0$.

Then, using \eqref{eq:alphabet_constraints}, we have
\begin{align}
    |\mathcal{S}| & = |\mathcal{U}|/|\mathcal{C}|\\
    R &\overset{(a)}{=} \log|\mathcal{U}| - \log|\mathcal{C}| \\
    &\overset{(b)}{\geq}\min_{\mathcal{B\in\mathbb{B}}}H^{\epsilon'}_\text{min}(X|Y_\mathcal{B})_\psi - \max_{\mathcal{A}\in\mathbb{A}} H_\text{max}^{\epsilon_1}\big(U|Y_\mathcal{A}\big)_\psi - \delta -2\log\frac{1}{\epsilon_2} - 2\log{(|\mathbb{A}|+1)} - 6\\
    & \overset{(c)}{\geq} \min_{\mathcal{B\in\mathbb{B}}}H^{\epsilon'}_\text{min}(X|Y_\mathcal{B})_\psi - \max_{\mathcal{A}\in\mathbb{A}} H_\text{max}^{\epsilon_1}\big(X|Y_\mathcal{A}\big)_\psi - \delta -2\log\frac{1}{\epsilon_2} - 2\log{(|\mathbb{A}|+1)} - 6,
\end{align}
where $(a)$ holds by taking the log of both sides, $(b)$ holds by our selection of $\log|\mathcal{U}|$ in \eqref{eq:u_alphabet_selection} and the bound on the compression alphabet size \eqref{eq:compression_alphabet_lower_limit}, and $(c)$ holds since $U$ is a function of $X$. Since the above relation holds for all channel input distributions $P_X$, we have
\begin{align} \label{eq:main_result_final_rate_constraint_no_secrecy}
    R \geq \max_{P_X}\Big[\min_{\mathcal{B\in\mathbb{B}}}H^{\epsilon'}_\text{min}(X|Y_\mathcal{B})_\psi - \max_{\mathcal{A}\in\mathbb{A}} H_\text{max}^{\epsilon_1}\big(X|Y_\mathcal{A}\big)_\psi - \delta -2\log\frac{1}{\epsilon_2} - 2\log{(|\mathbb{A}|\!+\!1)} - 6\Big].
\end{align}

\subsubsection{Encoder/Decoders Selection} \label{sec:encoder_decoder_selection}
Up to this point, we have been working with the reliability and security constraints averaged over the family of encoders/decoders. We now select an encoder and set of decoders and derive the resulting bounds on reliability and security.

For the reliability, \eqref{eq:u_alphabet_selection} applied to \eqref{eq:probability_of_error_wprime} gives 
\begin{align}
    \max_{\mathcal{A}\in\mathbb{A}}\sum\limits_{c\in\mathcal{C}}|\mathcal{C}|^{-1}\bar{P}_e^{c,\mathcal{A}} \leq 3\epsilon'+2^{-\delta/2-1}. \label{eq:averaged_relibility_bound}
\end{align}

For the security, by \eqref{eq:security_constraint_intermediate}  and  our selection of $\log|\mathcal{U}|$ in \eqref{eq:u_alphabet_selection}, we have 
\begin{align}
    \max_{\mathcal{B}\in\mathbb{B}} \| \bar{\rho}_{SY_\mathcal{B}} -\rho_S\otimes\rho_{Y_\mathcal{B}} \|_1 < 4\epsilon' + 2^{-\delta/2}. \label{eq:averaged_sec_bound}
\end{align}

Then, we have that 
\begin{align}
    & \sum\limits_{c\in\mathcal{C}}|\mathcal{C}|^{-1}\Big(\max_{\mathcal{A}\in\mathbb{A}}\bar{P}_e^{c,\mathcal{A}} + \max_{\mathcal{B}\in\mathbb{B}}\| \bar{\rho}^c_{SY_\mathcal{B}} -\rho_S\otimes\rho_{Y_\mathcal{B}} \|_1\Big)\nonumber\\ 
    &\qquad= \sum\limits_{c\in\mathcal{C}}|\mathcal{C}|^{-1}\max_{\mathcal{A}\in\mathbb{A}}\bar{P}_e^{c,\mathcal{A}} + \sum_{c\in\mathcal{C}}|\mathcal{C}|^{-1}\max_{\mathcal{B}\in\mathbb{B}}\| \bar{\rho}^c_{SY_\mathcal{B}} -\rho_S\otimes\rho_{Y_\mathcal{B}} \|_1\\
    & \qquad \leq \sum_{\mathcal{A}\in\mathbb{A}}\sum\limits_{c\in\mathcal{C}}|\mathcal{C}|^{-1}\bar{P}_e^{c,\mathcal{A}} + \sum_{\mathcal{B}\in\mathbb{B}}\sum_{c\in\mathcal{C}}|\mathcal{C}|^{-1}\| \bar{\rho}^c_{SY_\mathcal{B}} -\rho_S\otimes\rho_{Y_\mathcal{B}} \|_1\\
    &\qquad \leq |\mathbb{A}|\max_{\mathcal{A}\in\mathbb{A}}\sum\limits_{c\in\mathcal{C}}|\mathcal{C}|^{-1}\bar{P}_e^{c,\mathcal{A}} + |\mathbb{B}|\max_{\mathcal{B}\in\mathbb{B}}\sum_{c\in\mathcal{C}}|\mathcal{C}|^{-1}\| \bar{\rho}^c_{SY_\mathcal{B}} -\rho_S\otimes\rho_{Y_\mathcal{B}} \|_1\\
    & \qquad  \leq  |\mathbb{A}|(3\epsilon'+2^{-\delta/2-1}) + |\mathbb{B}|(4\epsilon' + 2^{-\delta/2}), \label{eq:averaged_bound_together}
\end{align}
where the last inequality follows from  \eqref{eq:averaged_sec_bound} and \eqref{eq:averaged_relibility_bound}. It follows from \eqref{eq:averaged_bound_together} that there exists  $c^*\in\mathcal{C}$ such that 
\begin{align}
    \max_{\mathcal{A}\in\mathbb{A}}\bar{P}_e^{c^*,\mathcal{A}} &\leq |\mathbb{A}|(3\epsilon'+2^{-\delta/2-1}) + |\mathbb{B}|(4\epsilon' + 2^{-\delta/2}), \label{eq:average_probability_of_error_combined_upperbound}\\
    \max_{\mathcal{B}\in\mathbb{B}}\| \tilde{\rho}_{SY_\mathcal{B}} -\rho_S\otimes\rho_{Y_\mathcal{B}} \|_1 &\leq |\mathbb{A}|(3\epsilon'+2^{-\delta/2-1}) + |\mathbb{B}|(4\epsilon' + 2^{-\delta/2})\label{eq:average_security_combined_upperbound},
\end{align}
where $\tilde{\rho}_{SY_\mathcal{B}}=\bar{\rho}^{c^*}_{SY_\mathcal{B}}$.

Letting $\Enc=\Enc_{c^*}$ and $\Dec_\mathcal{A}=\Dec_{c^*,\mathcal{A}}$ for each $\mathcal{A}\in\mathbb{A}$ gives the result.


\section{Proof of Theorem~\ref{thm:general_asymptotic_secret_sharing_converse}}\label{sec:proof_of_converse}

We follow the approach of \cite{renes-2011-ChannelCoding,qi-2018-PrivateQBrodcast} by proving a converse for a compound secret key distribution task, where the transmitter communicates over a noisy compound quantum channel with a legitimate user in the presence of an eavesdropper, and  each authorized set $\mathcal{A} \in \mathbb{A}$ specifies a channel to the legitimate receiver and each unauthorized set $\mathcal{B} \in \mathbb{B}$ specifies a channel to the eavesdropper. 

In this secret key distribution task, the transmitter first prepares the maximally correlated state
\begin{align}
    \bar{\Phi}_{SS'} = \frac{1}{|\mathcal{S}|} \sum_{s } |s\rangle\langle s|_S \otimes |s\rangle\langle s|_{S'},
\end{align}
then for an encoder that maps the reference system $S'$ to a channel input $X$ according to $P_{X|S'}$, and transmission over the channel indexed by $\mathcal{D} \in \mathbb{A} \cup \mathbb{B}$, the induced state  is
\begin{align}
\psi_{SY_{\mathcal{D}}}
    =
    \frac{1}{|\mathcal{S}|} \sum_{s } \sum_{x } P_{X|S'}(x|s)\,
    |s\rangle\langle s|_S  \otimes \rho^{x}_{Y_{\mathcal{D}}}.
\end{align}
After the decoding procedure, the imperfect shared randomness between the transmitter and the receiver is described by
\begin{align}
    \sigma^\mathcal{A}_{SS'} = \frac{1}{|\mathcal{S}|} \sum_{s } \sum_{s' } P_{\mathcal{A}}(s'|s)\, |s\rangle\langle s|_S \otimes |s'\rangle\langle s'|_{S'},
\end{align}
where $P_{\mathcal{A}}(s'|s)$ is the probability that the receiver decodes $s'$ given that $s$ was sent by the transmitter over the channel indexed by $\mathcal{A}$.
Then, we have
\begin{align}
\left\|\sigma^\mathcal{A}_{SS'}-\bar{\Phi}_{SS'}\right\|_1 & =\sum_s\sum_{s'}\left|P^U(s)P_\mathcal{A}(s'|s)-P^U(s)\mathbf{1}\{s=s'\}\right|\\
    & = \frac{1}{|\mathcal{S}|} \sum_s\sum_{s'}\left|P_\mathcal{A}(s'|s)-\mathbf{1}\{s=s'\}\right|\\
    & = \frac{1}{|\mathcal{S}|} \sum_s\sum_{s'\neq s}P_\mathcal{A}(s'|s) + \frac{1}{|\mathcal{S}|} \sum_s\left|P_\mathcal{A}(s|s)-1\right|\\
    & = \frac{1}{|\mathcal{S}|}\sum_s P_e^{\mathcal{A}}(s) + \frac{1}{|\mathcal{S}|} \sum_s\left(1-P_\mathcal{A}(s|s)\right)\\
    & = \frac{2}{|\mathcal{S}|}\sum_s P_e^{\mathcal{A}}(s)\\
    & \leq 2\epsilon, \label{eq:converse_reliability_trace_distance_bound}
\end{align}
where the inequality holds for an $\epsilon$-good code.

Then, we have 
\begin{align}
    H_{\max}^{\sqrt{2\epsilon}}(S|Y_{\mathcal{A}})_\psi & \overset{(a)}{\leq} H_{\max}^{\sqrt{2\epsilon}}(S|S')_{\sigma^\mathcal{A}} \overset{(b)}{\leq} H_{\max}(S|S')_{\bar{\Phi}} \overset{(c)}{=} 0, \label{eq:converse_inf_rec}
\end{align}
where $(a)$ follows from the data processing inequality, see Lemma~\ref{lem:max_entropy_data_processing} in Appendix \ref{app_I}, and $(b)$ follows from the definition of the smooth max entropy since $\bar{\Phi}_{SS'}\in \mathcal{B}_{\sqrt{2\epsilon}}(\sigma_{SS'}^\mathcal{A})$, which follows from \eqref{eq:converse_reliability_trace_distance_bound} and the bound on purification distance, see Lemma~\ref{lem:purification_trace_distance_relation} in Appendix \ref{app_I}, and $(c)$ follows from the definition of $\bar{\Phi}$.

Then, for an unauthorized set $\mathcal{B}$, the security condition of an $\epsilon$-good code gives
\begin{align}
    \left\|\psi_{SY_{\mathcal{B}}}-\psi_S\otimes\sigma_{Y_\mathcal{B}}\right\|\leq \epsilon,
\end{align}
which   implies, by Lemma~\ref{lem:purification_trace_distance_relation},  that $\psi_S\otimes\sigma_{Y_\mathcal{B}}\in \mathcal{B}_{\sqrt{2\epsilon}}(\psi_{SY_{\mathcal{B}}})$. This gives, 
\begin{align}
    H_{\min}^{\sqrt{\epsilon}}(S|Y_{\mathcal{B}})_\psi & \overset{(a)}{\geq} H_{\min}(S|Y_{\mathcal{B}})_{\psi_S\otimes\sigma_{Y_\mathcal{B}}} \overset{(b)}{=} H_{\min}(S)_\psi = \log_2|\mathcal{S}|=R, \label{eq:converse_privacy_amp}
\end{align}
where $(a)$ follows from the definition of the smooth min entropy and the fact that $\psi_S\otimes\sigma_{Y_\mathcal{B}}\in \mathcal{B}_{\sqrt{2\epsilon}}(\psi_{SY_{\mathcal{B}}})$, and $(b)$ follows since $\psi_S\otimes\sigma_{Y_\mathcal{B}}$ is a product state.

As \eqref{eq:converse_inf_rec} and \eqref{eq:converse_privacy_amp} hold for all $\mathcal{A}\in\mathbb{A}$ and $\mathcal{B}\in\mathbb{B}$, we have
\begin{align}
    R &\leq \min_{\mathcal{B}\in\mathbb{B}}H_{\min}^{\sqrt{\epsilon}}(S|Y_\mathcal{B})_\psi - \max_{\mathcal{A}\in\mathbb{A}}H_{\max}^{\sqrt{2\epsilon}}(S|Y_\mathcal{A})_\psi\\
    & \leq \max_{P_{SX}}\left[\min_{\mathcal{B}\in\mathbb{B}}H_{\min}^{\sqrt{\epsilon}}(S|Y_\mathcal{B})_\psi - \max_{\mathcal{A}\in\mathbb{A}}H_{\max}^{\sqrt{2\epsilon}}(S|Y_\mathcal{A})_\psi\right]. \label{eq:converse_rate_bound_intermediate}
\end{align}

\section{Conclusion}\label{sec:conclusion}

We proved one-shot achievable rates and converse bounds for secret sharing with general monotone access structures over a classical-quantum broadcast channel. We also presented a second order expansion of our  achievable secret sharing rate  and used it to derive an asymptotically achievable secret sharing~rate.  In the special case of a classical channel where all shares are required for reconstruction, our results recover the previously established secret-sharing capacity.

Future work includes extending this work to entanglement assisted secret sharing and to the sharing of quantum secrets using a quantum channel.

\appendices
\section{Proof of Theorem~\ref{thm:general_second_order_asymptotics}} \label{app:second_order_asymptotic_proof}

We first review the notions of hypothesis testing relative entropy and smooth conditional hypothesis testing entropy \cite{tomamichel-2013-SecondOrder}. Then, we present several lemmas including second order expansions of the smooth conditional hypothesis testing relative entropy and the smooth conditional max entropy. Finally, we use these results to derive the second-order expansion for secret sharing.

Let $\rho\in\mathcal{S}_=(\mathcal{H}), \sigma\in\mathcal{P}(\mathcal{H})$ and $0\leq\epsilon\leq1$. The $\epsilon$-hypothesis testing relative entropy \cite{tomamichel-2013-SecondOrder} is defined as  $2^{-D_h^\epsilon(\rho\|\sigma)}\triangleq\inf\{\langle Q,\sigma\rangle | 0\leq Q\leq 1 \wedge \langle Q,\rho\rangle \geq 1-\epsilon\}$, where $\langle L,M\rangle=\Tr(L^\dagger M)$ is the Hilbert-Schmidt inner product. Then, for $\rho_{AB}\in\mathcal{S}_=(\mathcal{H}_{AB}), \sigma_B\in\mathcal{S}_=(\mathcal{H}_B)$, and $0\leq\epsilon\leq1$, the conditional $\epsilon$-hypothesis testing entropies \cite{tomamichel-2013-SecondOrder} are defined as $H_h^\epsilon(A|B)_{\rho|\sigma}\triangleq-D_h^\epsilon(\rho_{AB}\|\mathds{1}_A\otimes\sigma_B)$ and $H_h^\epsilon(A|B)_\rho\triangleq \max_\sigma H_h^\epsilon(A|B)_{\rho|\sigma}$.   Additionally, $2^{H_h^\epsilon(A|B)_\rho}$ is equal to  \cite[Section~II-C]{tomamichel-2013-SecondOrder}  
\begin{align}
      \label{eq:primal_hypothesis_testing_entropy_optimization}\min_{\substack{0 \le Q_{AB} \le 1 \\ \operatorname{Tr}[Q_{AB} \rho_{AB}] \ge 1-\epsilon}} \|Q_B\|.
\end{align}

We will first present a slightly modified version of Theorem~\ref{thm:general_one_shot_secret_sharing_ach} by characterizing the compression rate in terms of the smooth conditional hypothesis testing entropy in place of the smooth conditional max entropy. 

\begin{lemma}\label{lem:compound_source_coding_hypothesis_testing_entropy}
        Let $\epsilon>0$ be given. For a sequence of classical-quantum states $\{\psi_{XB^1},\dots,\psi_{XB^k}\}$, there exists an $\epsilon$-reliable source code with compound quantum side information satisfying 
    \begin{equation}
        \log|\mathcal{C}|=\max\limits_{i\in[1:k]}  \left\lceil H_h^{\epsilon_1-\eta}(X|B^i)_{\psi_{XB^i}|\psi_{XB^i}}+\log\frac{\epsilon_1}{\eta^2}\right\rceil + 2,
    \end{equation}
    where $\epsilon_1\geq \eta>0$ and $\epsilon_1=\epsilon/(k+1)$.
\end{lemma}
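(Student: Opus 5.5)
We follow the structure of the one-shot source coding argument of Renes and Renner \cite[Th.~1]{renes-2012-OneShotCompression}, which is also the route used for Lemma~\ref{lem:one-shot-source-code-compound-quantum-side-information}. The difference is that we stop at the hypothesis-testing step instead of converting to a max-entropy bound. The encoder is a random 2-universal hash $g=f_\Gamma:\mathcal{X}\rightarrow\mathcal{C}$, and the same hash is shared by all $k$ side-information systems. For each $i\in[1:k]$, we take an optimal test $Q^i_{XB^i}$ in the definition of $H_h^{\epsilon_1-\eta}(X|B^i)_{\psi|\psi}$, that is, $D_h^{\epsilon_1-\eta}(\psi_{XB^i}\|\mathds{1}_X\otimes\psi_{B^i})$. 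Since $\psi_{XB^i}$ is classical on $X$, we may take $Q^i=\sum_x |x\rangle\langle x|\otimes Q^{i,x}_{B^i}$. It then satisfies
\begin{align}
\sum_x P(x)\Tr[Q^{i,x}\varphi^x_{B^i}]\geq 1-\epsilon_1+\eta, \qquad \sum_x \Tr[Q^{i,x}\psi_{B^i}] = 2^{H_h^{\epsilon_1-\eta}(X|B^i)_{\psi|\psi}}.
\end{align}
Decoder $h_i$ receives the hash value $c$ and system $B^i$. It runs a sequential (or pretty-good) measurement with POVM elements built from $\{Q^{i,x}\}_{x:g(x)=c}$ and outputs the accepted $x$.

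We bound the error of each decoder with the Hayashi--Nagaoka operator inequality
\begin{align}
\mathds{1}-(S+T)^{-1/2}S(S+T)^{-1/2}\leq (1+\alpha)(\mathds{1}-S)+(2+\alpha+\alpha^{-1})T.
\end{align}
This gives an error of at most $(1+\alpha)(\epsilon_1-\eta)$ plus $(2+\alpha+\alpha^{-1})$ times a collision term. Averaging over the 2-universal family, the collision term is at most
\begin{align}
\sum_x P(x)\sum_{x'\neq x}\Pr[g(x)=g(x')]\Tr[Q^{i,x'}\varphi^x]\leq |\mathcal{C}|^{-1}\sum_{x'}\Tr[Q^{i,x'}\psi_{B^i}]=|\mathcal{C}|^{-1}2^{H_h^{\epsilon_1-\eta}(X|B^i)_{\psi|\psi}}.
\end{align}
Here the conditional test against $\psi_{B^i}$ is exactly what makes $\sum_x P(x)\varphi^x=\psi_{B^i}$ appear, and this explains the $\psi|\psi$ subscript.

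We choose $\alpha=\eta/(\epsilon_1-\eta)$, so that the first term is $\epsilon_1$ and the prefactor is of order $\epsilon_1/\eta$. Setting
\begin{align}
\log|\mathcal{C}|\geq H_h^{\epsilon_1-\eta}(X|B^i)_{\psi|\psi}+\log\frac{\epsilon_1}{\eta^2}+2
\end{align}
makes the second term of order $\eta$. Each decoder's averaged error is then at most $\epsilon_1$ plus a term absorbed into the constant. The constants ($+2$ and the ceiling) need to be checked carefully to get exactly $\epsilon_1$ per decoder, possibly with a factor-2 slack. We take $\log|\mathcal{C}|$ to be the maximum over $i$ so that this holds for every $i$ simultaneously.

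To derandomize, we sum the expected errors over $i$: $\mathbb{E}_\Gamma\sum_i p_{e,i}\leq k\epsilon_1$. So some hash $\gamma^*$ has $\max_i p_{e,i}\leq\sum_i p_{e,i}\leq k\epsilon_1\leq (k+1)\epsilon_1=\epsilon$. This uses the normalization $\epsilon_1=\epsilon/(k+1)$, and the extra slack can absorb the $\eta$-order terms. We expect the main obstacle to be the operator-inequality step with the precise constants. That is where the $\log(\epsilon_1/\eta^2)$ term comes from, and we will first try the sharpened form of Hayashi--Nagaoka used in \cite{tomamichel-2013-SecondOrder}.
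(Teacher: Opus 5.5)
Your approach is the paper's: a 2-universal hash shared by all $k$ decoders, a block-diagonal optimal test $Q_{XB^i}$ for $H_h^{\epsilon_1-\eta}(X|B^i)_{\psi|\psi}$, a pretty-good measurement built from $\{Q^x_{B^i}\}_{x:g(x)=c}$, Hayashi--Nagaoka, and 2-universality to bound the collision term by $2^{-m}\Tr[Q_{B^i}\psi_{B^i}]$. Your form of that collision bound is the right one. Your derandomization sums over $i$ instead of using the paper's Markov-plus-union argument. That is equally valid and needs only per-decoder average error $\le \epsilon/k$.

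The gap is in the step that produces the constants of the statement, and your choice of $\alpha$ breaks it. The choice $m\ge H_h+\log(\epsilon_1/\eta^2)+2$ gives $2^{-m}2^{H_h}\le \eta^2/(4\epsilon_1)$. With $\alpha=\eta/(\epsilon_1-\eta)$, the per-decoder bound is then
\[
(1+\alpha)(\epsilon_1-\eta)+\frac{(1+\alpha)^2}{\alpha}\cdot\frac{\eta^2}{4\epsilon_1}=\epsilon_1+\frac{\epsilon_1\eta}{4(\epsilon_1-\eta)} .
\]
The excess is not ``of order $\eta$'' unless $\eta\ll\epsilon_1$. It diverges as $\eta\to\epsilon_1$, and $\alpha$ is undefined at $\eta=\epsilon_1$, which the statement allows. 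Your claim that the $(k+1)$-versus-$k$ slack absorbs it is false in general. After summing over $i$ you need $k\epsilon_1\eta/(4(\epsilon_1-\eta))\le\epsilon_1$, i.e. $k\eta\le 4(\epsilon_1-\eta)$. This fails, for example, for $\eta=\epsilon_1/2$ and $k\ge 5$. The slack is a single $\epsilon_1$, while the remainder grows linearly in $k$.

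The fix is to optimize $\alpha$. Write the bound as $(1+\alpha)a+\frac{(1+\alpha)^2}{\alpha}b$ with $a=\epsilon_1-\eta$ and $b=\eta^2/(4\epsilon_1)$. Its derivative is $a+b-b/\alpha^2$, so the minimizer is $\alpha^2=b/(a+b)$, i.e. $\alpha=\eta/(2\epsilon_1-\eta)$, which is exactly the paper's choice of $d$. The minimum equals $\epsilon_1$ exactly: $(1+\alpha)(\epsilon_1-\eta)=\frac{2\epsilon_1(\epsilon_1-\eta)}{2\epsilon_1-\eta}$ and $\frac{(1+\alpha)^2}{\alpha}\frac{\eta^2}{4\epsilon_1}=\frac{\epsilon_1\eta}{2\epsilon_1-\eta}$, and these add to $\epsilon_1$. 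This holds for every $0<\eta\le\epsilon_1$, with $\alpha=1$ at $\eta=\epsilon_1$. Once each decoder's averaged error is at most $\epsilon_1=\epsilon/(k+1)$, either derandomization finishes the proof. Finally, drop the ``sequential'' decoder option. The constants come from Hayashi--Nagaoka, which applies to the pretty-good measurement; a sequential decoder would need a different lemma and would give different constants.
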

\begin{IEEEproof}
    The proof draws heavily from the proof of \cite[Theorem~9]{tomamichel-2013-SecondOrder} and \cite{hayashi-2003-CapacityCQChannels}, and is a modified version of Lemma~\ref{lem:one-shot-source-code-compound-quantum-side-information}. We present the parts here that differ from the proof of Lemma~\ref{lem:one-shot-source-code-compound-quantum-side-information}. We will use the following lemma.
    \begin{lemma}[\cite{hayashi-2003-CapacityCQChannels}] \label{lem:upper_bound_on_povm}
        For any $d>0$, $0\leq S\leq1$ and $T\geq 0$, we have $1-(S+T)^{-\frac{1}{2}}S(S+T)^{-\frac{1}{2}}\leq (1+d)(1-S) + (2+d+\frac{1}{d})T$.
    \end{lemma}

  For the quantum system $B^i$ for $i\in[1:k]$, with $\psi_{XB^i}$, $\mathcal{F}$, as in the proof of Lemma~\ref{lem:one-shot-source-code-compound-quantum-side-information}, we choose $Q_{XB^i}=\sum_{x\in\mathcal{X}}|x\rangle\langle x|_X\otimes Q^x_{B^i}$, where $Q_{XB^i}$ is the solution of \eqref{eq:primal_hypothesis_testing_entropy_optimization} for $H_h^{\epsilon_1-\eta}(X|B^i)_{\psi_{XB^i}|\psi_{XB^i}}$. 
    We use the decoding POVMs $\{\Lambda^{x;c,f}_{B^i}\}_{x;c,f}$, where
    \begin{align}
        \Lambda^{x;c,f}_{B^i} = \mathbf{1}\{f(x)=c\}\left(\sum_{x':f(x')=c}Q_{B^i}^{x'}\right)^{-\frac{1}{2}}Q_{B^i}^x\left(\sum_{x':f(x')=c}Q_{B^i}^{x'}\right)^{-\frac{1}{2}}.
    \end{align}
    By Lemma~\ref{lem:upper_bound_on_povm}, we have
    \begin{align}
        \mathds{1}_{B^i} - \Lambda^{x;c,f}_{B^i} \leq (1+d)\left(\mathds{1}_B-Q_{B^i}^x\right) + \left(2+d+\frac{1}{d}\right)\sum_{x'\neq x}\mathbf{1}\{c=f(x')\}Q_{B^i}^{x'},
    \end{align}
    for some $d>0$ that we will optimize over later.

    This allows us to bound the average probability of error for the state $\psi_{XB^i}$ as 
    \begin{align}
        \bar{p}_e^{B^i} &\triangleq \frac{1}{|\mathcal{F}|}\sum\limits_{f,x}P(x)\Tr\Big[\big(\mathds{1}_{B^i}-\Lambda^{x;f(x),f}_{B^i}\big)\varphi^x_{B^i}\Big]\\
        &\leq \frac{1+d}{|\mathcal{F}|}\sum_{x,f}P(x)\Tr\left[\left(\mathds{1}_{B^i}-Q_{B^i}^x\right)\varphi_{B^i}^x\right]\nonumber\\
        &\qquad\qquad+ \frac{2+d+\frac{1}{d}}{|\mathcal{F}|}\sum_{x,f}\sum_{x'\neq x}P(x)\mathbf{1}\{f(x)=f(x')\}\Tr\left[\varphi_{B^i}^xQ_{B^i}^{x'}\right]\\
        & \overset{(a)}{\leq} (1+d)\Tr\left[\left(\mathds{1}_{XB^i}-Q_{XB^i}\right)\psi_{XB^i}\right] \!+\! \left(2+d+\frac{1}{d}\right)2^{-m}\sum_{x}P(x)\Tr\left[\varphi_{B^i}^x\sum_{x'\neq x}Q_{B^i}^{x'}\right]\\
        & \leq (1+d)\Tr\left[\left(\mathds{1}_{XB^i}-Q_{XB^i}\right)\psi_{XB^i}\right] + \left(2+d+\frac{1}{d}\right)2^{-m}\sum_{x}P(x)\Tr\left[\varphi_{B^i}^xQ_{B^i}^{x}\right]\\
        & \overset{(b)}{\leq} (1+d)(\epsilon_1-\eta) + \left(2+d+\frac{1}{d}\right)2^{-m}\Tr\left[\varphi_{B^i}Q_{B^i}\right]\\
        & \overset{(c)}{=} (1+d)(\epsilon_1-\eta) + \left(2+d+\frac{1}{d}\right)2^{-m}2^{H_h^{\epsilon_1-\eta}(X|B^i)_{\psi_{XB^i}|\psi_{XB^i}}},
    \end{align}
    where $(a)$ follows by the 2-universal property of the hash family, $(b)$ follows from the because $\Tr\left[Q_{XB^i}\psi_{XB^i}\right]\geq 1-(\epsilon_1-\eta)$ by our choice of $Q_{XB^i}$, and $(c)$ follows from the choice of $Q_{XB^i}$.

    Hence, the average probability of error is bounded by $\epsilon_1$ if we choose  
    \begin{align}
        \log|\mathcal{C}|=m=\left\lceil H_h^{\epsilon_1-\eta}(X|B^i)_{\psi_{XB^i}|\psi_{XB^i}}+\log\frac{\epsilon_1}{\eta^2} + 2 \right\rceil,
    \end{align}
    where we have chosen $d=\frac{\eta}{2\epsilon_1-\eta}$. 

    The argument to find a single encoder that satisfies the reliability condition on average for each possible side information realization is the same as in the proof of Lemma~\ref{lem:one-shot-source-code-compound-quantum-side-information}, giving the result.
\end{IEEEproof}

We can now obtain the following achievable one-shot secret sharing rate for arbitrary monotone access structures.
\begin{lemma} \label{lem:achievability_hypothesis_testing_bound}
    For secret sharing over a classical-quantum broadcast channel $W$ among $L$ users with monotone access structure $\mathbb{A}$, there exists an $\epsilon$-good $2^R$-code satisfying
    \begin{align}
        R \geq \max_{P_X}\Big[\min_{\mathcal{B\in\mathbb{B}}}H^{\epsilon'}_\textnormal{min}(X|Y_\mathcal{B})_\psi - \max_{\mathcal{A}\in\mathbb{A}} H_h^{\epsilon_1-\eta}(X|Y_\mathcal{A})_{\psi|\psi} - \delta -\log\frac{\epsilon_1}{\eta^2} - 5\Big],
    \end{align}
    where $\epsilon_1\geq\eta>0$, $\delta>0$, $\epsilon'=\epsilon_1(|\mathbb{A}|+1)$, and $\epsilon = 4\left(|\mathbb{A}|(3\epsilon'+2^{-\delta/2-1}) + |\mathbb{B}|(4\epsilon' + 2^{-\delta/2})\right)$, for  $\psi_{XY_{[1:L]}}$ defined in Theorem~\ref{thm:general_one_shot_secret_sharing_ach}.
\end{lemma}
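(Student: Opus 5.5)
We rerun the proof of Theorem~\ref{thm:general_one_shot_secret_sharing_ach} from Section~\ref{sec:proof_of_main_achievability_result} unchanged, except that the compound source code is now built with Lemma~\ref{lem:compound_source_coding_hypothesis_testing_entropy} instead of Lemma~\ref{lem:one-shot-source-code-compound-quantum-side-information}. We keep the same shaper: 2-universal hashing of $X\sim P_X$ into $U$, together with the virtual channel $W'$. We also keep the same channel-code construction from Lemma~\ref{lem:information_reconcilliation_to_channel_coding} and the same security analysis.

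\textbf{Reliability.} We apply Lemma~\ref{lem:compound_source_coding_hypothesis_testing_entropy} to $\{\psi_{U\mathcal{A}}\}_{\mathcal{A}\in\mathbb{A}}$ with $k=|\mathbb{A}|$ and $\epsilon_1=\epsilon'/(|\mathbb{A}|+1)$. This gives an $\epsilon'$-reliable code with
\begin{align}
\log|\mathcal{C}|\le \max_{\mathcal{A}\in\mathbb{A}} H_h^{\epsilon_1-\eta}(U|Y_\mathcal{A})_{\psi|\psi}+\log\frac{\epsilon_1}{\eta^2}+3 .
\end{align}
We then replace $U$ by $X$ in the hypothesis testing entropy. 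The argument is data processing, since $U=f_\Gamma(X)$ is a function of $X$ given $\Gamma$. Concretely, any test $Q_{XY}$ feasible for $X$ can be pulled back to a test for $U$. This gives $H_h^{\epsilon}(U|Y)\le H_h^{\epsilon}(X|Y)$ for the conditional versions relative to the marginal $\psi_{Y}$, using that $\psi_{Y_\mathcal{A}}$ is unchanged by the shaper. This is the step I expect to need care. The issue is that the quantity is $H_h(\cdot)_{\psi|\psi}$, i.e.\ it is evaluated at a fixed $\sigma=\psi_B$ rather than optimized over $\sigma$. Also, the state $\psi_{U\mathcal{A}}$ is induced by a uniform $U$ rather than by $f_\Gamma(X)$. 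My first attempt would be to work with the exact hash-induced state $\rho_{f_\Gamma(X)\Gamma Y_\mathcal{A}}$, for which $U$ is a deterministic function of $(X,\Gamma)$. I would absorb the mismatch with the uniform-input state through the variational-distance bound on $\mathbb{V}(P_{f_\Gamma(X)},P_U^U)$ already used in \eqref{eq:probability_of_error_wprime}. This costs an additive error term, and I would account for it in the reliability error.

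\textbf{Security and rate.} The security analysis \eqref{eq:security_constraint_intermediate} is untouched. With $\log|\mathcal{U}|=\min_{\mathcal{B}}H_{\min}^{\epsilon'}(X|Y_\mathcal{B})_\psi-\delta-2$, the rate $R=\log|\mathcal{U}|-\log|\mathcal{C}|$ yields
\begin{align}
R\ge \min_{\mathcal{B}}H_{\min}^{\epsilon'}(X|Y_\mathcal{B})_\psi-\max_{\mathcal{A}}H_h^{\epsilon_1-\eta}(X|Y_\mathcal{A})_{\psi|\psi}-\delta-\log\frac{\epsilon_1}{\eta^2}-5 .
\end{align}
Here the constant $5$ comes from $2$ plus the ceiling and $+2$ of the lemma. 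Since $\epsilon_2$ is no longer present, $\epsilon'=\epsilon_1(|\mathbb{A}|+1)$.

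\textbf{Code selection.} We select $c^*$ as in Section~\ref{sec:encoder_decoder_selection} with the same averaging argument. The extra factor $4$ in $\epsilon$ leaves slack, which I would use in two places. First, it absorbs the shaper mismatch error from the data-processing step. Second, it lets us pass from an expurgation or Markov argument that controls reliability and security simultaneously to deterministic bounds. For example, Markov's inequality with threshold $4\times$ the average keeps at least $3/4$ of the $c$'s good. Maximizing over $P_X$ gives the claim.
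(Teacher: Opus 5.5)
Your route is the paper's. You substitute Lemma~\ref{lem:compound_source_coding_hypothesis_testing_entropy} for Lemma~\ref{lem:one-shot-source-code-compound-quantum-side-information}, and keep the shaper, Lemma~\ref{lem:information_reconcilliation_to_channel_coding}, the security chain and the selection of $c^*$.

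Your bookkeeping is right. The bound $\log|\mathcal{C}|\le\max_{\mathcal{A}}H_h^{\epsilon_1-\eta}(U|Y_\mathcal{A})_{\psi|\psi}+\log\frac{\epsilon_1}{\eta^2}+3$, combined with $\log|\mathcal{U}|=\min_{\mathcal{B}}H_{\min}^{\epsilon'}(X|Y_\mathcal{B})_\psi-\delta-2$, gives the constant $5$. The factor $4$ in $\epsilon$ is mere slack. The averaging in Section~\ref{sec:encoder_decoder_selection} already yields $c^*$ with no Markov loss, and an $\epsilon$-good code is $4\epsilon$-good.

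The genuine gap is the step you flagged: replacing $U$ by $X$ inside $H_h(\cdot)_{\psi|\psi}$. Your argument for it runs in the wrong direction.
\begin{itemize}
\item Lifting a $U$-test by $Q^x\triangleq Q^{f(x)}$ keeps the success probability $\Tr[Q\psi]$ but multiplies each cost term $\Tr[Q^u\psi_Y]$ by $|f^{-1}(u)|$. It therefore proves $H_h(X|Y)\le H_h(U|Y)+\log\max_u|f^{-1}(u)|$, the reverse of what you need.
\item Data processing on the $X$ register fails for the same reason. It maps $\mathds{1}_X\otimes\psi_Y$ to $\sum_u|f^{-1}(u)|\,|u\rangle\langle u|\otimes\psi_Y$, not to $\mathds{1}_U\otimes\psi_Y$.
\end{itemize}
The direction you need would require merging $\{Q^x\}_{x\in f^{-1}(u)}$ into a single $0\le Q^u\le\mathds{1}$ at no extra cost. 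Non-commuting tests do not permit this.

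In fact the inequality is false in general. Take a qubit $B$, $P_X=(0.05,0.05,0.9)$, $\varphi^1=|0\rangle\langle0|$, $\varphi^2=|1\rangle\langle1|$, $\varphi^3=|+\rangle\langle+|$, and $f(1)=f(2)\neq f(3)$. Then $\psi_B=0.95|+\rangle\langle+|+0.05|-\rangle\langle-|$. At $\epsilon=0.01$, consider the $X$-test $Q^3=|+\rangle\langle+|$ and $Q^i=|\phi_i\rangle\langle\phi_i|$, with $|\phi_1\rangle=\sqrt{0.9}|0\rangle-\sqrt{0.1}|1\rangle$ and $|\phi_2\rangle=\sqrt{0.9}|1\rangle-\sqrt{0.1}|0\rangle$. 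It has $\Tr[Q_{XB}\psi_{XB}]=0.99$ at cost $0.95+0.23+0.23=1.41$. For $U$, every operator is diagonal in $\{|\pm\rangle\}$, and the Neyman--Pearson optimum is $1.76$. Hence $H_h^{0.01}(U|B)_{\psi|\psi}=\log 1.76>\log 1.41\ge H_h^{0.01}(X|B)_{\psi|\psi}$.

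So the fact that $U$ is a function of $(X,\Gamma)$ cannot by itself justify the replacement. The discrepancy is in the rate, so neither the variational-distance term nor the factor $4$ can absorb it.

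The paper's one-line proof makes the same substitution tacitly, by analogy with step (c) in the proof of Theorem~\ref{thm:general_one_shot_secret_sharing_ach}. The monotonicity used there does hold for $H_{\max}$: root fidelity is subadditive in its first argument, by $\Tr\sqrt{A+B}\le\Tr\sqrt{A}+\Tr\sqrt{B}$ for $A,B\ge0$. It does not carry over to $H_h$.

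To close the gap you would need either of two changes. You could state the bound with $H_h^{\epsilon_1-\eta}(U|Y_\mathcal{A}\Gamma)$ of the hash-induced state. Alternatively, you could pass through $H_{\max}$, at the price of altered smoothing parameters and extra logarithmic terms. Neither reproduces the statement verbatim.
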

\begin{IEEEproof}
    The proof follows as in Section~\ref{sec:proof_of_main_achievability_result} with the following modifications.   For the source coding, Lemma~\ref{lem:compound_source_coding_hypothesis_testing_entropy} is used in place of Lemma~\ref{lem:one-shot-source-code-compound-quantum-side-information}. Propagating the corresponding changes to the bound on $\log|\mathcal{C}|$ and the change to $\epsilon'$ gives the result.
\end{IEEEproof}

We now present  second order expansions for the smooth conditional and hypothesis testing entropies. 

\begin{lemma}[Section VI-B, \cite{tomamichel-2013-SecondOrder}] 
\label{lem:second_order_expansion_of_conditional_hypothesis_testing_entropy}
    For any classical-quantum state $\rho_{AB}$ and any $0<\epsilon<1$, we have the following asymptotic characterization for large $n$ and $\eta = n^{-1/2}$:
    \begin{align}
        H_h^{\epsilon-\eta}(A^n|B^n)_{\rho^{\otimes n}|\rho^{\otimes n}} = nH(A|B)_\rho - \sqrt{nV(A|B)_\rho}\Phi^{-1}(\epsilon) + \mathcal{O}\left(\log n\right).
    \end{align}
\end{lemma}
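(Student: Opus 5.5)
The plan is to reduce the statement to the known second-order expansion of the hypothesis testing relative entropy for i.i.d.\ states, due to \cite{tomamichel-2013-SecondOrder,li-2014-SecondOrder}. By the definition above, with $\sigma_B=\rho_B$ fixed (no optimization over $\sigma$),
\begin{align}
H_h^{\epsilon-\eta}(A^n|B^n)_{\rho^{\otimes n}|\rho^{\otimes n}} = -D_h^{\epsilon-\eta}\big(\rho_{AB}^{\otimes n}\,\big\|\,(\mathds{1}_A\otimes\rho_B)^{\otimes n}\big),
\end{align}
because $\mathds{1}_{A^n}\otimes\rho_B^{\otimes n}=(\mathds{1}_A\otimes\rho_B)^{\otimes n}$. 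The right-hand side is thus the hypothesis testing relative entropy between two product operators, which is exactly the setting in which the second-order expansion applies.

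\textbf{Step 1 (invoke the expansion).} We use the expansion
\begin{align}
D_h^{\epsilon}(\rho^{\otimes n}\|\sigma^{\otimes n}) = nD(\rho\|\sigma)+\sqrt{nV(\rho\|\sigma)}\,\Phi^{-1}(\epsilon)+\mathcal{O}(\log n).
\end{align}
It holds for $\rho\in\mathcal{S}_=(\mathcal{H})$, $\sigma\in\mathcal{P}(\mathcal{H})$ with $\operatorname{supp}\rho\subseteq\operatorname{supp}\sigma$, and for fixed $0<\epsilon<1$.

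We need the version for non-normalized $\sigma$. It follows from the normalized one through the scaling identity $D_h^\epsilon(\rho\|c\sigma)=D_h^\epsilon(\rho\|\sigma)-\log c$. This identity holds because the infimum defining $2^{-D_h^\epsilon}$ is linear in $\sigma$, and the same constant $-\log c$ shifts $D(\rho\|\cdot)$ while leaving $V(\rho\|\cdot)$ unchanged.

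The support condition holds, since $\operatorname{supp}\rho_{AB}\subseteq\mathcal{H}_A\otimes\operatorname{supp}\rho_B$.

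\textbf{Step 2 (identify the terms).} Set $\sigma=\mathds{1}_A\otimes\rho_B$. Then $D(\rho_{AB}\|\sigma)=-H(A|B)_\rho$ and $V(\rho_{AB}\|\sigma)=V(A|B)_\rho$, by the definitions in Section~\ref{sec:notation}. Negating the expansion gives
\begin{align}
H_h^{\epsilon}(A^n|B^n)_{\rho^{\otimes n}|\rho^{\otimes n}}=nH(A|B)_\rho-\sqrt{nV(A|B)_\rho}\,\Phi^{-1}(\epsilon)+\mathcal{O}(\log n).
\end{align}

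\textbf{Step 3 (absorb the shift $\eta=n^{-1/2}$).} For large $n$ we have $\epsilon-\eta\in(\epsilon/2,\epsilon)$, so Step 2 applies at $\epsilon-\eta$. We need the $\mathcal{O}(\log n)$ remainder to be uniform over $\epsilon$ in a compact subinterval of $(0,1)$. This holds in \cite{tomamichel-2013-SecondOrder}, where the remainder comes from Berry--Esseen constants and pinching, and these depend only on $\rho$, $\sigma$ and the interval.

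Since $\Phi^{-1}$ is continuously differentiable on $(0,1)$, the mean value theorem gives $\Phi^{-1}(\epsilon-n^{-1/2})=\Phi^{-1}(\epsilon)+\mathcal{O}(n^{-1/2})$. Multiplying by $\sqrt{nV}$ turns this into an $\mathcal{O}(1)$ error, which is absorbed into $\mathcal{O}(\log n)$.

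If $V(A|B)_\rho=0$, the second-order term vanishes and the claim reduces to the first-order statement with logarithmic remainder, which the same reference also covers.

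\textbf{Main obstacle.} The substantive content is the expansion in Step 1, which we take from \cite{tomamichel-2013-SecondOrder,li-2014-SecondOrder}. If it had to be reproved, we would take the following route. For the achievability direction, pinch $\rho^{\otimes n}$ with respect to the eigenbasis of $\sigma^{\otimes n}$; this costs only $\mathcal{O}(\log n)$ because $\sigma^{\otimes n}$ has polynomially many distinct eigenvalues. For the converse direction, use the information-spectrum (Nussbaum--Szko{\l}a) distribution of $\log\rho-\log\sigma$. In both directions, apply Berry--Esseen to the resulting i.i.d.\ sum, whose mean is $D$ and variance is $V$.

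The only step specific to this lemma is checking that the remainder is uniform in $\epsilon$, which is what Step 3 needs to absorb $\eta$.
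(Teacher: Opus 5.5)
Your proof is correct and follows essentially the same route as the source the paper relies on: the paper gives no argument of its own and imports the statement from \cite{tomamichel-2013-SecondOrder}, where it is obtained by specializing the second-order expansion of $D_h^{\epsilon}(\rho^{\otimes n}\|\sigma^{\otimes n})$ to $\sigma=\mathds{1}_A\otimes\rho_B$ and absorbing the shift through $\Phi^{-1}(\epsilon-n^{-1/2})=\Phi^{-1}(\epsilon)+\mathcal{O}(n^{-1/2})$. That source leaves one point implicit, which you check explicitly and handle correctly: the $\mathcal{O}(\log n)$ remainder must be uniform for $\epsilon$ in a compact subinterval of $(0,1)$, as the Berry--Esseen-based bounds guarantee.
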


\begin{lemma}[Section VI-B, \cite{tomamichel-2013-SecondOrder}] \label{lem:second_order_expansion_of_max_relative_entropy}
    For any classical-quantum state $\rho_{AB}$ and any $0<\epsilon<1$, we have the following asymptotic characterization for large $n$:
    \begin{align}
        H_{\min}^\epsilon(A^n|B^n)_{\rho^{\otimes n}} = nH(A|B)_\rho + \sqrt{nV(A|B)_\rho}\Phi^{-1}(\epsilon^2) + \mathcal{O}(\log n).
    \end{align}
\end{lemma}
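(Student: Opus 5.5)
The plan is to reduce the smooth min-entropy to smooth relative max-entropy through the identity stated in Section~\ref{sec:notation},
\begin{align}
H_{\min}^\epsilon(A^n|B^n)_{\rho^{\otimes n}} = \max_{\sigma_{B^n}} -D_{\max}^\epsilon\big(\rho_{AB}^{\otimes n}\,\big\|\,\mathds{1}_{A^n}\otimes\sigma_{B^n}\big),
\end{align}
and then to prove matching lower and upper bounds of the form $nH(A|B)_\rho+\sqrt{nV(A|B)_\rho}\,\Phi^{-1}(\epsilon^2)\pm\mathcal{O}(\log n)$.

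\textbf{Achievability (lower bound on $H_{\min}^\epsilon$).} I would fix the product choice $\sigma_{B^n}=\rho_B^{\otimes n}$. Since $\mathds{1}_{A^n}\otimes\rho_B^{\otimes n}=(\mathds{1}_A\otimes\rho_B)^{\otimes n}$, the problem becomes an i.i.d. smooth max-divergence. I would then invoke the second-order expansion of $D_{\max}^\epsilon$ for i.i.d. pairs from \cite{tomamichel-2013-SecondOrder}:
\begin{align}
D_{\max}^\epsilon(\rho^{\otimes n}\|\sigma^{\otimes n}) = nD(\rho\|\sigma)-\sqrt{nV(\rho\|\sigma)}\,\Phi^{-1}(\epsilon^2)+\mathcal{O}(\log n).
\end{align}
The argument $\epsilon^2$ appears because the smoothing is in purified distance, so the relevant trace-distance budget is $1-\sqrt{1-\epsilon^2}\approx\epsilon^2/2$; the constant discrepancy is absorbed in $\mathcal{O}(\log n)$. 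Substituting $\sigma=\mathds{1}_A\otimes\rho_B$ turns $-D$ into $H(A|B)_\rho$ and $V$ into $V(A|B)_\rho$, which gives the lower bound. This is the direction actually used in Theorem~\ref{thm:general_second_order_asymptotics}.

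\textbf{Converse (upper bound).} This is the main obstacle, because the maximization over $\sigma_{B^n}$ ranges over arbitrary, possibly entangled, states. The steps would be as follows.
\begin{itemize}
\item[(i)] Reduce to permutation-invariant $\sigma_{B^n}$. Since $\rho_{AB}^{\otimes n}$ is permutation invariant, twirling $\sigma_{B^n}$ over $S_n$ can only decrease $D_{\max}^\epsilon$, by data processing.
\item[(ii)] Replace $\sigma_{B^n}$ by the universal symmetric state $\omega_{B^n}$. This state satisfies $\sigma_{B^n}\le g_n\,\omega_{B^n}$ with $g_n\le (n+1)^{d_B^2-1}$, so changing $\sigma_{B^n}$ to $\omega_{B^n}$ costs at most $\log g_n=\mathcal{O}(\log n)$.
\item[(iii)] Lower bound $D_{\max}^\epsilon(\rho^{\otimes n}\|\mathds{1}\otimes\omega_{B^n})$ by a hypothesis-testing divergence $D_h^{1-\epsilon^2-\eta}$ with $\eta=n^{-1/2}$, up to $\log(1/\eta)$.
\item[(iv)] Pinch $\rho^{\otimes n}$ with respect to the spectral projections of $\mathds{1}\otimes\omega_{B^n}$. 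This loses at most a further $\mathcal{O}(\log n)$, since $\omega_{B^n}$ has polynomially many distinct eigenvalues, and it leaves a commuting, essentially classical problem.
\item[(v)] Evaluate that problem with the Berry–Esseen theorem applied to the i.i.d. log-likelihood ratio $\log\rho_{AB}-\log(\mathds{1}_A\otimes\rho_B)$, whose mean and variance are $-H(A|B)_\rho$ and $V(A|B)_\rho$.
\end{itemize}

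The delicate point is step (v). The dominating state $\omega_{B^n}$ is not $\rho_B^{\otimes n}$, so after pinching I must show that the log-likelihood with respect to $\omega_{B^n}$ is, up to $\mathcal{O}(\log n)$, no smaller than with respect to the best i.i.d. state. I would first try the approach of \cite{tomamichel-2013-SecondOrder}, which works with type classes and shows that $\rho_B^{\otimes n}$ is asymptotically optimal to second order. Once that is in place, combining with the achievability bound yields the stated expansion, with all one-shot slack terms absorbed into $\mathcal{O}(\log n)$.
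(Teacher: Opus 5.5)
The paper gives no proof of this lemma; it imports it from \cite{tomamichel-2013-SecondOrder}. Your lower bound is the standard argument: fix $\sigma_{B^n}=\rho_B^{\otimes n}$ and use the i.i.d.\ expansion of $D_{\max}^\epsilon$. It is also the only direction Theorem~\ref{thm:general_second_order_asymptotics} uses.

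The upper bound, however, has a genuine gap at step (v). After (ii)--(iv) you are left with the pinched state $\mathcal{E}_\omega(\rho_{AB}^{\otimes n})$ against $\mathds{1}_{A^n}\otimes\omega_{B^n}$. This pair commutes but is not a product pair. So there is no i.i.d.\ log-likelihood ratio, and Berry--Esseen for $\log\rho_{AB}-\log(\mathds{1}_A\otimes\rho_B)$ says nothing about it. Nor does pinching with respect to $\omega_{B^n}$ make $\rho_{AB}^{\otimes n}$ commute with $\mathds{1}\otimes\rho_B^{\otimes n}$.

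The only link you have between the two reference states is $\rho_B^{\otimes n}\le g_n\,\omega_{B^n}$, and it points the wrong way. It upper-bounds divergences against $\omega_{B^n}$, which is what channel-coding converses need. Your converse needs a lower bound: $\omega_{B^n}$ must exceed $\rho_B^{\otimes n}$ by at most $2^{\mathcal{O}(\log n)}$ on all but $\mathcal{O}(n^{-1/2})$ of the weight of $\rho_{AB}^{\otimes n}$. No operator inequality $\omega_{B^n}\le 2^{\mathcal{O}(\log n)}\rho_B^{\otimes n}$ holds in general. So this needs a separate non-commutative Schur--Weyl/information-spectrum analysis, which is exactly the step you defer. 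As written, the converse is not proved.

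A short way to close the gap, bypassing the universal state, is a one-shot reduction to $\sigma_B=\rho_B$. Take $\tilde\rho_{AB}\in\mathcal{B}_\epsilon(\rho_{AB})$ with $\tilde\rho_{AB}\le 2^{-\lambda}\mathds{1}_A\otimes\sigma_B$. Set $\Gamma_B\triangleq\rho_B^{1/2}(\rho_B+\delta^2\sigma_B)^{-1/2}$ and $\hat\rho_{AB}\triangleq(\mathds{1}_A\otimes\Gamma_B)\tilde\rho_{AB}(\mathds{1}_A\otimes\Gamma_B)^\dagger$. One checks three facts:
\begin{itemize}
\item $\Gamma_B^\dagger\Gamma_B\le\mathds{1}$;
\item $\Gamma_B\sigma_B\Gamma_B^\dagger\le\delta^{-2}\rho_B$;
\item $\Tr[\rho_B\Gamma_B]\ge 1-\delta^2/2$, after restricting to the support of $\rho_B$ and using operator convexity of $t\mapsto(\rho_B+t\sigma_B)^{-1/2}$.
\end{itemize}
Monotonicity of the purified distance, Uhlmann's theorem and the triangle inequality then give $P(\hat\rho_{AB},\rho_{AB})\le\epsilon+\delta$. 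In addition, $\hat\rho_{AB}\le 2^{-\lambda}\delta^{-2}\,\mathds{1}_A\otimes\rho_B$. Hence
\[
H_{\min}^\epsilon(A|B)_\rho\le -D_{\max}^{\epsilon+\delta}(\rho_{AB}\|\mathds{1}_A\otimes\rho_B)+2\log\frac{1}{\delta}.
\]
Take $\delta=n^{-1/2}$ and apply the same i.i.d.\ expansion you used for the lower bound; its remainder is uniform for smoothing parameters in compact subsets of $(0,1)$. This gives the matching upper bound, since $\sqrt{n}\,\bigl[\Phi^{-1}\bigl((\epsilon+n^{-1/2})^2\bigr)-\Phi^{-1}(\epsilon^2)\bigr]=\mathcal{O}(1)$.

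Separately, your explanation of the $\epsilon^2$ is off. Discarding weight $w$ and renormalizing the rest gives fidelity $1-w$. So a purified-distance radius $\epsilon$ lets you discard weight $\epsilon^2$, not about $\epsilon^2/2$. Moreover, a constant-factor change inside $\Phi^{-1}$ shifts the $\sqrt{n}$ term, so it can never be absorbed into $\mathcal{O}(\log n)$. Since you cite the expansion with $\Phi^{-1}(\epsilon^2)$ verbatim, this does not affect your lower bound.
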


We are now ready to prove the result.
\begin{IEEEproof}[Proof of Theorem~\ref{thm:general_second_order_asymptotics}]
The classical-quantum state describing the system is $\psi_{XY_{[1:L]}}^{\otimes n}$. We  bound the communication rate as
\begin{align}
    R & \overset{(a)}{\geq} \frac{1}{n}\max_{P_X} \Big[\min_{\mathcal{B\in\mathbb{B}}}H^{\epsilon'}_\textnormal{min}(X^n|Y_\mathcal{B}^n)_{\psi^{\otimes n}} - \max_{\mathcal{A}\in\mathbb{A}} H_h^{\epsilon_1-\eta}(X^n|Y_\mathcal{A}^n)_{\psi^{\otimes n}} \!-\!\delta \!+\!\log\frac{\epsilon_1}{\eta^2} \!-\! 5\Big]\\
    & \overset{(b)}{=} \frac{1}{n}\max_{P_X} \Bigg[\min_{\mathcal{B\in\mathbb{B}}}\left(nH(X|Y_\mathcal{B})_{\psi} + \sqrt{nV(X|Y_\mathcal{B})_{\psi}}\Phi^{-1}({\epsilon'}^2)\right) + \mathcal{O}(\log n)-\delta \nonumber\\
    & \qquad\qquad\qquad- \max_{\mathcal{A}\in\mathbb{A}} \left(nH(X|Y_\mathcal{A})_{\psi} -\sqrt{nV(X|Y_\mathcal{A})_{\psi}}\Phi^{-1}(\epsilon_1)\right) \!+\!\log\frac{\epsilon'_1}{\eta^2} \!-\! 5\Bigg]\\
    & \overset{(c)}{=} \max_{P_X} \Bigg[\min_{\mathcal{B\in\mathbb{B}}}\left(H(X|Y_\mathcal{B})_{\psi} + \sqrt{\frac{1}{n}V(X|Y_\mathcal{B})_{\psi}}\Phi^{-1}({\epsilon'}^2)\right)  \nonumber\\
    & \qquad\qquad\qquad- \max_{\mathcal{A}\in\mathbb{A}} \left(H(X|Y_\mathcal{A})_{\psi} -\sqrt{\frac{1}{n}V(X|Y_\mathcal{A})_{\psi}}\Phi^{-1}(\epsilon_1)\right) + \mathcal{O}\left(\frac{\log n}{n}\right)\Bigg]
\end{align}
where $(a)$ follows from Lemma~\ref{lem:achievability_hypothesis_testing_bound}, $(b)$ follows from Lemmas~\ref{lem:second_order_expansion_of_max_relative_entropy} and \ref{lem:second_order_expansion_of_conditional_hypothesis_testing_entropy}, and $(c)$ follows from the choice of $\delta=\log n$.
\end{IEEEproof}

\section{Proof of Corollary~\ref{cor:asymptotic_main_result}} \label{app:proof_asymptotic_general_secret_sharing_achievability_proof}
From Theorem~\ref{thm:general_second_order_asymptotics}, we have 
\begin{align}
    R & \geq \max_{P_X} \Bigg[\min_{\mathcal{B\in\mathbb{B}}}\left(H(X|Y_\mathcal{B})_{\psi} + \sqrt{\frac{1}{n}V(X|Y_\mathcal{B})_{\psi}}\Phi^{-1}({\epsilon'}^2)\right)  \nonumber\\
    & \qquad\qquad\qquad- \max_{\mathcal{A}\in\mathbb{A}} \left(H(X|Y_\mathcal{A})_{\psi} -\sqrt{\frac{1}{n}V(X|Y_\mathcal{A})_{\psi}}\Phi^{-1}(\epsilon_1)\right) + \mathcal{O}\left(\frac{\log n}{n}\right)\Bigg].
\end{align}

Choosing $\epsilon_1=1/\sqrt{n}$ and $\delta=\log n$ and taking the limit as $n\rightarrow\infty$ gives the asymptotic rate 
\begin{align}
    R \geq \max_{P_X} \left[\min_{\mathcal{B\in\mathbb{B}}}H(X|Y_\mathcal{B})_{\psi} - \max_{\mathcal{A}\in\mathbb{A}} H(X|Y_\mathcal{A})_{\psi}\right]. \label{eq:asymptotic_rate}
\end{align}

The asymptotic probability of error $\epsilon$ is
\begin{align}
    \lim_{n\rightarrow\infty}\epsilon & = \lim_{n\rightarrow\infty} |\mathbb{A}|(3\epsilon'+2^{-\delta/2-1}) + |\mathbb{B}|(4\epsilon' + 2^{-\delta/2})\\
    & = 0,
\end{align}
where the limit follows from the choices of $\epsilon_1$ above and the choice of $\delta=\log n$ from the proof of Theorem~\ref{thm:general_second_order_asymptotics}.

As $\epsilon$ is an upper bound on both the security and reliability, the rate in \eqref{eq:asymptotic_rate} is achievable.

\section{Proof of Corollary~\ref{cor:nonasymptotic_all_user_access_structure}}
\label{app:proof_nonasymptotic_all_user_access_structure_achievability_proof}
The proof follows the same structure as that of Theorem~\ref{thm:general_one_shot_secret_sharing_ach} but for the source coding we apply Lemma~\ref{lem:classical_data_compression_with_quantum_side_information} in Appendix \ref{app_I} in place of Lemma~\ref{lem:one-shot-source-code-compound-quantum-side-information} as there is only one authorized set, and it yields a smaller additive terms.

Application of Lemma~\ref{lem:classical_data_compression_with_quantum_side_information} replaces \eqref{eq:compression_alphabet_lower_limit} with
\begin{align}
    \log|\mathcal{C}|= \big\lceil H_\textnormal{max}^{\epsilon_1}(X|Y_{[1:L]})_\psi-2\log\epsilon_2\big\rceil + 3,
\end{align}
for $\epsilon_1,\epsilon_2\geq 0$ and allows us to replace $\epsilon'=\epsilon_1/(|\mathbb{A}|+1)+\epsilon_2$ with $\epsilon'=\epsilon_1+\epsilon_2$. When applied in the derivation of the achievable rate,  \eqref{eq:main_result_final_rate_constraint_no_secrecy} gives
\begin{align}
    R \geq \max_{P_X}\Big[\min_{\mathcal{B\in\mathbb{B}}}H^{\epsilon'}_\text{min}(X|Y_\mathcal{B})_\psi -  H_\text{max}^{\epsilon_1}\big(X|Y_{[1:L]}\big)_\psi - \delta -2\log\frac{1}{\epsilon_2} - 6\Big].
\end{align}

Lemma~\ref{lem:information_reconcilliation_to_channel_coding} is used in the same way, but we note that there is a single element in authorized set allowing to reduce \eqref{eq:averaged_bound_together} to 
\begin{align}
    \sum\limits_{c\in\mathcal{C}}\frac{1}{|\mathcal{C}|}\Big(\max_{\mathcal{A}\in\mathbb{A}}P_e^{c,\mathcal{A}} + \max_{\mathcal{B}\in\mathbb{B}}\| \tilde{\rho}_{S\Gamma Y_\mathcal{B}} -\rho_S\otimes\rho_{\Gamma Y_\mathcal{B}} \|_1\Big) \leq 3\epsilon'+2^{-\delta/2-1} + |\mathbb{B}|(4\epsilon' + 2^{-\delta/2}),
\end{align}
showing that there is an encoder and decoder for this protocol that is $\epsilon$-good for $\epsilon=3\epsilon'+2^{-\delta/2-1} + |\mathbb{B}|(4\epsilon' + 2^{-\delta/2})$.

\section{Proof of Corollary~\ref{cor:classical_secret_sharing_capacity_all_user}}
\label{app:proof_classical_capacity} 

The achievability follows  from Corollary~\ref{cor:asymptotic_all_user_access_structure} because the von Neumann entropy is equal to the Shannon entropy for classical random variables.

The converse follows from the converse for the degraded compound wiretap channel \cite{liang-2009-CompoundWTC},  by observing that our channel is equivalent to a degraded wiretap channel, where each realization corresponds to a set $\mathcal{B}\in\mathbb{B}$,  the legitimate receiver observes $Y_{[1:L]}$, the eavesdropper observes $Y_{\mathcal{B}}$, the channel is  degraded  because $X-Y_{[1:L]}-Y_\mathcal{B}$ forms a Markov chain for all $\mathcal{B}\in\mathbb{B}$, and  our definition of security \eqref{eq:def_asypmtotic_security} implies security for the degraded wiretap channel. 
Indeed, for all $\mathcal{B}\in\mathbb{B}$, we have
\begin{align}
    \frac{1}{n}I(S;Y_\mathcal{B}^n)  &\leq\frac{1}{n}\mathbb{V}(P_{SY_\mathcal{B}^n},P_SP_{Y_\mathcal{B}^n})\ln\frac{2^{nR}}{\mathbb{V}(P_{SY_\mathcal{B}^n},P_SP_{Y_\mathcal{B}}^n)}\\
    & =\mathbb{V}(P_{SY_\mathcal{B}^n},P_SP_{Y_\mathcal{B}^n})\ln 2^R - \frac{1}{n}\mathbb{V}(P_{SY_\mathcal{B}^n},P_SP_{Y_\mathcal{B}^n})\ln\mathbb{V}(P_{SY_\mathcal{B}^n},P_SP_{Y_\mathcal{B}^n})\\
    & \xrightarrow{n\to \infty} 0,
\end{align}
 where the inequality holds by \cite[Lemma 1]{csiszar-1996-AlmostIndependence}. 

\section{Proof of Lemma~\ref{lem:one-shot-source-code-compound-quantum-side-information}} \label{app:proof_of_lem_source_coding_with_compound_quantum_side_information}

Note that each decoder is a quantum measurement of the quantum system $B^i$ conditioned on a value of $C$, taking the form of a positive operator valued measure, a collection of positive operators $\Lambda^{x;c}_{B^i}$ satisfying $\sum_x\Lambda^{x;c}_{B^i}=\mathds{1}_{B^i}$ for each $c\in\mathcal{C}$. Thus, we have
\begin{align}
    p_e= 1-\min_{i\in[1:k]}\bigg\{\sum\limits_{x\in\mathcal{X}}P(x)\Tr\big[\Lambda^{x;g(x)}_{B^i}\varphi^x_{B^i}\big]\bigg\}.
\end{align}

We first state supporting lemmas.

\begin{lemma}[Lemma 1, \cite{renes-2012-OneShotCompression}]\label{lem:average_error_probability_bound}
    Let $X\in\mathcal{X}$ be a classical random variable and $B$ be a quantum system described by $\psi_{XB}=\sum_{x\in\mathcal{X}}P(x)|x\rangle\langle x|\otimes\varphi^x_B$   and $\varphi_B=\sum_{x\in\mathcal{X}}P(x)\varphi^x_B$. Let $\mathcal{F}$ be a 2-universal  family of hash functions $f:\mathcal{X}\rightarrow\{0,1\}^m$, and $P_{XB}=\sum_{x\in\mathcal{X}}|x\rangle\langle x|\otimes\Pi^x_B$ with $0\leq\Pi^x_B\leq\mathds{1}_B$ for all $x\in\mathcal{X}$. Then, there exists a family of measurements on $B$ indexed by $f\in\mathcal{F}$ and $c\in\{0,1\}^m$, having elements $\Lambda^{x;c,f}_B$ corresponding to outcomes $x$, such that $\Lambda^{x';c,f}_B=0$ when $f(x')\neq c$ and for which $\Bar{p}_e$, the probability of error averaged over a random choice of $f\in\mathcal{F}$, obeys
    \begin{align}
        \bar{p}_e&\triangleq\frac{1}{|\mathcal{F}|}\sum\limits_{f,x}P(x)\Tr\Big[\big(\mathds{1}-\Lambda^{x;f(x),f}_B\big)\varphi^x_B\Big]\\
        & \leq 2\Tr\big[(\mathds{1}-P_{XB})\psi_{XB}\big]+4\cdot2^{-m}\Tr\big[P_{XB}(\mathds{1}_X\otimes\varphi_B)\big].
    \end{align}
\end{lemma}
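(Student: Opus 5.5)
We will construct the measurement explicitly in the style of a pretty-good (square-root) measurement and then bound the error with the Hayashi--Nagaoka operator inequality, which is Lemma~\ref{lem:upper_bound_on_povm} with $d=1$. For fixed $f\in\mathcal{F}$ and $c\in\{0,1\}^m$, set
\begin{align}
    \Lambda^{x;c,f}_B \triangleq \mathbf{1}\{f(x)=c\}\Big(\sum_{x':f(x')=c}\Pi^{x'}_B\Big)^{-\frac{1}{2}}\Pi^x_B\Big(\sum_{x':f(x')=c}\Pi^{x'}_B\Big)^{-\frac{1}{2}},
\end{align}
where the inverse is taken on the support. To make the collection a complete POVM, we will add the projector onto the kernel of $\sum_{x':f(x')=c}\Pi^{x'}_B$ to an arbitrary outcome $x$ with $f(x)=c$. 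This only enlarges that element, so the error bound is unaffected. By construction $\Lambda^{x';c,f}_B=0$ whenever $f(x')\neq c$, which is the required support property.

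For the error bound, fix $x$ and take $c=f(x)$. Apply Lemma~\ref{lem:upper_bound_on_povm} with $S=\Pi^x_B$, $T=\sum_{x'\neq x:\,f(x')=f(x)}\Pi^{x'}_B$ and $d=1$. This gives
\begin{align}
\mathds{1}-\Lambda^{x;f(x),f}_B\leq 2(\mathds{1}-\Pi^x_B)+4\sum_{x'\neq x}\mathbf{1}\{f(x')=f(x)\}\Pi^{x'}_B .
\end{align}
We then take the trace against $\varphi^x_B$, weight by $P(x)$, and average over $f$. The first term gives exactly $2\Tr[(\mathds{1}-P_{XB})\psi_{XB}]$. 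In the second term, 2-universality (Definition~\ref{def:hash_function}) gives $\frac{1}{|\mathcal{F}|}\sum_f\mathbf{1}\{f(x')=f(x)\}\leq 2^{-m}$ for $x'\neq x$, so that term is at most
\begin{align}
4\cdot 2^{-m}\sum_x P(x)\sum_{x'\neq x}\Tr[\varphi^x_B\Pi^{x'}_B]\leq 4\cdot2^{-m}\sum_{x'}\Tr[\Pi^{x'}_B\varphi_B].
\end{align}
The inequality drops the restriction $x'\neq x$, which is allowed since every term is nonnegative, and then uses $\sum_x P(x)\varphi^x_B=\varphi_B$. The right-hand side equals $4\cdot 2^{-m}\Tr[P_{XB}(\mathds{1}_X\otimes\varphi_B)]$, which is the claimed bound.

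The main obstacle is applying the operator inequality when $S+T$ is not invertible. This is exactly why we restrict to the support and add the kernel projector. Lemma~\ref{lem:upper_bound_on_povm} is stated with the generalized inverse and so remains valid, but we will check that completing the POVM does not break the inequality for the outcome $x$ under analysis. The remaining steps are routine linearity and exchange of sums.
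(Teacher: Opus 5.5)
Your proof is correct and is essentially the argument of the cited source \cite{renes-2012-OneShotCompression}: the square-root measurement built from the $\Pi^x_B$, the Hayashi--Nagaoka inequality of Lemma~\ref{lem:upper_bound_on_povm} with $d=1$ (which gives the constants $2$ and $4$), and 2-universality to control the collision term. The paper carries out the same computation, with general $d$ and $Q^x_{B^i}$ in place of $\Pi^x_B$, in the proof of Lemma~\ref{lem:compound_source_coding_hypothesis_testing_entropy}; your handling of the kernel of $\sum_{x':f(x')=c}\Pi^{x'}_B$ is sound, since enlarging a POVM element only decreases $\mathds{1}-\Lambda^{x;c,f}_B$.
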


The next result is a corollary of results proved in the context of hypothesis testing in \cite{audenaert-2007-HypothesisTesting,audenaert-2008-HypothesisTesting}. Let $\{A\}_+$ and $\{A\}_-$ denote the projections onto the support of the positive and nonpositive parts of $A$, respectively.
\begin{lemma}[Lemma 2, \cite{renes-2012-OneShotCompression}] \label{lem:hypothesis_testing_trace_bound}
    For $\rho,\sigma\geq0$ and any $0\leq s\leq1$
    \begin{align}
        \Tr\big[\rho\{\rho-\sigma\}_-+\sigma\{\rho-\sigma\}_+\big]\leq\Tr\big[\rho^s\sigma^{1-s}\big].
    \end{align}
\end{lemma}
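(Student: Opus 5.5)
The plan is to follow the Audenaert-type argument behind \cite[Lemma~2]{renes-2012-OneShotCompression}. First reduce to a single operator inequality. Write $\Delta \triangleq \rho-\sigma$, and let $P_+ \triangleq \{\Delta\}_+$ and $P_- \triangleq \{\Delta\}_- = \mathds{1}-P_+$. The left-hand side then equals
\begin{align}
\Tr[\rho P_-] + \Tr[\sigma P_+] = \Tr[\sigma] - \Tr[\Delta P_+] = \tfrac12\Tr[\rho+\sigma] - \tfrac12\|\rho-\sigma\|_1 .
\end{align}
So the claim is equivalent to Audenaert's inequality $\tfrac12\Tr[\rho+\sigma-|\rho-\sigma|]\le \Tr[\rho^s\sigma^{1-s}]$. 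Equivalently, it suffices to show
\begin{align}
\Tr[\rho^s\sigma^{1-s}] - \Tr[\rho P_-] - \Tr[\sigma P_+] \ge 0 .
\end{align}

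Following Audenaert et al., I would rewrite this as the sum of two nonnegative terms. Since $\rho = \sigma + \Delta$, we have $\rho \le \sigma + \Delta_+$ with $\Delta_+ \triangleq P_+\Delta P_+ \ge 0$. Operator monotonicity of $t\mapsto t^s$ for $0\le s\le 1$ then gives $\rho^s \le (\sigma+\Delta_+)^s$.

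The key intermediate claim I aim for is
\begin{align}
\Tr[\rho P_-] + \Tr[\sigma P_+] \le \Tr[\rho^s\sigma^{1-s}].
\end{align}
I would prove it in the form $\Tr[\rho] - \Tr[\rho^s\sigma^{1-s}] \le \Tr[\Delta P_+]$, i.e.\ $\Tr[\rho^s(\rho^{1-s}-\sigma^{1-s})] \le \Tr[P_+(\rho-\sigma)]$. Using $\sigma \ge \rho - \Delta_+$, the hope is to split $\rho^{1-s}-\sigma^{1-s}$ on the ranges of $P_\pm$ and to use $\rho^s\le(\sigma+\Delta_+)^s$ together with $\Tr[A^sB^{1-s}]\le \Tr[A]$-type estimates. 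The cleanest route I know is Audenaert's: write $\rho = \sigma + \Delta_+ - \Delta_-$ and bound
\begin{align}
\Tr[\rho] - \Tr[\rho^s\sigma^{1-s}] = \Tr[\rho^s(\rho^{1-s}-\sigma^{1-s})] \le \Tr[(\sigma+\Delta_+)^s\big((\sigma+\Delta_+)^{1-s}-\sigma^{1-s}\big)].
\end{align}
This step uses $\rho^{1-s}\le(\sigma+\Delta_+)^{1-s}$ (operator monotonicity again), positivity of $(\sigma+\Delta_+)^{1-s}-\sigma^{1-s}$, and $\rho^s \le (\sigma+\Delta_+)^s$. The care needed is that a product of two positive operators is not positive, so the replacements have to be done inside traces of the form $\Tr[A^{1/2}XA^{1/2}]$. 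The right-hand side equals $\Tr[\sigma+\Delta_+] - \Tr[(\sigma+\Delta_+)^s\sigma^{1-s}]$. It remains to show $\Tr[(\sigma+\Delta_+)^s\sigma^{1-s}] \ge \Tr[\sigma]$, which follows from $(\sigma+\Delta_+)^s \ge \sigma^s$ and $\Tr[\sigma^s\sigma^{1-s}] = \Tr[\sigma]$, using positivity of $\sigma^{1-s}$. Combining gives $\Tr[\rho]-\Tr[\rho^s\sigma^{1-s}]\le \Tr[\Delta_+] = \Tr[P_+\Delta]$, as needed.

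The main obstacle is the first inequality in the chain, where $\rho^s$ and $\rho^{1-s}-\sigma^{1-s}$ must be replaced simultaneously despite noncommutativity. Here I would use Audenaert's trick of working with $\Tr[\rho^s\sigma^{1-s}] \ge \Tr[(\sigma+\Delta_+)^s\sigma^{1-s}] - \Tr[\Delta_+]$ style estimates, expanding carefully term by term. If that fails, I would simply cite the original proof of \cite{audenaert-2007-HypothesisTesting}, since the statement is exactly their Theorem~1 after the reduction in the first paragraph.
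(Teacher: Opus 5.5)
Your proposal is correct and matches the cited source. The paper gives no proof of its own but defers to Renes--Renner and Audenaert et al., and your chain ($\rho\le\sigma+\Delta_+$, L\"owner--Heinz monotonicity of $t^s$ and $t^{1-s}$, and $\Tr[XY]\ge0$ for $X,Y\ge0$ applied one factor at a time) is the standard short proof of $\Tr[\rho^s\sigma^{1-s}]\ge\tfrac12\Tr[\rho+\sigma-|\rho-\sigma|]$; it is already complete as written, so the ``obstacle'' you flag is not one.

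One slip: in your first display the middle expression should be $\Tr[\rho]-\Tr[\Delta P_+]$ (equivalently $\Tr[\sigma]+\Tr[\Delta P_-]$), not $\Tr[\sigma]-\Tr[\Delta P_+]$. The form you actually use later, $\Tr[\rho]-\Tr[\rho^s\sigma^{1-s}]\le\Tr[\Delta P_+]$, is the correct one.
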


The following result will enable us to bound the compression alphabet size by the smooth max entropy.
\begin{lemma}[Lemma 3, \cite{renes-2012-OneShotCompression}]\label{lem:max_entropy_smoothing}
    Let $\epsilon>0$ and let $\rho_{XB}$ be a classical-quantum state. Then, there exists a classical-quantum state $\Bar{\rho}_{XB}\in \mathcal{B}_\epsilon(\rho_{XB})$ such that $H_{\max}^\epsilon(X|B)_\rho=H_{\max}(X|B)_{\Bar{\rho}}$.
\end{lemma}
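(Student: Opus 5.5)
The idea is to take any optimizer of the smoothing problem and make it classical on $X$ by applying the dephasing channel on $X$. This map fixes $\rho_{XB}$ and fixes every operator $\mathds{1}_X\otimes\sigma_B$. Concretely, define the CPTP map
\begin{align}
\mathcal{M}(\omega_{XB}) \triangleq \sum_{x\in\mathcal{X}} (|x\rangle\langle x|_X\otimes\mathds{1}_B)\,\omega_{XB}\,(|x\rangle\langle x|_X\otimes\mathds{1}_B).
\end{align}
It is trace preserving and maps $\mathcal{S}_\leq(\mathcal{H}_{XB})$ into itself. Its output is always block diagonal in the basis $\{|x\rangle\}$, hence classical-quantum. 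Since $\rho_{XB}$ is classical-quantum, $\mathcal{M}(\rho_{XB})=\rho_{XB}$, and for every $\sigma_B$ we have $\mathcal{M}(\mathds{1}_X\otimes\sigma_B)=\mathds{1}_X\otimes\sigma_B$.

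First, I would fix a minimizer $\tilde{\rho}_{XB}\in\mathcal{B}_\epsilon(\rho_{XB})$ in \eqref{eq:smooth_conditional_max_entropy}, so that $H_{\max}^\epsilon(X|B)_\rho=H_{\max}(X|B)_{\tilde\rho}$. Such a minimizer exists because the ball $\mathcal{B}_\epsilon(\rho_{XB})$ is compact in finite dimension. The objective $\max_{\sigma_B}\log F(\tilde\rho_{XB},\mathds{1}_X\otimes\sigma_B)$ is continuous in $\tilde\rho$, being a maximum over a compact set of a jointly continuous function. If one wants to avoid this point, the same argument applied to an $\eta$-optimal state, followed by a limit, suffices. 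Set $\bar\rho_{XB}\triangleq\mathcal{M}(\tilde\rho_{XB})$, which is classical-quantum.

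Second, I would show that $\bar\rho_{XB}\in\mathcal{B}_\epsilon(\rho_{XB})$. The purified distance is monotone under CPTP maps, including for subnormalized states. This follows from the monotonicity of the generalized fidelity, for instance by embedding each subnormalized state as a normalized state on an enlarged space with an extra orthogonal ``failure'' flag and extending $\mathcal{M}$ to act trivially on that flag. Hence
\begin{align}
P(\rho_{XB},\bar\rho_{XB}) = P(\mathcal{M}(\rho_{XB}),\mathcal{M}(\tilde\rho_{XB}))\leq P(\rho_{XB},\tilde\rho_{XB})\leq\epsilon.
\end{align}

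Third, I would compare the max-entropies. For every $\sigma_B$, the data-processing inequality for $F(\cdot,\cdot)=\|\sqrt{\cdot}\sqrt{\cdot}\|_1^2$ on positive operators gives
\begin{align}
F(\bar\rho_{XB},\mathds{1}_X\otimes\sigma_B)=F(\mathcal{M}(\tilde\rho_{XB}),\mathcal{M}(\mathds{1}_X\otimes\sigma_B))\geq F(\tilde\rho_{XB},\mathds{1}_X\otimes\sigma_B).
\end{align}
Maximizing over $\sigma_B$ yields $H_{\max}(X|B)_{\bar\rho}\geq H_{\max}(X|B)_{\tilde\rho}=H_{\max}^\epsilon(X|B)_\rho$. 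Conversely, $\bar\rho_{XB}$ lies in the ball by the second step, so the definition of the smooth max-entropy as a minimum gives $H_{\max}(X|B)_{\bar\rho}\geq H_{\max}^\epsilon(X|B)_\rho$ from that side as well, while minimality of $\tilde\rho$ gives the reverse inequality. Together, $H_{\max}^\epsilon(X|B)_\rho=H_{\max}(X|B)_{\bar\rho}$.

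The main obstacle I expect is justifying monotonicity of the fidelity and purified distance for subnormalized arguments, including the unnormalized operator $\mathds{1}_X\otimes\sigma_B$. For the fidelity I would invoke Uhlmann's theorem: $\|\sqrt{A}\sqrt{B}\|_1$ is the maximal overlap of purifications, and this is non-decreasing under a CPTP map because an isometric (Stinespring) extension preserves overlaps, and tracing out the environment only enlarges the set of purifications over which the maximum is taken. This argument holds for arbitrary positive operators by homogeneity. For the purified distance I would use the flag embedding described above.
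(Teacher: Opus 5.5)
\textbf{There is a genuine gap in your third step.} Both inequalities you derive point the same way. Data processing under $\mathcal{M}$ gives $H_{\max}(X|B)_{\bar\rho}\geq H_{\max}(X|B)_{\tilde\rho}$. Minimality of $\tilde\rho$ over the ball gives $H_{\max}(X|B)_{\tilde\rho}\leq H_{\max}(X|B)_{\bar\rho}$, which is the same inequality again, not its reverse. What you need is $H_{\max}(X|B)_{\bar\rho}\leq H_{\max}(X|B)_{\tilde\rho}$, and that is false in general. The map $\mathcal{M}$ is unital on the \emph{conditioned} system $X$, so, as your own fidelity estimate shows, it can only raise $H_{\max}(X|B)$. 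It raises it strictly when $\tilde\rho$ has $X$--$B$ coherence.

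A concrete failure: take qubits, $\rho_{XB}=\frac{1}{2}(|00\rangle\langle 00|+|11\rangle\langle 11|)$ and $\epsilon=\sqrt{0.6}$. Since $\rho_{XB}\leq\mathds{1}_X\otimes\frac{1}{2}\mathds{1}_B$ and $F(\omega,\cdot)$ is monotone in its second argument, every $\omega\in\mathcal{B}_\epsilon(\rho_{XB})$ satisfies $H_{\max}(X|B)_\omega\geq\log F(\omega,\rho_{XB})\geq\log 0.4$. The entangled state $\tilde\rho=0.8\,|\Phi\rangle\langle\Phi|$, with $|\Phi\rangle=(|00\rangle+|11\rangle)/\sqrt{2}$, has $F(\rho_{XB},\tilde\rho)=0.4$ and $H_{\max}(X|B)_{\tilde\rho}=\log 0.8-1=\log 0.4$. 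So it is a legitimate minimizer. Yet $\mathcal{M}(\tilde\rho)=0.8\,\rho_{XB}$ has $H_{\max}(X|B)=\log 0.8>\log 0.4$. A CQ minimizer does exist here (e.g.\ $0.4\,\rho_{XB}$), but your construction does not produce it. Your argument is the right one for the smooth \emph{min}-entropy, which is a maximization. For the max-entropy, the pinching pushes the objective the wrong way.

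\textbf{How to repair it.} The paper takes this lemma from Renes and Renner without proof, so there is no in-paper argument to compare with. A valid proof has to enforce classicality where it helps, namely on the dual side.
Purify $\rho_{XB}$ as $|\psi\rangle=\sum_x\sqrt{P(x)}\,|x\rangle_X|x\rangle_{X'}|\phi^x\rangle_{BR}$ and use the smooth duality $H_{\max}^\epsilon(X|B)_\rho=-H_{\min}^\epsilon(X|X'R)_\psi$.
Compress any candidate $\omega_{XX'R}$ with $\Pi=\sum_x|x\rangle\langle x|_X\otimes|x\rangle\langle x|_{X'}\otimes\mathds{1}_R$. This keeps it in the ball, because $\Pi$ fixes $\psi_{XX'R}$ and the purified distance is monotone under trace-non-increasing CP maps. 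It also does not decrease $H_{\min}(X|X'R)$, because $\Pi(\mathds{1}_X\otimes\sigma_{X'R})\Pi\leq\mathds{1}_X\otimes\sigma'_{X'R}$, where $\sigma'$ is the pinching of $\sigma$ on $X'$.
Hence there is an optimizer supported on the range of $\Pi$. Every purification of it has the form $\sum_x|x\rangle_X|x\rangle_{X'}|\chi_x\rangle_{RB}$, so its $XB$-marginal is CQ.
Choosing the purification closest to $|\psi\rangle$ via Uhlmann's theorem puts that marginal in $\mathcal{B}_\epsilon(\rho_{XB})$. Non-smooth duality then gives it max-entropy exactly $H_{\max}^\epsilon(X|B)_\rho$.
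If the purification needs an enlarged $B$, compressing back with $\mathds{1}_X\otimes\Pi_B$ keeps the state CQ and in the ball, and does not increase $H_{\max}(X|B)$.

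Your first two steps can stay. It is the choice $\bar\rho=\mathcal{M}(\tilde\rho)$ that has to be replaced.
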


We are now ready to proceed with the proof of Lemma~\ref{lem:one-shot-source-code-compound-quantum-side-information}.
\begin{IEEEproof}[Proof of Lemma~\ref{lem:one-shot-source-code-compound-quantum-side-information}]  
  Fix a family of 2-universal hash functions $\mathcal{F}$ consisting of hash functions $f:\mathcal{X}\rightarrow\{0,1\}^m$ with $|\mathcal{F}|\geq k+1$.

    Consider the quantum system $B^i$ for $i\in[1:k]$. Given $\psi_{XB^i}$, $\mathcal{F}$, and $P_{XB^i}=\sum_{x\in\mathcal{X}}|x\rangle\langle x|\otimes\Pi^x_{B^i}$ for some projectors $\Pi^x_{B^i}$ satisfying $0\leq\Pi^x_{B^i}\leq\mathds{1}^{B^i}$ for all $x\in\mathcal{X}$,  Lemma~\ref{lem:average_error_probability_bound} gives a family of measurements $\Lambda^{x;c,f}_{B^i}$ satisfying
    \begin{align}
        \Bar{p}_e^{B^i}&\triangleq\frac{1}{|\mathcal{F}|}\sum\limits_{f}p_e^{B^i}(f)\\
        & \triangleq\frac{1}{|\mathcal{F}|}\sum\limits_{f,x}P(x)\Tr\Big[\big(\mathds{1}-\Lambda^{x;f(x),f}_{B^i}\big)\varphi^x_{B^i}\Big]\\
        & \leq 2\Tr\big[(\mathds{1}-P_{XB^i})\psi_{XB^i}\big]+4\cdot2^{-m}\Tr\big[P_{XB^i}(\mathds{1}_X\otimes\varphi_{B^i})\big]\\
        & \overset{(a)}{\leq}\sqrt{8\cdot2^{-m}}\Tr\Big[\sqrt{\psi_{XB^i}}\sqrt{\mathds{1}_X\otimes\varphi_{B^i}}\Big]\\
        & \overset{(b)}{\leq} \sqrt{8\cdot2^{-m}}\bigg\|\sqrt{\psi_{XB^i}}\sqrt{\mathds{1}_X\otimes\varphi_{B^i}}\bigg\|_1\\
        & \leq \sqrt{8\cdot 2^{-m}} \max\limits_{\sigma_{B^i}}F\big(\psi_{XB^i}, \mathds{1}_X\otimes\sigma_{B^i}\big)\\
        & =\sqrt{8\cdot 2^{-(m-H_\text{max}(X|B^i)_\psi)}},
    \end{align}
    where $(a)$ follows from Lemma~\ref{lem:hypothesis_testing_trace_bound} with  $P_{XB^i}=\big\{\psi_{XB^i}-2^{-(m-1)}\mathds{1}_X\otimes\varphi_{B^i}\big\}_+$, and $(b)$ follows from $\max_U\big|\Tr[UA]\big|=\|A\|_1$ where $U$ is unitary \cite{wilde-2011-book}. The following bound on $m$ guarantees $\Bar{p}^{B^i}_e\leq \epsilon/(k+1)$:
    \begin{align}
        m & \geq H_\text{max}(X|B^i)_\psi-2\log\epsilon + 2\log{(k+1)} + 3 \label{eq:r_single_condition}.
    \end{align}

    Consider the construction of a protocol for a state $\Bar{\psi}=\sum_x\bar{P}(x)|x\rangle\langle x|\otimes\bar{\varphi}^x_{B^i}\in \mathcal{B}_{\epsilon_1}(\psi)$ for some $\epsilon_1>0$ and assume that it has average error probability bounded by $\epsilon_2/(k+1)>0$. Then, the probability of error for the state ${\psi}$ using this protocol satisfies 
    \begin{align}
        &\frac{1}{|\mathcal{F}|}\sum\limits_{f,x}P(x)\Tr\Big[\big(\mathds{1}-\Lambda^{x;f(x),f}_{B^i}\big){\varphi}^x_{B^i}\Big] \nonumber\\
        & \qquad \overset{(a)}{=} \frac{1}{|\mathcal{F}|}\sum\limits_{f}\Tr\Big[\big(\mathds{1}-\sum\limits_x|x\rangle\langle x|\otimes\Lambda^{x;f(x),f}_{B^i}\big)\psi\Big]\\
        & \qquad \overset{(b)}{\leq} \frac{1}{|\mathcal{F}|}\sum\limits_{f}\Tr\Big[\big(\mathds{1}-\sum\limits_x|x\rangle\langle x|\otimes\Lambda^{x;f(x),f}_{B^i}\big)\bar\psi\Big] + \|\bar\psi-\psi\|_1\\
        & \qquad \overset{(c)}{=} \frac{1}{|\mathcal{F}|}\sum\limits_{f,x}\bar P(x)\Tr\Big[\big(\mathds{1}-\Lambda^{x;f(x),f}_{B^i}\big)\bar{\varphi}^x_{B^i}\Big] + \|\bar\psi-\psi\|_1\\
        & \qquad \leq \epsilon_2/(k+1) + \epsilon_1,
    \end{align}
    where $(a)$ and $(c)$ follows from Lemma~\ref{lem:cq_state_equivalence} in Appendix \ref{app_G} and $(b)$ follows from  Lemma~\ref{lem:measurement_on_close_states} in Appendix \ref{app_I}. By Lemma~\ref{lem:max_entropy_smoothing}, there exists a classical-quantum state $\Bar{\psi} \in \mathcal{B}_{\epsilon_1}( \psi)$ such that $H_{\max}^{\epsilon_1}(X|B^i)_\psi=H_{\max}(X|B^i)_{\Bar{\psi}}$. Then, by setting $\epsilon/(k+1)=\epsilon_1+\epsilon_2/(k+1)$, we have $\Bar{p}_e^{B^i}<\epsilon/(k+1)$ when $m \geq H_\text{max}^{\epsilon_1}(X|B^i)_\psi)-2\log\epsilon_2 +2\log{(k+1)} + 3$. We note that the argument up to this point is the same as in the proof of \cite[Th. 1]{renes-2012-OneShotCompression}, but is presented here for clarity.

    In order to satisfy $\Bar{p}_e^{B^i}<\epsilon/(k+1)$ for all $i\in[1:k]$, we need the conditions on $m$ \eqref{eq:r_single_condition} for each $i\in[1:k]$ to be satisfied, which can be expressed as 
    \begin{align}
        m \geq \max\limits_{i\in[1:k]}H_\text{max}^{\epsilon_1}(X|B^i)_\psi-2\log\epsilon_2 + 2\log{(k+1)} + 3.
    \end{align}

   Hence, we choose
    \begin{align}
        m=\max\limits_{i\in[1:k]} \Big\lceil H_\text{max}^{\epsilon_1}(X|B^i)_\psi-2\log\epsilon_2 + 2\log{(k+1)}\Big\rceil + 3.\label{eq:r_condition}
    \end{align}
    
    For each $i\in[1:k]$, since $\Bar{p}_e^{B^i}<\epsilon/(k+1)$, there exist at most $|\mathcal{F}|/(k+1)$ hash functions $f'\in\mathcal{F}$ with
    \begin{align}
        p_e^{B^i}(f')>\epsilon.
    \end{align}
    Thus, there is a set of hash functions $\mathcal{F}^*$ with $|\mathcal{F}^*|\geq\frac{|\mathcal{F}|}{k+1}$ where each $f^*\in\mathcal{F}^*$ satisfies
    \begin{align}
        p_e^{B^i}(f^*)<\epsilon,
    \end{align}    
    for each $i\in[1:k]$, and thus $p_e\leq\epsilon$. Defining $g$ as an element of $\mathcal{F}^*$ and $h_i$ as the measurement associated with $\big\{\Lambda^{x;c,g}_{B^i}\big\}_{x\in\mathcal{X},c\in\{0,1\}^m}$ for each $i\in[1:k]$ yields the result.
\end{IEEEproof}

\section{Proof of Lemma~\ref{lem:information_reconcilliation_to_channel_coding}}\label{app:proof_of_information_reconcilliation_to_channel_coding}
We first get a bound on the probability of error for the source code when a uniformly distributed input is used.
\begin{align}
    &\max_{\mathcal{A}\in\mathbb{A}} \frac{1}{|\mathcal{U}|}\sum_{u\in\mathcal{U}} \Pr\Big[h_\mathcal{A}\big(\rho_{Y_\mathcal{A}}^{u}, g(u)\big)\neq u\Big] \nonumber \\ &\leq \mathbb{V}(\tilde{P}_{U},P_{U}^U) + \max_{\mathcal{A}\in\mathbb{A}} \sum_{u\in\mathcal{U}}\tilde{P}(u) \Pr\Big[h_\mathcal{A}\big(\rho_{Y_\mathcal{A}}^{u}, g(u)\big)\neq u\Big]\\
    & \leq \epsilon' + \epsilon'', \label{eq:triangle_eq_sc_pe_uniform_inputs}
\end{align}
where the last inequality follows from the assumptions of the theorem. We now construct the encoder corresponding to each $c\in\mathcal{C}$. We first note that
\begin{align}
    &\max_{\mathcal{A}\in\mathbb{A}}\sum_{u\in\mathcal{U}} \frac{1}{|\mathcal{U}|} \Pr\Big[h_\mathcal{A}\big(\rho_{Y_\mathcal{A}}^{u}, g(u)\big)\neq u\Big] \nonumber \\        & = \max_{\mathcal{A}\in\mathbb{A}} \sum\limits_{c\in\mathcal{C}}\frac{1}{|\mathcal{C}|}\sum_{u\in g^{-1}(c)}\frac{|\mathcal{C}|}{|\mathcal{U}|}\Pr\big[h_\mathcal{A}\big(\rho_{Y_\mathcal{A}}^{u}, c\big)\!\neq\! u\big]\\
    &=\max_{\mathcal{A}\in\mathbb{A}}\sum\limits_{c\in\mathcal{C}}\frac{1}{|\mathcal{C}|}\sum_{u\in g^{-1}(c)}P_{U|C}(u|c)\Pr\big[h_\mathcal{A}\big(\rho_{Y_\mathcal{A}}^{u}, c\big)\!\neq\! u\big]\\
    & = \max_{\mathcal{A}\in\mathbb{A}}\sum\limits_{c\in\mathcal{C}}\frac{1}{|\mathcal{C}|}\mathbb{E}_{P_{U|C=c}}\Big[\Pr\big[h_\mathcal{A}\big(\rho_{Y_\mathcal{A}}^{u}, c\big)\!\neq\! u\big]\Big] ,\label{eq:average_error_from_source_coding}
\end{align}
where the second equality follows by defining $P_{U|C}$ as the uniform distribution over $g^{-1}(C)$ and by the fact that $|g^{-1}(C)|=|\mathcal{U}|/|\mathcal{C}|$ by  Lemma~\ref{lem:linear_function_preimage} in Appendix \ref{app_I}. Let $\mathcal{S} \triangleq \big[1:|\mathcal{U}|/|\mathcal{C}|\big]$, then for each $c\in\mathcal{C}$ we choose $\Enc_c$ to be a bijection between $g^{-1}(c)$ and $\mathcal{S}$ and $\Dec_{c,\mathcal{A}}: \rho_{Y_\mathcal{A}}\mapsto \Enc^{-1}_c\big(h_\mathcal{A}(\rho_{Y_\mathcal{A}}, c)\big)$. Combining these encoders and decoders with \eqref{eq:average_error_from_source_coding} gives
\begin{align}
    \max_{\mathcal{A}\in\mathbb{A}}\sum_{u\in\mathcal{U}} \frac{1}{|\mathcal{U}|}\Pr\Big[h_\mathcal{A}\big(\rho_{Y_\mathcal{A}}^{u}, g(u)\big)\neq u\Big] &= \max_{\mathcal{A}\in\mathbb{A}}\sum\limits_{c\in\mathcal{C}}\frac{1}{|\mathcal{C}|}\mathbb{E}\Big[\Pr\big[\Dec_{c,\mathcal{A}}\big(\rho_{Y_\mathcal{A}}^{\Enc_c(S)}\big)\!\neq\! S\big]\Big] \\
    & = \max_{\mathcal{A}\in\mathbb{A}}\sum\limits_{c\in\mathcal{C}}\frac{1}{|\mathcal{C}|}\bar{P}_e^{c,\mathcal{A}}. \label{eq:source_coding_probability_of_error_equal_to_average_error_over_all_encoders}
\end{align}
Combining \eqref{eq:source_coding_probability_of_error_equal_to_average_error_over_all_encoders} and \eqref{eq:triangle_eq_sc_pe_uniform_inputs} gives the desired inequality.

\section{} \label{app_G}
\begin{lemma} \label{lem:cq_state_equivalence}
    Given the classical-quantum state $\psi=\sum_x P(x)|x\rangle\langle x|\otimes\varphi_Y^x$, where $\varphi^x_Y\in \mathcal{S}_=(\mathcal{H})$, and the collection of positive operators $\{\Lambda^x\}_{x\in\mathcal{X}}$ satisfying $\sum_x\Lambda^x=\mathds{1}_X$, we have 
    \begin{align}
        \textnormal{Tr}\Big[\big(\mathds{1}-\sum\limits_{x}|x\rangle\langle x|\otimes\Lambda^x\big)\psi\Big] = \sum\limits_{x}P(x)\textnormal{Tr}\Big[\big(\mathds{1}-\Lambda^x\big){\varphi}^x_{Y}\Big].
    \end{align}
\end{lemma}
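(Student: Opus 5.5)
The statement is Lemma~\ref{lem:cq_state_equivalence}, which is a direct computation. I will expand the operator inside the trace using the block structure of $\psi$ and linearity of the trace.

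First, I write $\mathds{1} = \mathds{1}_X\otimes\mathds{1}_Y = \sum_{x}|x\rangle\langle x|\otimes\mathds{1}_Y$. Then
\[
\mathds{1}-\sum_x|x\rangle\langle x|\otimes\Lambda^x=\sum_x|x\rangle\langle x|\otimes(\mathds{1}_Y-\Lambda^x).
\]
This puts both the measurement operator and $\psi$ in block-diagonal form with respect to the same orthonormal basis $\{|x\rangle\}$.

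Next, I multiply by $\psi=\sum_{x'}P(x')|x'\rangle\langle x'|\otimes\varphi_Y^{x'}$. Orthonormality gives $|x\rangle\langle x|x'\rangle\langle x'|=\mathbf{1}\{x=x'\}|x\rangle\langle x|$, so the cross terms vanish and the product equals
\[
\sum_x P(x)\,|x\rangle\langle x|\otimes(\mathds{1}_Y-\Lambda^x)\varphi_Y^x .
\]

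Finally, I take the trace using $\Tr[A\otimes B]=\Tr[A]\Tr[B]$ and $\Tr|x\rangle\langle x|=1$. This yields $\sum_x P(x)\Tr[(\mathds{1}-\Lambda^x)\varphi^x_Y]$, which is the claim.

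There is no real obstacle. The only point to be careful about is reading the hypothesis $\sum_x\Lambda^x=\mathds{1}$ correctly: the identity there is on $Y$, and the normalization is not actually needed for the identity itself, only positivity for interpreting the result as an error probability. The one step where a slip could occur is the decomposition of $\mathds{1}$ as $\sum_x|x\rangle\langle x|\otimes\mathds{1}_Y$.
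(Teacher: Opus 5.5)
Your proof is correct and is essentially the paper's direct computation. The only difference is the order of steps: you write $\mathds{1}=\sum_x|x\rangle\langle x|\otimes\mathds{1}_Y$ before multiplying by $\psi$, whereas the paper expands the product, evaluates $\Tr[\psi]=1$, and then rewrites $1=\sum_x P(x)\Tr[\varphi^x_Y]$. Your ordering has the small advantage of using neither $\Tr[\psi]=1$ nor $\sum_x\Lambda^x=\mathds{1}$. This matters because the lemma is later applied to a possibly subnormalized smoothed state $\bar\psi\in\mathcal{B}_{\epsilon_1}(\psi)$ and to the family $\{\Lambda^{x;f(x),f}\}_x$, which need not sum to the identity.
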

\begin{IEEEproof}
We have    \begin{align}
        &\textnormal{Tr}\Big[\big(\mathds{1}-\sum\limits_{x}|x\rangle\langle x|\otimes\Lambda^x\big)\psi\Big]\\
        & = \textnormal{Tr}\Big[\big(\mathds{1}-\sum\limits_{x}|x\rangle\langle x|\otimes\Lambda^x\big)\big(\sum_{x'}P(x')|x'\rangle\langle x'|\otimes\varphi^{x'}_Y\big)\Big]\\
        & =\textnormal{Tr}\Big[\sum_{x'}P(x')|x'\rangle\langle x'|\otimes\varphi^{x'}_Y-\sum\limits_{x}P(x)|x\rangle\langle x|\otimes\Lambda^x\varphi^{x}_Y\Big]\\
        & = 1-\sum\limits_x P(x)\textnormal{Tr}\big[\Lambda^x\varphi_Y^x\big]\\
        & = \sum\limits_{x}P(x)\textnormal{Tr}\Big[\big(\mathds{1}-\Lambda^x\big){\varphi}^x_{Y}\Big].
    \end{align}
\end{IEEEproof}

\section{} \label{app_H}
\begin{lemma} \label{lem:quantum_state_dependence_on_classical_states}
    Let $\rho_{ABC}\in S(\mathcal{H}_A\otimes\mathcal{H}_B\otimes\mathcal{H}_C)$,  where $\rho_{ABC}\triangleq\sum_a\sum_bP(a,b)|a\rangle\langle a|\otimes|b\rangle\langle b|\otimes\rho_C^b$ for $\rho_C^b\in S(\mathcal{H}_C)$ and $\rho_{AB}\triangleq \sum_a\sum_bP(a,b)|a\rangle\langle a|\otimes|b\rangle\langle b|$. Then, we have 
    \begin{align}
        \|\rho_{ABC}-\rho_{AB}\otimes\rho_C\|_1 = \|\rho_{BC}-\rho_B\otimes\rho_C\|_1.
    \end{align}
\end{lemma}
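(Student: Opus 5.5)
The plan is to prove the identity by writing both operators explicitly as block-diagonal operators in the classical registers, and then using the fact that the trace norm of a block-diagonal operator is the sum of the trace norms of its blocks. A key preliminary observation is that the system $C$ depends on $(a,b)$ only through $b$. This gives $\rho_C=\sum_{a,b}P(a,b)\rho_C^b=\sum_b P(b)\rho_C^b$, where $P(b)=\sum_a P(a,b)$. It also gives
\begin{align}
\rho_{BC}=\sum_b P(b)|b\rangle\langle b|\otimes\rho_C^b, \qquad \rho_B=\sum_b P(b)|b\rangle\langle b|.
\end{align}

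Step 1 computes the left-hand side. The difference is
\begin{align}
\rho_{ABC}-\rho_{AB}\otimes\rho_C=\sum_{a,b}P(a,b)|a\rangle\langle a|\otimes|b\rangle\langle b|\otimes\big(\rho_C^b-\rho_C\big).
\end{align}
The vectors $|a\rangle|b\rangle$ are orthonormal, so this operator is block diagonal, and its trace norm equals $\sum_{a,b}P(a,b)\|\rho_C^b-\rho_C\|_1$. Summing over $a$ first turns this into $\sum_b P(b)\|\rho_C^b-\rho_C\|_1$.

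Step 2 handles the right-hand side in the same way. The difference $\rho_{BC}-\rho_B\otimes\rho_C=\sum_b P(b)|b\rangle\langle b|\otimes(\rho_C^b-\rho_C)$ is block diagonal in $b$, so its trace norm is $\sum_b P(b)\|\rho_C^b-\rho_C\|_1$. This matches the result of Step 1, which proves the lemma.

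The only point that needs care is the block-diagonal trace-norm identity $\|\sum_i |i\rangle\langle i|\otimes M_i\|_1=\sum_i\|M_i\|_1$ for orthonormal $\{|i\rangle\}$. It holds because $\sqrt{X^\dagger X}$ is itself block diagonal, with blocks $\sqrt{M_i^\dagger M_i}$. I would also state explicitly that $\rho_C$ denotes the same marginal on both sides, so that the two expressions really coincide. There is no substantial obstacle.
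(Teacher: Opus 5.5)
Your proposal is correct and takes essentially the same route as the paper. Both write the difference as a block-diagonal operator in the classical registers and apply the block-diagonal trace-norm identity, which the paper cites from \cite[Proposition~2.7]{khatri-2020-QuantumCommunication} and you justify directly via $\sqrt{X^\dagger X}$. Both then sum out $a$ and recognize $\sum_b P(b)\|\rho_C^b-\rho_C\|_1$ as the trace norm of $\rho_{BC}-\rho_B\otimes\rho_C$.
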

\begin{IEEEproof}
Utilizing the definitions above, we have 
    \begin{align}
        &\|\rho_{ABC}-\rho_{AB}\otimes\rho_C\|_1 \nonumber\\
        & = \left\|\sum_a\sum_bP(a,b)|a\rangle\langle a|\otimes|b\rangle\langle b|\otimes\rho_C^b- \sum_a\sum_bP(a,b)|a\rangle\langle a|\otimes|b\rangle\langle b|\otimes\rho_C\right\|_1\\
        & = \left\|\sum_a\sum_bP(a,b)|a\rangle\langle a|\otimes|b\rangle\langle b|\otimes\big(\rho_C^b- \rho_C\big)\right\|_1 \label{eq:sum_in_trace_distance}\\
        & \overset{(a)}{=} \sum_a\sum_bP(a,b)\left\|\rho_C^b-\rho_C\right\|_1\\
        & = \sum_b P(b) \left\|\rho_C^b-\rho_C\right\|_1\\
        & \overset{(b)}{=} \left\|\sum_bP(b) |b\rangle\langle b|\otimes(\rho_C^b-\rho_C)\right\|_1\\
        & = \|\rho_{BC}-\rho_{B}\otimes\rho_C\|_1,
    \end{align}
    where $(a)$ and $(b)$ follow from \cite[Proposition~2.7]{khatri-2020-QuantumCommunication}.
\end{IEEEproof}

\section{Supporting Results} \label{app_I}
\begin{lemma}[Theorem 18, \cite{tomamichel-2010-Duality}] \label{lem:max_entropy_data_processing}
    Let $\epsilon\geq0$, $\rho_{AB}\in S_\leq(\mathcal{H}_{AB})$ and $\mathcal{P}(\mathcal{H}_B)\rightarrow\mathcal{P}(\mathcal{H}_D)$ be a trace preserving completely positive map with $\tau\triangleq\left(\mathds{1}_A\otimes\mathcal{E}\right)(\rho_{AB})$, then 
\begin{align}
    H_{\max}^{\epsilon}(A|B)_\rho\leq H_{\min}^{\epsilon}(A|D)_\tau.
\end{align}
\end{lemma}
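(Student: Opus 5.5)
The plan is to split the claimed inequality into a data-processing step followed by a comparison step between smooth max- and min-entropy:
\begin{align}
H_{\max}^{\epsilon}(A|B)_\rho \;\le\; H_{\max}^{\epsilon}(A|D)_\tau \;\stackrel{?}{\le}\; H_{\min}^{\epsilon}(A|D)_\tau .
\end{align}
The second step does not hold for general states, and I expect it to be the point where the argument fails as worded.

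\textbf{Step 1 (data processing for the max-entropy).} Fix $\tilde\rho_{AB}\in\mathcal{B}_\epsilon(\rho_{AB})$ that attains $H_{\max}^{\epsilon}(A|B)_\rho$.

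\begin{itemize}
\item Set $\tilde\tau\triangleq(\mathds{1}_A\otimes\mathcal{E})(\tilde\rho)$. The purified distance is monotone under trace-nonincreasing CPTP maps, so $\tilde\tau\in\mathcal{B}_\epsilon(\tau_{AD})$.
\item For every $\sigma_B$, monotonicity of the fidelity under $\mathds{1}_A\otimes\mathcal{E}$ gives
\begin{align}
F(\tilde\rho_{AB},\mathds{1}_A\otimes\sigma_B)\le F\big(\tilde\tau_{AD},\mathds{1}_A\otimes\mathcal{E}(\sigma_B)\big).
\end{align}
\item Here $\mathcal{E}(\sigma_B)$ is a normalized state on $D$. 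Maximizing over $\sigma_B$ in \eqref{eq:conditional_max_entropy} and then minimizing over the smoothing ball in \eqref{eq:smooth_conditional_max_entropy} yields $H_{\max}^{\epsilon}(A|B)_\rho\le H_{\max}^{\epsilon}(A|D)_\tau$.
\end{itemize}

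\textbf{Step 2 (max versus min, the main obstacle).} To reach the stated right-hand side I would next need $H_{\max}^{\epsilon}(A|D)_\tau\le H_{\min}^{\epsilon}(A|D)_\tau$. The known ordering goes the other way, $H_{\min}\le H_{\max}$ (with only $\epsilon$-dependent corrections after smoothing). A concrete check shows the stated inequality fails in general:

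\begin{itemize}
\item Take $\epsilon=0$, a trivial channel and trivial $B,D$.
\item Let $\rho_A=\mathrm{diag}(p,1-p)$ with $p\neq\tfrac12$.
\item Then $H_{\max}(A)=2\log(\sqrt p+\sqrt{1-p})>-\log\max\{p,1-p\}=H_{\min}(A)$.
\end{itemize}

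So the printed statement can hold only in special cases. One such case is flat conditional states, e.g.\ $\tau_{AD}$ with $A$ uniform and independent of $D$, where $H_{\min}=H_{\max}$. For the general case I would not try to force Step 2.

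\textbf{Where the lemma is actually used.} In the converse, step $(a)$ of \eqref{eq:converse_inf_rec} applies the lemma with $\mathcal{E}$ the decoder map $Y_\mathcal{A}\to S'$. That step only needs the max-to-max inequality proved in Step 1. The proof I would actually write is therefore Step 1, with $H_{\min}^\epsilon$ on the right-hand side corrected to $H_{\max}^\epsilon$, citing \cite[Theorem~18]{tomamichel-2010-Duality}. That theorem gives exactly this max-entropy data-processing inequality, and the min-entropy there appears to be a transcription slip.
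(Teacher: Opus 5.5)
Your diagnosis of the statement is right. With $H_{\min}^{\epsilon}$ on the right-hand side the inequality is false; your example works for $0<p<1$, $p\neq\tfrac12$ (at $p\in\{0,1\}$ both sides vanish). The paper gives no proof beyond citing \cite[Theorem~18]{tomamichel-2010-Duality}, which is the max-to-max (and min-to-min) data-processing inequality, and only max-to-max is used in step $(a)$ of \eqref{eq:converse_inf_rec}.

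The gap is in your proof of the corrected inequality, in the last bullet of Step~1. The non-smooth part is fine: $H_{\max}(A|B)_{\tilde\rho}\le H_{\max}(A|D)_{\tilde\tau}$. But $H_{\max}^{\epsilon}$ is a \emph{minimum} over the ball. So you only get $H_{\max}^{\epsilon}(A|B)_\rho\le H_{\max}(A|D)_{\tilde\tau}$ for one particular $\tilde\tau\in\mathcal{B}_\epsilon(\tau_{AD})$, and that value satisfies $H_{\max}(A|D)_{\tilde\tau}\ge H_{\max}^{\epsilon}(A|D)_\tau$, which is the wrong direction. You cannot ``minimize over the smoothing ball'' on the right, because $\tilde\tau$ is pinned to $\tilde\rho$. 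You would instead need, for the minimizer $\hat\tau$ on the output side, some $\hat\rho\in\mathcal{B}_\epsilon(\rho_{AB})$ with $H_{\max}(A|B)_{\hat\rho}\le H_{\max}(A|D)_{\hat\tau}$. Pushing states forward through $\mathcal{E}$ cannot produce that, since $\hat\tau$ need not even lie in the image of $\mathds{1}_A\otimes\mathcal{E}$. Pushing the optimizer forward is legitimate for the smooth min-entropy, which is a maximum over the ball. For the smooth max-entropy your argument only covers $\epsilon=0$.

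The standard way around this, and how Theorem~18 is actually proved, is duality. Let $\rho_{ABC}$ purify $\rho_{AB}$, and let $V:\mathcal{H}_B\to\mathcal{H}_D\otimes\mathcal{H}_E$ be a Stinespring isometry of $\mathcal{E}$. Then $\tau_{ADEC}=V\rho_{ABC}V^\dagger$ (with $V$ acting on $B$) purifies $\tau_{AD}$, and $\tau_{AC}=\rho_{AC}$. Using $H_{\max}^{\epsilon}(A|B)_\rho=-H_{\min}^{\epsilon}(A|C)_\rho$ for pure $\rho_{ABC}$ \cite{tomamichel-2010-Duality}, you get
\begin{align*}
H_{\max}^{\epsilon}(A|D)_\tau=-H_{\min}^{\epsilon}(A|EC)_\tau\ge -H_{\min}^{\epsilon}(A|C)_\tau=-H_{\min}^{\epsilon}(A|C)_\rho=H_{\max}^{\epsilon}(A|B)_\rho ,
\end{align*}
where the inequality is data processing of the smooth min-entropy under $\Tr_E$. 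That step is exactly where your forward-pushing argument does work. Take the maximizer $\tilde\tau_{AEC}\in\mathcal{B}_\epsilon(\tau_{AEC})$ with $\tilde\tau_{AEC}\le 2^{-\lambda}\mathds{1}_A\otimes\sigma_{EC}$ and trace out $E$ to get $\tilde\tau_{AC}\le 2^{-\lambda}\mathds{1}_A\otimes\sigma_C$. Monotonicity of the purified distance keeps $\tilde\tau_{AC}\in\mathcal{B}_\epsilon(\tau_{AC})$.
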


\begin{lemma}[Lemma 6, \cite{tomamichel-2010-Duality}] \label{lem:purification_trace_distance_relation}
    Let $\rho,\tau\in S_\leq(\mathcal{H})$. Then 
    \begin{align}
        \frac{1}{2}\|\rho-\tau\|_1\leq P(\rho,\tau)\leq \sqrt{\|\rho-\tau\|_1}.
    \end{align}
\end{lemma}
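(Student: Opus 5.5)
The plan is to reduce to normalized states, apply the Fuchs--van de Graaf inequalities there, and then control the extra term that the reduction creates. That extra term is where I expect the real difficulty.

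\textbf{Convention.} For subnormalized $\rho,\tau\in\mathcal{S}_\leq(\mathcal{H})$ the statement has to be read with the \emph{generalized} fidelity of \cite{tomamichel-2010-Duality},
\begin{align}
\bar F(\rho,\tau)=\Big(\|\sqrt{\rho}\sqrt{\tau}\|_1+\sqrt{(1-\Tr\rho)(1-\Tr\tau)}\Big)^2 .
\end{align}
With the plain fidelity the claim fails: $\rho=\tau=\tfrac12$ on $\mathbb{C}$ gives $P>0$ but $\|\rho-\tau\|_1=0$. So I will work with $\bar F$ and set $P=\sqrt{1-\bar F}$.

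\textbf{Reduction to normalized states.} Write $a=\Tr\rho$, $b=\Tr\tau$, and embed into $\mathcal{H}\oplus\mathbb{C}$:
\begin{align}
\hat\rho=\rho\oplus(1-a),\qquad \hat\tau=\tau\oplus(1-b).
\end{align}
Then:
\begin{itemize}
\item $\hat\rho,\hat\tau$ are normalized;
\item $\bar F(\rho,\tau)=F(\hat\rho,\hat\tau)$;
\item $\|\hat\rho-\hat\tau\|_1=\|\rho-\tau\|_1+|a-b|$.
\end{itemize}

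\textbf{Lower bound.} Apply Fuchs--van de Graaf to $\hat\rho,\hat\tau$: $\tfrac12\|\hat\rho-\hat\tau\|_1\le\sqrt{1-F(\hat\rho,\hat\tau)}$. Since $\|\rho-\tau\|_1\le\|\hat\rho-\hat\tau\|_1$, this gives $\tfrac12\|\rho-\tau\|_1\le P(\rho,\tau)$ immediately.

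\textbf{Upper bound.} Write $P^2=(1-\sqrt{\bar F})(1+\sqrt{\bar F})$ and bound $1-\sqrt{\bar F}$ through square roots. From $\|\sqrt\rho\sqrt\tau\|_1\ge\Tr\sqrt\rho\sqrt\tau$,
\begin{align}
2\big(1-\sqrt{\bar F}\big)\le \|\sqrt\rho-\sqrt\tau\|_2^2+\big(\sqrt{1-a}-\sqrt{1-b}\big)^2 .
\end{align}
Powers--St{\o}rmer gives $\|\sqrt\rho-\sqrt\tau\|_2^2\le\|\rho-\tau\|_1$. For normalized states the second term vanishes, and $P^2\le 2(1-\sqrt F)\le\|\rho-\tau\|_1$ finishes the proof.

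\textbf{Main obstacle.} For subnormalized states the boundary term $(\sqrt{1-a}-\sqrt{1-b})^2\le|a-b|$ remains. Bounding it crudely with $|a-b|\le\|\rho-\tau\|_1$ only gives $P\le\sqrt{2\|\rho-\tau\|_1}$. To remove the factor $2$ I would try two things:
\begin{itemize}
\item Use the slack in $1+\sqrt{\bar F}$: when $|a-b|$ is large, $\bar F$ is bounded away from $1$, and $\sqrt{\bar F}\le \sqrt{ab}+\sqrt{(1-a)(1-b)}$ ties the two factors together. The hope is the sharper estimate $1-\bar F\le \big(1-\Tr\sqrt\rho\sqrt\tau\big)^{\!\cdot}$-type bound $\le\tfrac12(a+b)-\Tr\sqrt\rho\sqrt\tau+\tfrac12|a-b|\le\|\rho-\tau\|_1$, using that $\|\rho-\tau\|_1\ge|a-b|$ absorbs exactly one copy of the boundary term.
\item Check the two extreme cases first. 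Commuting scalars ($\rho=1$, $\tau=1-t$) give equality $P^2=t=\|\rho-\tau\|_1$. For the case $\rho\perp\tau$, the bound should be verified directly.
\end{itemize}

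If this sharper estimate cannot be closed, the fallback is to invoke \cite[Lemma~6]{tomamichel-2010-Duality} exactly as the paper does, i.e.\ as a cited fact. In that case I would note that its native form is phrased with the generalized trace distance.
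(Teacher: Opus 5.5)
The step you flag as the main obstacle cannot be closed. For subnormalized states with $\Tr\rho\neq\Tr\tau$, the bound $P(\rho,\tau)\le\sqrt{\|\rho-\tau\|_1}$ is false, under the generalized fidelity and under the paper's plain one. Take $\mathcal{H}=\mathbb{C}^2$, $\rho=\mathrm{diag}(1-x,x)$ and $\tau=\mathrm{diag}(1-x,0)$ with $0<x<1$. Since $\Tr\rho=1$, the boundary term of $\bar F$ vanishes, and both fidelities equal $\|\sqrt\rho\sqrt\tau\|_1^2=(1-x)^2$. So $P(\rho,\tau)^2=x(2-x)$, whereas $\|\rho-\tau\|_1=x$. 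The ratio $P^2/\|\rho-\tau\|_1=2-x$ exceeds $1$ and tends to $2$ as $x\to 0$. Hence the bound $P\le\sqrt{2\|\rho-\tau\|_1}$ from your crude step is already optimal among bounds in $\|\rho-\tau\|_1$ alone.

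In your embedding this is the pair $(1-x,x,0)$ versus $(1-x,0,x)$, where the extra coordinate carries half the variation. Here Powers--St{\o}rmer, $\|\sqrt\rho\sqrt\tau\|_1\ge\Tr\sqrt\rho\sqrt\tau$, and $(\sqrt{1-a}-\sqrt{1-b})^2\le|a-b|$ all hold with equality. The remaining step $1+\sqrt{\bar F}\le 2$ is tight as $x\to0$. So neither of your strategies has slack to exploit, and your hoped-for estimate $1-\bar F\le\tfrac12(a+b)-\Tr\sqrt\rho\sqrt\tau+\tfrac12|a-b|$ would read $x(2-x)\le x$. Your scalar test could not detect this, because in dimension one the bound does hold: with $a=\sin^2\alpha$, $b=\sin^2\beta$ it reduces to $|\sin(\alpha-\beta)|\le\sin(\alpha+\beta)$. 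A counterexample needs dimension at least two.

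The fallback of citing the source does not rescue the statement either. Lemma 6 of \cite{tomamichel-2010-Duality} uses the generalized trace distance $\bar D(\rho,\tau)=\tfrac12\|\rho-\tau\|_1+\tfrac12|\Tr\rho-\Tr\tau|$ and reads $\bar D\le P\le\sqrt{2\bar D}$, i.e.\ $P\le\sqrt{\|\rho-\tau\|_1+|\Tr\rho-\Tr\tau|}$. The paper's restatement drops the trace-difference term, which is legitimate only when $\Tr\rho=\Tr\tau$.

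Stopped just before the crude step, your argument proves exactly the correct statement. The direct-sum embedding plus Fuchs--van de Graaf gives $\bar D\le P$, hence the stated lower bound for all subnormalized states. Powers--St{\o}rmer gives $P\le\sqrt{\|\rho-\tau\|_1+|\Tr\rho-\Tr\tau|}$. With $\bar F$, this collapses to $\sqrt{\|\rho-\tau\|_1}$ whenever $\Tr\rho=\Tr\tau$, in particular for normalized states; your own example shows the plain fidelity fails even then. The lemma should therefore be restricted to normalized (or equal-trace) states, or its right-hand side corrected. The restricted version is all the paper needs: in the converse it is applied only to normalized pairs, namely $\sigma^{\mathcal{A}}_{SS'}$ versus $\bar\Phi_{SS'}$, and $\psi_{SY_\mathcal{B}}$ versus $\psi_S\otimes\sigma_{Y_\mathcal{B}}$.
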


\begin{lemma}[Lemma 4, \cite{renes-2011-ChannelCoding}]\label{lem:linear_function_preimage}
    Let $f:\mathcal{X}\rightarrow\mathcal{Y}$ be a linear function.  Then $|f^{-1}(y)|=|\mathcal{X}|/|\mathcal{Y}|$ for all $y\in\mathcal{Y}$.
\end{lemma}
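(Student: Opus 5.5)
Since $f$ is linear, it is a group homomorphism between the underlying additive groups (finite vector spaces over $\mathbb{F}_2$, as in the hash-function setting of Lemma~\ref{lem:information_reconcilliation_to_channel_coding}). Every nonempty fiber of $f$ is therefore a coset of the kernel, and so has size exactly $|\ker f|$. To get the stated count, I need to show that every fiber is nonempty, i.e.\ that $f$ is surjective.

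I would proceed in three steps. First, for any $y$ in the image, pick a preimage $x_0$. Linearity gives $f(x)=y \iff f(x-x_0)=0$, so $f^{-1}(y)=x_0+\ker f$ and $|f^{-1}(y)|=|\ker f|$. Second, I would apply the first isomorphism theorem (equivalently rank–nullity over $\mathbb{F}_2$): $|\mathcal{X}|=|\ker f|\cdot|\mathrm{Im} f|$. Third, I would conclude that $|f^{-1}(y)|=|\mathcal{X}|/|\mathrm{Im} f|$ for $y$ in the image.

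The main obstacle is that the lemma as literally stated fails when $f$ is not surjective: a fiber outside the image is empty, and the remaining fibers have size $|\mathcal{X}|/|\mathrm{Im} f|>|\mathcal{X}|/|\mathcal{Y}|$. The proof therefore has to assume that $f$ is surjective, i.e.\ $\mathrm{Im} f=\mathcal{Y}$, in which case the third step gives the claim directly. This is how the lemma is used in Appendix~\ref{app:proof_of_information_reconcilliation_to_channel_coding}, where the codomain is taken to be the image $\mathcal{C}$ of the encoder $g$. If surjectivity is not available, I would redefine $\mathcal{C}$ as $g(\mathcal{U})$; this only shrinks $|\mathcal{C}|$ and so does not hurt the rate.
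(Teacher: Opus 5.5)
The paper gives no proof of this lemma; it imports it from Renes--Renner, so there is no in-paper argument to compare against. Your proof is correct, with the one caveat you raise yourself: the statement as printed is missing a hypothesis.

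Your coset argument is the standard one. For $y\in\mathrm{Im} f$ the fiber is $x_0+\ker f$. The nonempty fibers partition $\mathcal{X}$, so $|\mathcal{X}|=|\ker f|\,|\mathrm{Im} f|$ follows directly, without the isomorphism theorem. Nothing is specific to $\mathbb{F}_2$; any homomorphism of finite groups works.

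You are right that surjectivity is needed. The zero map $\mathbb{F}_2^n\to\mathbb{F}_2^m$ with $m\ge 1$ has $|f^{-1}(0)|=2^n\neq 2^{n-m}$. The omission matters in the proof of Lemma~\ref{lem:information_reconcilliation_to_channel_coding}. There, $|\mathcal{C}|/|\mathcal{U}|$ is identified with the uniform conditional $P_{U|C}(\cdot|c)$ on $g^{-1}(c)$, and $g^{-1}(c)$ is put in bijection with $\mathcal{S}$. Both steps fail if $g$ misses part of $\mathcal{C}$.

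One correction to your framing: the paper never explicitly takes $\mathcal{C}$ to be the image of $g$. The encoder from Lemma~\ref{lem:one-shot-source-code-compound-quantum-side-information} is an arbitrary member of a 2-universal family with codomain $\{0,1\}^m$. So your fallback is what is actually required, not a restatement of what the paper does.

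That fallback is sound. Replacing $\mathcal{C}$ by the subspace $g(\mathcal{U})$ makes the restricted map linear and onto. A uniform $U$ then induces a uniform $C$ on $g(\mathcal{U})$, with uniform conditionals on each fiber, so the security and error-averaging steps go through unchanged. Moreover $|\mathcal{S}|=|\mathcal{U}|/|g(\mathcal{U})|\ge|\mathcal{U}|/|\mathcal{C}|$, so the rate can only improve.
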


\begin{lemma}[Corollary 9.1.2, \cite{wilde-2011-book}] \label{lem:monotonicity_of_trace_distance}
    Let $\rho_{AB},\sigma_{AB}\in \mathcal{S}_=(\mathcal{H}_A\otimes \mathcal{H}_B)$. The trace distance is monotone with respect to the discarding of subsystems:
    \begin{align}
        \|\rho_A-\sigma_A\|_1 \leq \|\rho_{AB}-\sigma_{AB}\|_1.
    \end{align}
\end{lemma}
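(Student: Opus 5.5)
The plan is to use the variational (dual) characterization of the trace norm for Hermitian operators. The difference $\Delta_{AB}\triangleq\rho_{AB}-\sigma_{AB}$ is Hermitian. Its partial trace $\Delta_A\triangleq\Tr_B[\Delta_{AB}]=\rho_A-\sigma_A$ is also Hermitian. For any Hermitian $X$ on a finite-dimensional space, I will use
\begin{align}
    \|X\|_1=\max_{-\mathds{1}\leq M\leq \mathds{1}}\Tr[MX].
\end{align}
This follows from the spectral decomposition $X=X_+-X_-$ with $X_\pm\geq 0$ of orthogonal support. The choice $M=\{X\}_+-\{X\}_-$ attains $\Tr[X_+]+\Tr[X_-]=\|X\|_1$. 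For any admissible $M$, we have $\Tr[MX]=\Tr[MX_+]-\Tr[MX_-]\leq \Tr[X_+]+\Tr[X_-]$, since $-\mathds{1}\leq M\leq\mathds{1}$ and $X_\pm\geq 0$.

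The key step is to lift an optimal operator on $A$ to one on $AB$. I will pick $M_A$ with $-\mathds{1}_A\leq M_A\leq\mathds{1}_A$ achieving $\|\Delta_A\|_1=\Tr[M_A\Delta_A]$. Then $M_A\otimes\mathds{1}_B$ satisfies $-\mathds{1}_{AB}\leq M_A\otimes\mathds{1}_B\leq\mathds{1}_{AB}$, because tensoring with the identity preserves operator inequalities. By the defining property of the partial trace, $\Tr[(M_A\otimes\mathds{1}_B)\Delta_{AB}]=\Tr[M_A\Tr_B[\Delta_{AB}]]=\Tr[M_A\Delta_A]$. Chaining these gives
\begin{align}
    \|\rho_A-\sigma_A\|_1=\Tr[(M_A\otimes\mathds{1}_B)\Delta_{AB}]\leq\max_{-\mathds{1}\leq M\leq\mathds{1}}\Tr[M\Delta_{AB}]=\|\rho_{AB}-\sigma_{AB}\|_1 .
\end{align}

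The only step needing any care is the variational characterization itself, in particular the upper bound $\Tr[MX_\pm]\leq\Tr[X_\pm]$. This holds because $\mathds{1}-M\geq 0$ and $X_\pm\geq0$ give $\Tr[(\mathds{1}-M)X_\pm]\geq 0$, and similarly for $\mathds{1}+M$. The argument never uses normalization, so it applies verbatim to subnormalized states. It also extends to any trace-preserving completely positive map through its Stinespring dilation, since an isometry preserves the trace norm.
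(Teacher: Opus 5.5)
Your proof is correct. It is essentially the standard argument behind the cited Corollary~9.1.2 of Wilde (the paper gives no proof beyond that citation): take an optimizer of the variational characterization of the trace norm on $A$, lift it to $M_A\otimes\mathds{1}_B$, and use $\Tr[(M_A\otimes\mathds{1}_B)\Delta_{AB}]=\Tr[M_A\Delta_A]$. The one cosmetic difference is that the textbook proof uses $\frac{1}{2}\|\rho-\sigma\|_1=\max_{0\leq\Lambda\leq\mathds{1}}\Tr[\Lambda(\rho-\sigma)]$, which relies on $\Tr\rho=\Tr\sigma$, whereas your $-\mathds{1}\leq M\leq\mathds{1}$ form does not, so your remark about subnormalized states is valid.
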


\begin{lemma}[Corollary 9.1.1, \cite{wilde-2011-book}] \label{lem:measurement_on_close_states}
    Given two quantum states $\rho,\sigma\in \mathcal{S}_=(\mathcal{H})$ and an operator $\Pi\in\mathcal{L}(\mathcal{H})$ satisfying $0\leq\Pi\leq \mathds{1}$. Then, we have 
    \begin{align}
        \textnormal{Tr}\big[\Pi\sigma\big]\leq \textnormal{Tr}\big[\Pi\rho\big] + \| \rho-\sigma\|_1.
    \end{align}
\end{lemma}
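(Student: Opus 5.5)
The statement to prove is Lemma~\ref{lem:measurement_on_close_states}: for states $\rho,\sigma$ and $0\leq\Pi\leq\mathds{1}$, we have $\Tr[\Pi\sigma]\leq\Tr[\Pi\rho]+\|\rho-\sigma\|_1$. My plan is to reduce it to a bound on a single trace. Linearity of the trace gives $\Tr[\Pi\sigma]-\Tr[\Pi\rho]=\Tr[\Pi(\sigma-\rho)]$, so it suffices to show
\begin{align}
\Tr[\Pi(\sigma-\rho)]\leq\|\sigma-\rho\|_1 .
\end{align}

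To prove this, I will Jordan-decompose the Hermitian operator $\sigma-\rho$ as $P-Q$. Here $P,Q\geq 0$ have orthogonal supports, so that $\|\sigma-\rho\|_1=\Tr[P]+\Tr[Q]$.

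Since $\Pi\geq 0$ and $Q\geq 0$, we have $\Tr[\Pi Q]\geq 0$. Since $\Pi\leq\mathds{1}$ and $P\geq 0$, we have $\Tr[\Pi P]\leq\Tr[P]$. Both facts follow from writing $\Tr[AB]=\Tr[B^{1/2}AB^{1/2}]$ for $B\geq 0$. Combining them gives
\begin{align}
\Tr[\Pi(\sigma-\rho)]=\Tr[\Pi P]-\Tr[\Pi Q]\leq\Tr[P]\leq\Tr[P]+\Tr[Q]=\|\sigma-\rho\|_1,
\end{align}
which is the claimed inequality.

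The only step needing care is the positivity fact $\Tr[\Pi P]\leq\Tr[P]$ together with $\Tr[\Pi Q]\geq 0$, and both are routine. Since $\Tr[\sigma-\rho]=0$ forces $\Tr[P]=\Tr[Q]$, the same argument actually yields the sharper bound $\frac{1}{2}\|\rho-\sigma\|_1$, which in particular implies the stated one.
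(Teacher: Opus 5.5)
Your proof is correct and is essentially the argument behind the cited result. The paper gives no proof of its own but cites Wilde's Corollary~9.1.1, which follows from the variational characterization $\frac{1}{2}\|\rho-\sigma\|_1=\max_{0\leq\Lambda\leq\mathds{1}}\Tr[\Lambda(\rho-\sigma)]$, and that characterization is proved by exactly your splitting of the difference into orthogonally supported positive and negative parts. Your main chain $\Tr[\Pi P]-\Tr[\Pi Q]\leq\Tr[P]\leq\Tr[P]+\Tr[Q]$ never uses $\Tr[\sigma]=\Tr[\rho]$, so the unhalved bound also holds for subnormalized operators. This matters because the paper applies the lemma to a smoothed, possibly subnormalized $\bar\psi\in\mathcal{B}_{\epsilon_1}(\psi)$ in the proof of Lemma~\ref{lem:one-shot-source-code-compound-quantum-side-information}.
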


\begin{lemma}[Lemma 16, \cite{cuff-2009-dissertation}] \label{lem:variational_distance_marginal_smaller}
    For two probability distributions distributions $P_{AB}$ and $\bar{P}_{AB}$, we have
    \begin{align}
        \mathbb{V}(P_{AB},\bar{P}_{AB})\geq \mathbb{V}(P_A,\bar P_A).
    \end{align}
\end{lemma}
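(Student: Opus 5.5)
This is the variational-distance data-processing statement: $\mathbb{V}(P_{AB},\bar P_{AB})\ge \mathbb{V}(P_A,\bar P_A)$. I would prove it directly from the definition $\mathbb{V}(q,p)=\sum|p-q|$ by writing the marginals as sums and applying the triangle inequality for the absolute value.

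First, I fix $a$ in the alphabet of $A$ and write the marginal difference as $P_A(a)-\bar P_A(a)=\sum_b\big(P_{AB}(a,b)-\bar P_{AB}(a,b)\big)$. The triangle inequality then gives
\begin{align}
|P_A(a)-\bar P_A(a)| \le \sum_b |P_{AB}(a,b)-\bar P_{AB}(a,b)|.
\end{align}
Next, I sum this bound over $a$. The left-hand side becomes $\mathbb{V}(P_A,\bar P_A)$, and the right-hand side becomes the double sum $\sum_{a,b}|P_{AB}(a,b)-\bar P_{AB}(a,b)|=\mathbb{V}(P_{AB},\bar P_{AB})$. This is the claimed inequality.

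Since all alphabets are finite, exchanging the sums needs no justification. The argument uses nothing beyond nonnegativity and additivity, so it holds for arbitrary finite $P_{AB},\bar P_{AB}$. As an alternative route, I could view the statement as the classical (diagonal) special case of Lemma~\ref{lem:monotonicity_of_trace_distance}: embed each distribution as a diagonal density operator, note that $\mathbb{V}=\|\cdot\|_1$ for diagonal operators and that the marginal corresponds to the partial trace $\Tr_B$, and then invoke that lemma.

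No step is a real obstacle. The only point requiring care is the direction of the triangle inequality: the absolute value of the sum over $b$ is at most the sum of the absolute values, which is exactly the direction needed.
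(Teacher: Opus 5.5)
Your proof is correct and complete. Applying the triangle inequality to bound $|P_A(a)-\bar P_A(a)|$ by $\sum_b |P_{AB}(a,b)-\bar P_{AB}(a,b)|$ and then summing over $a$ is the standard argument. Your alternative route also works: embed the distributions as diagonal states and apply Lemma~\ref{lem:monotonicity_of_trace_distance}, because the paper defines $\mathbb{V}$ without the factor $1/2$, so it coincides with $\|\cdot\|_1$ on diagonal operators. The paper gives no proof of its own here and simply cites Cuff's dissertation, so there is no argument in the paper to compare yours against.
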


\begin{lemma}[Lemma 17, \cite{cuff-2009-dissertation}] \label{lem:vaiational_distance_same_channel}
    For two marginal distributions $P_A$ and $\bar P_A$ and the conditional distribution $P_{B|A}$, we have
    \begin{align}
        \mathbb{V}(P_A,\bar P_A)=\mathbb{V}(P_AP_{B|A},\bar P_AP_{B|A}).
    \end{align}
\end{lemma}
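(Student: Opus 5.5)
The plan is to compute the right-hand side directly from the definition of the variational distance and factor out the common conditional distribution. We interpret $P_AP_{B|A}$ as the joint distribution on $\mathcal{A}\times\mathcal{B}$ with mass $P_A(a)P_{B|A}(b|a)$, and similarly for $\bar P_A P_{B|A}$. Both joint distributions use the same channel $P_{B|A}$.

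First, write
\begin{align}
\mathbb{V}(P_AP_{B|A},\bar P_AP_{B|A}) = \sum_{a}\sum_{b}\big|P_A(a)P_{B|A}(b|a)-\bar P_A(a)P_{B|A}(b|a)\big|.
\end{align}
Second, since $P_{B|A}(b|a)\geq 0$ for every pair $(a,b)$, we have
\begin{align}
\big|P_{B|A}(b|a)\big(P_A(a)-\bar P_A(a)\big)\big| = P_{B|A}(b|a)\,\big|P_A(a)-\bar P_A(a)\big|.
\end{align}
Third, exchange the order of summation (the alphabets are finite) and use $\sum_b P_{B|A}(b|a)=1$ for each $a$. This gives
\begin{align}
\sum_{a}\big|P_A(a)-\bar P_A(a)\big|\sum_b P_{B|A}(b|a)=\sum_a\big|P_A(a)-\bar P_A(a)\big| = \mathbb{V}(P_A,\bar P_A).
\end{align}

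The argument is elementary, and there is no real obstacle. The one point needing care is letters $a$ where $P_A(a)=\bar P_A(a)=0$ and $P_{B|A}(\cdot|a)$ is arbitrary. These terms contribute zero on both sides, so the identity holds regardless of how the conditional distribution is defined there, as long as it is a valid probability distribution for each $a$.
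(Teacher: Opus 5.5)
Your proof is correct: pulling the common nonnegative kernel $P_{B|A}(b|a)$ out of the absolute value and summing it to one over $b$ is the standard argument. The paper gives no proof of its own and simply imports this identity from Cuff's dissertation, where it is established by the same direct computation.
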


\begin{definition} \label{def:source_code_quantum_side_information}
    A source code with quantum side information for the  classical quantum state $\psi_{XB}=\sum_xp(x)|x\rangle\langle x|\otimes\varphi_B^x$ consists of 
    \begin{itemize}
        \item an encoder $g:\mathcal{X}\rightarrow\mathcal{C}$,
        \item a decoder $h:\mathcal{S}_=(\mathcal{H}_{B})\times\mathcal{C}\rightarrow\mathcal{X}$,
    \end{itemize}
    has probability of error defined by
    \begin{align}
        p_e=\Pr\Big[h\big(\varphi^X_{B},g(X)\big)\neq X\Big],
    \end{align}
    and is said to be $\epsilon$-good if $p_e\leq\epsilon$.
\end{definition}

\begin{lemma}[Theorem 1, \cite{renes-2012-OneShotCompression}] \label{lem:classical_data_compression_with_quantum_side_information}
    Given  $\epsilon\geq0$ and state $\psi_{XB}$ as defined in Definition~\ref{def:source_code_quantum_side_information}, there exists an $\epsilon$-good source code with quantum side information if the compression alphabet satisfies
    \begin{align}
        \log|\mathcal{C}| = \big\lceil H_\textnormal{max}^{\epsilon_1}(X|B)_\psi-2\log\epsilon_2\big\rceil + 3,
    \end{align}
    for $\epsilon_1,\epsilon_2\geq 0$ such that $\epsilon_1+\epsilon_2=\epsilon$.
\end{lemma}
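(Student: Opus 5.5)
This lemma is Renes–Renner's one-shot compression result \cite[Th.~1]{renes-2012-OneShotCompression}, quoted in the paper as a tool. A self-contained plan follows the same steps already used in the proof of Lemma~\ref{lem:one-shot-source-code-compound-quantum-side-information}, specialized to $k=1$ so that no union over side-information systems is needed.

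\textbf{Step 1: unsmoothed bound for random hashing.} Fix a 2-universal family $\mathcal{F}$ of hash functions $f:\mathcal{X}\rightarrow\{0,1\}^m$. Take the encoder to be $g=f$ and $\mathcal{C}=\{0,1\}^m$. Apply Lemma~\ref{lem:average_error_probability_bound} with the projector
\begin{align}
P_{XB}=\big\{\psi_{XB}-2^{-(m-1)}\mathds{1}_X\otimes\varphi_B\big\}_+ .
\end{align}
Then use Lemma~\ref{lem:hypothesis_testing_trace_bound} with $s=1/2$. Rescaling the two terms to match the pair $\rho,\sigma$ in that lemma gives
\begin{align}
\bar p_e\le\sqrt{8\cdot2^{-m}}\,\Tr\big[\sqrt{\psi_{XB}}\sqrt{\mathds{1}_X\otimes\varphi_B}\big].
\end{align}
Bound the trace by the trace norm, and the trace norm by $\max_{\sigma_B}\sqrt{F(\psi_{XB},\mathds{1}_X\otimes\sigma_B)}$. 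This yields
\begin{align}
\bar p_e\le\sqrt{8\cdot2^{-(m-H_{\max}(X|B)_\psi)}}.
\end{align}
The right-hand side is at most $\epsilon_2$ whenever $m\ge H_{\max}(X|B)_\psi-2\log\epsilon_2+3$.

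\textbf{Step 2: smoothing.} By Lemma~\ref{lem:max_entropy_smoothing}, there is a classical-quantum state $\bar\psi\in\mathcal{B}_{\epsilon_1}(\psi)$ with $H_{\max}(X|B)_{\bar\psi}=H^{\epsilon_1}_{\max}(X|B)_\psi$. Design the decoding measurements of Step 1 for $\bar\psi$, so their average error on $\bar\psi$ is at most $\epsilon_2$. Rewrite the error probability as $\Tr[(\mathds{1}-\sum_x|x\rangle\langle x|\otimes\Lambda^x)\cdot]$ using Lemma~\ref{lem:cq_state_equivalence}. Transfer it to $\psi$ via Lemma~\ref{lem:measurement_on_close_states}, which costs a distance term between $\psi$ and $\bar\psi$. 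Finally, choose $m=\lceil H^{\epsilon_1}_{\max}(X|B)_\psi-2\log\epsilon_2\rceil+3$.

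\textbf{Step 3: derandomize.} Since the average error over $f$ is at most $\epsilon_1+\epsilon_2$, some fixed $f$ achieves the same bound, and this $f$ is the encoder.

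The main obstacle is the distance-transfer step in Step 2. Lemma~\ref{lem:measurement_on_close_states} charges $\|\psi-\bar\psi\|_1$, while smoothing controls only the purified distance, and Lemma~\ref{lem:purification_trace_distance_relation} gives $\|\psi-\bar\psi\|_1\le2\epsilon_1$. That yields $2\epsilon_1+\epsilon_2$, not $\epsilon_1+\epsilon_2$. I would first try using the tighter transfer $\Tr[\Pi\rho]-\Tr[\Pi\sigma]\le\tfrac12\|\rho-\sigma\|_1$, which holds for normalized states. The subnormalization of $\bar\psi$ adds a further correction that also has to be absorbed. If neither adjustment fits within $\epsilon_1+\epsilon_2$, the fallback is to accept the looser constant, as the paper's own appendix effectively does.
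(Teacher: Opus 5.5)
Your three steps are the Renes--Renner argument. The paper gives no separate proof of this lemma, but it reproduces the same steps (for general $k$; $k=1$ is this statement) in the proof of Lemma~\ref{lem:one-shot-source-code-compound-quantum-side-information}. The gap in your proposal is that the transfer step you flag in Step~2 is left open. Your fallback is not acceptable: settling for $2\epsilon_1+\epsilon_2$ proves a strictly weaker lemma than the one stated. Your diagnosis is correct, and it applies to the paper's rendition too. There, $\|\bar\psi-\psi\|_1\le\epsilon_1$ is asserted without justification (Lemma~\ref{lem:purification_trace_distance_relation} only gives $2\epsilon_1$), and Lemma~\ref{lem:measurement_on_close_states}, stated for normalized states, is applied to a possibly subnormalized $\bar\psi$. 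So the paper does not accept the looser constant; it reaches the tight one through a loose step.

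The step closes exactly with a one-sided bound. Let $\Pi=\mathds{1}-\sum_x|x\rangle\langle x|\otimes\Lambda^{x;f(x),f}_B$, so $0\le\Pi\le\mathds{1}$, and let $\Delta=\psi-\bar\psi$. Since $\Tr\psi=1\ge\Tr\bar\psi$,
\[
\Tr[\Pi\psi]-\Tr[\Pi\bar\psi]\le\Tr\big[\{\Delta\}_+\Delta\big]=\tfrac12\|\psi-\bar\psi\|_1+\tfrac12\big(1-\Tr\bar\psi\big).
\]
The right-hand side is the trace distance between the normalized extensions $\psi\oplus0$ and $\bar\psi\oplus(1-\Tr\bar\psi)$ on $\mathcal{H}_{XB}\oplus\mathbb{C}$. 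Their fidelity equals $F(\psi,\bar\psi)$, because $\psi\oplus0$ has no weight on the added dimension. Fuchs--van de Graaf therefore bounds the right-hand side by $P(\psi,\bar\psi)\le\epsilon_1$. So the subnormalization costs nothing beyond what the purified distance already controls.

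On the other side, Step~1 applies verbatim to the subnormalized cq state $\bar\psi$:
\begin{itemize}
\item the bound in Lemma~\ref{lem:average_error_probability_bound} is linear in the state;
\item Lemma~\ref{lem:hypothesis_testing_trace_bound} holds for arbitrary $\rho,\sigma\ge0$;
\item $\|\sqrt{\bar\psi}\sqrt{\mathds{1}_X\otimes\bar\varphi_B}\|_1\le\sqrt{\Tr\bar\varphi_B}\,2^{H_{\max}(X|B)_{\bar\psi}/2}\le2^{H_{\max}(X|B)_{\bar\psi}/2}$.
\end{itemize}
Hence, averaged over $f$, the unnormalized error $\Tr[\Pi\bar\psi]$ is at most $\epsilon_2$ for your choice of $m$. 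The $f$-averaged error on $\psi$ is then at most $\epsilon_1+\epsilon_2$, and Step~3 finishes the proof with the constant as stated.
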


\bibliographystyle{IEEEtran}
\bibliography{../references.bib}

\end{document}